\DeclareMathAlphabet{\mathscrbf}{OMS}{mdugm}{b}{n}
\definecolor{bckg}{RGB}{20.8, 20.8, 20.8}
\definecolor{oneblue}{rgb}{0.0, 0.0, 0.85}
\definecolor{Lightblue}{RGB}{214, 214, 214}
\definecolor{bluepigment}{rgb}{0.2, 0.2, 0.6}
\definecolor{charcoal}{rgb}{0.21, 0.27, 0.31}
\definecolor{denimblue}{rgb}{0.08, 0.38, 0.74}
\definecolor{Lightgray}{rgb}{0.89, 0.89, 0.89}
\definecolor{darkgrey}{rgb}{0.273, 0.281, 0.30}
\definecolor{darkelectricblue}{rgb}{0.33, 0.41, 0.47}
\titleformat{\section}[block]
  {\color{NavyBlue}\Large\sffamily\bfseries}
  {}
  {0.0em}
  {\colorbox{bckg!5}{\strut\parbox{\dimexpr\linewidth-2\fboxsep\relax}{\thesection. #1}}}
  [\vspace*{0.33em}]
\titleformat{\paragraph}[runin]
  {\color{bluepigment}\sffamily\small\bfseries}
  {}
  {0em}
  {#1}
\titlespacing{\section}{0.0em}{1.5em plus 2pt minus 2pt}%
{1.0em plus 2pt minus 2pt}[0em]
\titlespacing{\subsection}{0.5em}{1.5em plus 2pt minus 2pt}%
{1.0em}[0em]
\titlespacing{\subsubsection}{0.5em}{1.5em plus 2pt minus 2pt}%
{1.0em plus 2pt minus 2pt}[0em]
\newlength{\tocsep} 
\def\@setauthors{%
  \begingroup
  \def\thanks{\protect\thanks@warning}%
  \trivlist
  \centering\footnotesize \@topsep30\p@\relax
  \advance\@topsep by -\baselineskip
  \item\relax
  \author@andify\authors
  \def\\{\protect\linebreak}%
  \textsc{\normalsize\textcolor{darkelectricblue}{\authors}}%
  \ifx\@empty\contribs
  \else
    ,\penalty-3 \space \@setcontribs
    \@closetoccontribs
  \fi
  \endtrivlist
  \endgroup
}
\def\@settitle{\begin{center}%
  \baselineskip14\p@\relax
    \bfseries
    \textsc{\Large\textcolor{charcoal}{\@title}}
  \end{center}%
}
\setlist[description]{%
  topsep=30pt,               % space before start / after end of list
  itemsep=5pt,               % space between items
  font={\bfseries\sffamily\color{NavyBlue}}, % if colour is needed
}
\newcommand*\Title{\textcolor{bluepigment}{Wave/moving wall interaction}}
\newcommand*\Authors{\textcolor{bluepigment}{G.~Khakimzyanov \& D.~Dutykh}}
\newcommand*{\plogo}{\textcolor{gray}{{\texttt{arXiv.org} / \textsc{hal}}}} % Generic publisher logo
\numberwithin{equation}{section}
\newtheorem{lemma}{Lemma}
\newtheorem{remark}{Remark}
\newtheorem{theorem}{Theorem}
\newcommand{\up}[1]{$^{\mathrm{\small\textsf{#1}}}$} % \up command from French babel
\newcommand*{\mint}[1]{%
  % #1: overlay symbol
  \mint@l{#1}{}%
}
\newcommand*{\mint@l}[2]{%
  % #1: overlay symbol
  % #2: limits
  \@ifnextchar\limits{%
    \mint@l{#1}%
  }{%
    \@ifnextchar\nolimits{%
      \mint@l{#1}%
    }{%
      \@ifnextchar\displaylimits{%
        \mint@l{#1}%
      }{%
        \mint@s{#2}{#1}%
      }%
    }%
  }%
}
\newcommand*{\mint@s}[2]{%
  % #1: limits
  % #2: overlay symbol
  \@ifnextchar_{%
    \mint@sub{#1}{#2}%
  }{%
    \@ifnextchar^{%
      \mint@sup{#1}{#2}%
    }{%
      \mint@{#1}{#2}{}{}%
    }%
  }%
}
\def\mint@sub#1#2_#3{%
  \@ifnextchar^{%
    \mint@sub@sup{#1}{#2}{#3}%
  }{%
    \mint@{#1}{#2}{#3}{}%
  }%
}
\def\mint@sup#1#2^#3{%
  \@ifnextchar_{%
    \mint@sup@sub{#1}{#2}{#3}%
  }{%
    \mint@{#1}{#2}{}{#3}%
  }%
}
\def\mint@sub@sup#1#2#3^#4{%
  \mint@{#1}{#2}{#3}{#4}%
}
\def\mint@sup@sub#1#2#3_#4{%
  \mint@{#1}{#2}{#4}{#3}%
}
\newcommand*{\mint@}[4]{%
  % #1: \limits, \nolimits, \displaylimits
  % #2: overlay symbol: -, =, ...
  % #3: subscript
  % #4: superscript
  \mathop{}%
  \mkern-\thinmuskip
  \mathchoice{%
    \mint@@{#1}{#2}{#3}{#4}%
        \displaystyle\textstyle\scriptstyle
  }{%
    \mint@@{#1}{#2}{#3}{#4}%
        \textstyle\scriptstyle\scriptstyle
  }{%
    \mint@@{#1}{#2}{#3}{#4}%
        \scriptstyle\scriptscriptstyle\scriptscriptstyle
  }{%
    \mint@@{#1}{#2}{#3}{#4}%
        \scriptscriptstyle\scriptscriptstyle\scriptscriptstyle
  }%
  \mkern-\thinmuskip
  \int#1%
  \ifx\\#3\\\else_{#3}\fi
  \ifx\\#4\\\else^{#4}\fi  
}
\newcommand*{\mint@@}[7]{%
  % #1: limits
  % #2: overlay symbol
  % #3: subscript
  % #4: superscript
  % #5: math style
  % #6: math style for overlay symbol
  % #7: math style for subscript/superscript
  \begingroup
    \sbox0{$#5\int\m@th$}%
    \sbox2{$#5\int_{}\m@th$}%
    \dimen2=\wd0 %
    % => \dimen2 = width of \int
    \let\mint@limits=#1\relax
    \ifx\mint@limits\relax
      \sbox4{$#5\int_{\kern1sp}^{\kern1sp}\m@th$}%
      \ifdim\wd4>\wd2 %
        \let\mint@limits=\nolimits
      \else
        \let\mint@limits=\limits
      \fi
    \fi
    \ifx\mint@limits\displaylimits
      \ifx#5\displaystyle
        \let\mint@limits=\limits
      \fi
    \fi
    \ifx\mint@limits\limits
      \sbox0{$#7#3\m@th$}%
      \sbox2{$#7#4\m@th$}%
      \ifdim\wd0>\dimen2 %
        \dimen2=\wd0 %
      \fi
      \ifdim\wd2>\dimen2 %
        \dimen2=\wd2 %
      \fi
    \fi
    \rlap{%
      $#5%
        \vcenter{%
          \hbox to\dimen2{%
            \hss
            $#6{#2}\m@th$%
            \hss
          }%
        }%
      $%
    }%
  \endgroup
}
\newcommand{\rhou}{\uprho}
\newcommand{\g}{\mathbf{g}}
\newcommand{\R}{\mathds{R}}
\renewcommand{\phi}{\varphi}
\newcommand{\A}{\mathscr{A}}
\newcommand{\B}{\mathscr{B}}
\newcommand{\G}{\mathcal{G}}
\newcommand{\K}{\mathcal{K}}
\newcommand{\ud}{\mathrm{d}}
\newcommand{\ui}{\mathrm{i}}
\newcommand{\ue}{\mathrm{e}}
\newcommand{\C}{\mathcal{C}}
\newcommand{\F}{\mathscr{F}}
\newcommand{\J}{\mathcal{J}}
\newcommand{\Q}{\mathcal{Q}}
\newcommand{\Bf}{\mathscr{B}}
\newcommand{\Qf}{\mathscr{Q}}
\newcommand{\Ru}{\mathcal{R}}
\newcommand{\Dd}{\mathscr{D}}
\newcommand{\alphau}{\upalpha}
\newcommand{\Aa}{\mathfrak{A}}
\newcommand{\Ba}{\mathfrak{B}}
\renewcommand{\leq}{\leqslant}
\renewcommand{\geq}{\geqslant}
\renewcommand{\O}{\mathcal{O}}
\renewcommand{\H}{\mathcal{H}}
\newcommand{\n}{\boldsymbol{n}}
\newcommand{\q}{\boldsymbol{q}}
\newcommand{\x}{\boldsymbol{x}}
\newcommand{\vO}{\boldsymbol{0}}
\newcommand{\phit}{\tilde{\phi}}
\newcommand{\nuv}{\mbox{\textnu}}
\renewcommand{\j}{\boldsymbol{j}}
\renewcommand{\u}{\boldsymbol{u}}
\renewcommand{\v}{\boldsymbol{v}}
\newcommand{\phin}{\mbox{\textphi}}
\newcommand{\rhot}{\mbox{\textrho}}
\newcommand{\const}{\mathrm{const}}
\newcommand{\No}{$\mathrm{N}^\circ$}
\renewcommand{\psi}{\mbox{\textpsi}}
\newcommand{\betat}{\mbox{\textbeta}}
\newcommand{\zetat}{\mbox{\textzeta}}
\newcommand{\alphat}{\mbox{\textalpha}}
\newcommand{\gammat}{\mbox{\textgamma}}
\newcommand{\Gammao}{\mathring{\Gamma}}
\renewcommand{\mu}{\mbox{\textmugreek}}
\renewcommand{\theta}{\mbox{\texttheta}}
\renewcommand{\sigma}{\mbox{\textsigma}}
\newcommand{\vsigma}{\mbox{\textvarsigma}}
\newcommand{\etab}{\mbox{\textbf{\texteta}}}
\newcommand{\m}{\mathsf{m}}
\newcommand{\cm}{\mathsf{cm}}
\newcommand{\Hz}{\mathsf{Hz}}
\newcommand{\cf}{\emph{cf.}\xspace}
\newcommand{\ie}{\emph{i.e.}\xspace}
\newcommand{\eg}{\emph{e.g.}\xspace}
\newcommand{\etal}{\emph{et al.}\xspace}
\renewcommand{\sim}{\thicksim}
\renewcommand{\div}{\grad\scal}
\newcommand{\sech}{\mathrm{sech}}
\newcommand{\scal}{\boldsymbol{\cdot}}
\newcommand{\grad}{\boldsymbol{\nabla}}
\newcommand{\rot}{\mathop{\mathrm{rot}}}
\newcommand{\abs}[1]{\lvert\, #1\, \rvert}
\newcommand{\norm}[1]{\lVert\, #1\, \rVert}
\newcommand{\normh}[1]{\lVert\, #1\, \rVert_{\,\H}}
\newcommand{\pd}[2]{\frac{\partial\/ #1}{\partial\/ #2}}
\newcommand{\scalh}[2]{\langle\, #1,\,#2\,\rangle_{\,\H}}
\newcommand{\scalb}[2]{\langle\, #1,\,#2\,\rangle_{\,\Ba}}
\newcommand{\pdd}[2]{\dfrac{\partial\/ #1}{\partial\/ #2}}
\newcommand{\od}[2]{\frac{\mathrm{d}\/ #1}{\mathrm{d}\/#2}}
\newcommand{\odd}[2]{\dfrac{\mathrm{d}\/ #1}{\mathrm{d}\/#2}}
\newcommand{\eqdef}{\mathop{\stackrel{\,\mathrm{def}}{:=}\,}}
\newcommand{\defeq}{\mathop{\stackrel{\,\mathrm{def}}{=:}\,}}
\newcommand{\scalhh}[2]{\langle\, #1,\,#2\,\rangle_{\,\H_{\,1},\,j_{\,2}}}
\newcommand{\half}{{\textstyle{1\over2}}}
\acrodef{ode}[ODE]{Ordinary Differential Equation}
\acrodef{sgn}[SGN]{Serre--Green--Naghdi}
\acrodef{bvp}[BVP]{Boundary Value Problem}
\acrodef{NSWE}{Nonlinear Shallow Water Equations}
\begin{document}

\title[\Title]{Numerical modelling of surface water wave interaction with a moving wall}

\author[G.~Khakimzyanov]{Gayaz Khakimzyanov}
\address{\textbf{G.~Khakimzyanov:} Institute of Computational Technologies, Siberian Branch of the Russian Academy of Sciences, Novosibirsk 630090, Russia}
\email{Khak@ict.nsc.ru}
\urladdr{http://www.ict.nsc.ru/ru/structure/Persons/ict-KhakimzyanovGS}

\author[D.~Dutykh]{Denys Dutykh$^*$}
\address{\textbf{D.~Dutykh:} LAMA, UMR 5127 CNRS, Universit\'e Savoie Mont Blanc, Campus Scientifique, F-73376 Le Bourget-du-Lac Cedex, France}
\email{Denys.Dutykh@univ-savoie.fr}
\urladdr{http://www.denys-dutykh.com/}
\thanks{$^*$ Corresponding author}

%%% ------------------------------------------------------------------------ %%%

\begin{titlepage}
\thispagestyle{empty} % Remove page numbering on this page
\noindent
{\Large Gayaz \textsc{Khakimzyanov}}\\
{\it\textcolor{gray}{Institute of Computational Technologies, Novosibirsk, Russia}}
\\[0.02\textheight]
{\Large Denys \textsc{Dutykh}}\\
{\it\textcolor{gray}{CNRS, Universit\'e Savoie Mont Blanc, France}}
\\[0.16\textheight]

\vspace*{0.99cm}

\colorbox{Lightblue}{
  \parbox[t]{1.0\textwidth}{
    \centering\huge\sc
    \vspace*{0.79cm}
    
    \textcolor{bluepigment}{Numerical modelling of surface water wave interaction with a moving wall}
    
    \vspace*{0.79cm}
  }
}

\vfill % Whitespace between the title block and the publisher

\raggedleft     % Right-align all text
{\large \plogo} % Publisher and logo
\end{titlepage}

%%% ------------------------------------------------------------------------ %%%

\newpage
\thispagestyle{empty} % Remove page numbering on this page
\par\vspace*{\fill}   % Whitespace until the bottom
\begin{flushright} % Right-align all text
{\textcolor{denimblue}{\textsc{Last modified:}} \today}
\end{flushright}

%%% ------------------------------------------------------------------------ %%%

\newpage
\maketitle
\thispagestyle{empty}

%%% ------------------------------------------------------------------------ %%%

\begin{abstract}

In the present manuscript we consider the practical problem of wave interaction with a vertical wall. However, the novelty here consists in the fact that the wall can move horizontally due to a system of springs. The water wave evolution is described with the free surface potential flow model. Then, a semi-analytical numerical method is presented. It is based on a mapping technique and a finite difference scheme in the transformed domain. The idea is to pose the equations on a fixed domain. This method is thoroughly tested and validated in our study. By choosing specific values of spring parameters, this system can be used to damp (or in other words \emph{to extract the energy} of) incident water waves. \\

\bigskip
\noindent \textbf{\keywordsname:} wave/body interaction; full Euler equations; movable wall; wave run-up; wave damping; wave energy extraction \\

\smallskip
\noindent \textbf{MSC:} \subjclass[2010]{ 76B15 (primary), 76B07, 76M20 (secondary)}
\smallskip \\
\noindent \textbf{PACS:} \subjclass[2010]{ 47.35.Bb (primary), 47.35.Fg (secondary)}

\end{abstract}

%%% ------------------------------------------------------------------------ %%%

\newpage
\thispagestyle{empty}
\tableofcontents
\thispagestyle{empty}

%%% ------------------------------------------------------------------------ %%%

\newpage
\section{Introduction}

The mathematical and numerical high fidelity modeling of water waves is a central topic in coastal and naval engineering. The incident waves come and interact with various coastal features. Nowadays the interaction of water waves with bathymetric features is relatively well understood (at least with some special features \cite{Kurkin2015}). A more challenging situation is the interaction of waves with various coastal structures \cite{Linton2001}. At this level the most studied situation is the wave/wall interaction and the wall is assumed to be fixed and impermeable. Violent wave impacts have to be modelled in general using more CFD-like methods in the \textsc{Eulerian} mesh-based \cite{Dias2008, Chen2014} or \textsc{Lagrangian} particle-based \cite{Didier2014} approaches. In the present study we apply the free surface approximation by neglecting all the processes happening in the air above the free surface \cite{Cooker1997}. In this line of thinking, the interaction of periodic waves with a fixed vertical wall was recently studied in \cite{Carbone2013} in the framework of a fully nonlinear weakly dispersive wave model \cite{Dutykh2011a}. The conditions under which an extreme wave run-up on a vertical wall may happen were describe in \cite{Carbone2013} as well.

In this study, we focus on wave interactions with a movable vertical wall. The wall motion can be prescribed. In this case we model the wave generation process with a piston-type wave maker \cite{Guizien2002}. For instance, this problem was considered in the framework of \textsc{Boussinesq}-type equations in \cite{Orszaghova2012}. Otherwise, the wall can move under the action of incident waves. We can even assume that a system of horizontal tension/extension springs (with tunable rigidities) is attached to the wall. Thus, the wall may present a certain resistance to the action of waves. This problem can be regarded also as a piston's free motion under the forcing of water waves. The piston mechanical energy conversion and recuperation is a different technological problem which is out of scope of the present study. The extraction of ocean wave energy on industrial scales is not yet a very common practice \cite{Falnes2007}. However, very active researches in this field are conducted since at least forty years \cite{Evans1981, Falnes2002}. Consequently, the numerical methods developed below can be used to design and to optimize such Wave Energy Conversion (WEC) devices. Movable walls have been used, for example, in a triplate system proposed back in 1977 by Dr.~Francis~\textsc{Farley}. The rigidity of strings is chosen to minimize the reflected wave amplitude so that most of the wave energy be converted into the mechanical energy of the device. Mathematical and numerical modeling of this type of WEC devices is considered below. The main point is that (ordinary and partial) differential equations posed in time-varying domains are known to pose notorious theoretical and numerical difficulties \cite{Knobloch2015}.

In the present work we consider a two-dimensional non-stationary problem of surface wave motion in a domain with one moving wall. We assume that the wall remains vertical during its motion. Moreover, at least in a vicinity of the moving wall the bathymetry has to be flat to allow its free motion under the action of waves or to follow a prescribed trajectory. The fluid flow is assumed to be potential and we address the complete (\ie fully nonlinear and fully dispersive) water wave problem \cite{Stoker1957}. We propose first a reformulation of this problem on a fixed and domain (a square) by introducing a new curvilinear coordinate system. Then, we propose a robust finite difference discretization of this problem with mathematically proven good qualities. The performance of this algorithm is illustrated on several examples of practical interest. In particular, we study the influence of springs rigidity on the oscillatory motion of the wave/wall system.

The results presented in this study can be viewed under a different angle and, thus, they can be applied to another problem of coastal structures protection from wave loads and impacts. In particular, the catastrophic consequences of 2011 \textsc{Tohoku} earthquake and tsunami are widely known nowadays \cite{Mori2011}. The protecting structures are made very solid by implementing various security coefficients. However, more economic protections can be designed if we allow them to move under the action of waves. The moving parts can be related to fixed ones by a system of flexible strings and by tuning their rigidity one can reduce significantly the wave run-up on moving parts as well as wave loads on fixed solid structural elements.

We would like to mention also related works. The generation of water waves by an accelerating moving vertical plate in a channel of constant depth was studied in \cite{Yang1992}. The Authors measured the free surface elevation and pressure distributions on the wall for three different values of plate's accelerations. In another recent work \cite{Wang2011} a nonlinear model for incompressible free surface flows was proposed. Then, this model was used to study wave run-up on a sloping beach and on a moving vertical wall. The two-dimensional unsteady problem of wave interaction with a moving vertical wall was studied analytically in \cite{Korobkin2009}. Moreover, the comparisons with numerical solutions obtained with the complex boundary element method was presented as well. Another numerical model was presented in \cite{He2009} as well. In last two works the waves were generated by an initial disturbance located close to the moving wall.

The present manuscript is organized as follows. The physical and mathematical problems are formulated in Section~\ref{sec:problem} and then it is reformulated on a fixed domain in Section~\ref{sec:trans}. The proposed finite difference approximation is described in Section~\ref{sec:fd} and studied mathematically in Section~\ref{sec:study}. The construction of the mapping in the fixed domain is discussed in Section~\ref{sec:ell} and the numerical algorithm in general is described in Section~\ref{sec:alg}. Some numerical results are presented in Section~\ref{sec:res}. Finally, the main conclusions and perspectives are outlined in Section~\ref{sec:concl}.

%%% ------------------------------------------------------------------------ %%%

\section{Problem formulation}
\label{sec:problem}

Consider a two-dimensional motion of an incompressible, homogeneous and ideal fluid with free surface. The sketch of the fluid domain is depicted in Figure~\ref{fig:sketch}. The \textsc{Cartesian} coordinate system $O\,x\,y\,$, $\x\ =\ (x,\,y)$ is chosen such that the horizontal axis $O\,x$ is directed along the undisturbed free surface and the axis $O\,y$ points vertically upwards. The fluid domain $\Omega\,(t)$ is assumed to be simply connected and bounded from below by a horizontal bottom $\Gamma_b\ \eqdef\ \{\x\ \in\ \Omega\,(t)\ |\ y\ =\ -h_0\}$ and from above by the free surface --- $\Gamma_s\ \eqdef\ \{\x\ \in\ \Omega\,(t)\ |\ y\ =\ \eta(x,\,t)\}\,$. On the sides the domain $\Omega\,(t)$ is bounded by two vertical walls. The right wall $\Gamma_r\ \eqdef\ \{\x\ \in\ \Omega\,(t)\ |\ x\ =\ \ell\}$ is assumed to be fixed\footnote{Thus, the length of the undisturbed domain is equal to $\ell\ >\ 0\,$.}, while the left one $\Gamma_\ell\ \eqdef\ \{\x\ \in\ \Omega\,(t)\ |\ x\ =\ s\,(t)\}$ is connected to a system of springs and can move horizontally under the wave action. By $s\,(t)$ we denote the displacement of the moving wall\footnote{We assume additionally that the wall is non-deformable and remains vertical during the interaction with waves.} with respect to its undisturbed position $x\ =\ 0\,$. The closed domain $\Omega\,(t)$ together with its boundaries is denoted by $\bar{\Omega}\,(t)\,$.

\begin{remark}
In order to be able to simulate water waves with a shape, which is not necessarily a graph, we can adopt a parametric representation of the free surface $\Gamma_s\,$:
\begin{equation}\label{eq:param}
  x\ =\ \xi\,(q,\,t)\,, \qquad y\ =\ \eta\,(q,\,t)\,,
\end{equation}
where $q\ \in\ \R$ is a real parameter and $t$ is the time evolution variable. Functions $\xi\,(\cdot,\,\cdot)\,:\ \R\times\R^{\,+}\ \mapsto\ \R$ and $\eta\,(\cdot,\,\cdot)\,:\ \R\times\R^{\,+}\ \mapsto\ \R$ are assumed to be sufficiently smooth. Without any restriction we can assume that $q\ \in\ [\,0,\,1\,]\,$. Below in the text (see Section~\ref{sec:trans}) this parametrization will appear more naturally when we map the unknown fluid domain $\Omega\,(t)$ to a fixed reference domain.
\end{remark}

\begin{figure}
  \centering
  \includegraphics[width=0.65\textwidth]{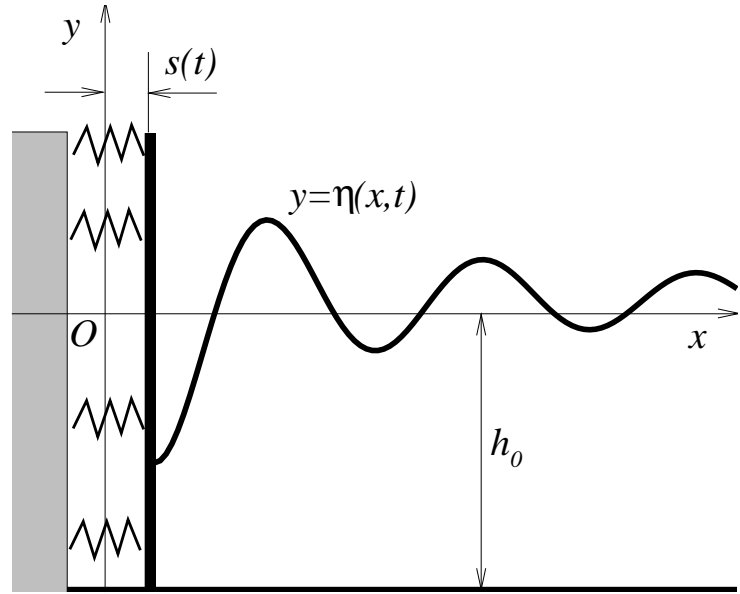}
  \vspace*{-1em}
  \caption{\small\em Sketch of the physical problem of wave interaction with a moving wall.}
  \label{fig:sketch}
\end{figure}

The fluid flow is described by Euler equations \cite{Lavrentev1977}:
\begin{align}
  \div\u\ &=\ 0\,, \label{eq:incompress} \\
  \u_{\,t}\ +\ (\u\scal\grad)\,\u\ +\ \frac{1}{\rho}\;\grad p\ &=\ \g\,, \label{eq:momentum}
\end{align}
where $\u\ =\ (u,\,v)$ is the velocity vector, $\rho\ =\ \const\ >\ 0$ is the fluid density, $p$ is the total pressure and $\g\ =\ (0,\, -\,g)$ is the gravity acceleration. The subscripts such as $\u_{\,t}$ denote partial derivatives, \ie~$\u_{\,t}\ \eqdef\ \pd{\u}{t}\,$. These equations have to be supplemented by appropriate boundary conditions in order to have a well-posed problem. The impermeability condition on fixed solid surfaces (the right vertical wall $\Gamma_r$ and the bottom $\Gamma_b$) reads
\begin{equation*}
  \u\scal\n\ =\ 0\,, \qquad \x\ \in\ \Gamma_b\ \cup\ \Gamma_r\,, \qquad t\ \geq\ 0\,,
\end{equation*}
where $\n$ is the vector of unitary exterior normal to the corresponding boundary of the domain $\Omega\,(t)\,$. The left wall is impermeable as well and the boundary condition reads
\begin{equation*}
  \u\scal\n\ =\ \dot{s}\,(t)\,, \qquad \x\ \in\ \Gamma_\ell\,, \qquad t\ \geq\ 0\,.
\end{equation*}
Taking into account the horizontal character of the wall motion, this condition can be further simplified to give
\begin{equation*}
  u\ =\ \dot{s}\,(t)\,, \qquad \x\ \in\ \Gamma_\ell\,, \qquad t\ \geq\ 0\,.
\end{equation*}

Traditionally on the free surface we prescribe the kinematic
\begin{equation}\label{eq:kinem}
  \eta_{\,t}\ +\ u\cdot\eta_{\,x}\ =\ v\,, \qquad \x\ \in\ \Gamma_s\,, \qquad t\ \geq\ 0
\end{equation}
and dynamic 
\begin{equation}\label{eq:dynam}
  p\ =\ p_a\ =\ \const\,, \qquad \x\ \in\ \Gamma_s\,, \qquad t\ \geq\ 0
\end{equation}
boundary conditions \cite{Stoker1957}. Here $p_a$ is the constant atmospheric pressure. The boundary conditions mentioned above are enough to obtain a closed system of equations which describe the motion of an \emph{ideal} fluid. One has to prescribe also the initial conditions, but their form is highly problem-dependent. Consequently, it will not be discussed at the current level of generality.

\begin{remark}
Let us obtain also kinematic free surface boundary conditions when the free surface $\Gamma_s$ is given in a parametric form \eqref{eq:param}. The free surface $\Gamma_s$ is characterized by the fact that its velocity is determined by the velocity of fluid particles constituting this boundary. In other words, the point $\x\ \in\ \Gamma_s$ moves with the velocity of the fluid particle located in this point, \ie
\begin{equation}\label{eq:cond}
  \od{\x}{t}\ =\ \u\,\bigl(\x(t),\,t\bigr)\,.
\end{equation}
By taking the material derivative of both sides of the parametric form \eqref{eq:param}, we have
\begin{align*}
  \od{x}{t}\ &\equiv\ \pd{\xi}{t}\ +\ \pd{\xi}{q}\cdot\od{q}{t}\,, \\
  \od{y}{t}\ &\equiv\ \pd{\eta}{t}\ +\ \pd{\eta}{q}\cdot\od{q}{t}\,.
\end{align*}
Thus, by taking into account the physical information \eqref{eq:cond} we obtain the following kinematic boundary conditions for $\forall\,\x\ \in\ \Gamma_s$ and for $\forall\, t\ \geq\ 0\,$:
\begin{align}\label{eq:kin1}
  \pd{\xi}{t}\ +\ \pd{\xi}{q}\cdot\od{q}{t}\ -\ u\ &=\ 0\,, \\
  \pd{\eta}{t}\ +\ \pd{\eta}{q}\cdot\od{q}{t}\ -\ v\ &=\ 0\,.\label{eq:kin2}
\end{align}
In the conditions above the quantity $\od{q}{t}$ is the rate of change of the parameter $q$ for fluid particles located at the free surface. If one uses the \textsc{Lagrangian} description of a flow, fluid particles keep their labels and, thus, $\od{q}{t}\ \equiv\ 0\,$. However, we use an arbitrary parametrization and in our case $\od{q}{t}\ \neq\ 0$ in general.
\end{remark}

%%% ------------------------------------------------------------------------ %%%

\subsection{Potential flow model}

The formulation presented above is still too general. We shall simplify it further by assuming the flow to be irrotational, \ie
\begin{equation*}
  \rot\u\ =\ -\pd{u}{y}\ +\ \pd{v}{x}\ =\ 0\,.
\end{equation*}
It implies the existence of a function $\phi\,:\ \R^2\times\R^+\ \mapsto\ \R$ which is called the \emph{velocity potential} such that the velocity field is given by
\begin{equation}\label{eq:pot}
  \u\ =\ \grad\,\phi\,.
\end{equation}
Substituting this form into the incompressibility condition \eqref{eq:incompress} yields that the velocity potential has to be a harmonic function
\begin{equation*}
  \Delta\,\phi\ =\ 0\,, \qquad (\x,\,t)\ \in\ \Omega\,(t)\times\R^+\,.
\end{equation*}
The kinematic boundary condition is obtained straightforwardly by substituting into \eqref{eq:kinem} the velocity components by their representations in terms of the velocity potential:
\begin{equation*}
  \eta_{\,t}\ +\ \phi_{\,x}\cdot\eta_{\,x}\ =\ \phi_{\,y}\,, \qquad y\ =\ \eta\,(x,\,t)\,.
\end{equation*}
We note also that velocity components $u$ and $v$ in the kinematic boundary conditions \eqref{eq:kin1}, \eqref{eq:kin2} are expressed in terms of the velocity potential $\phi$ according to \eqref{eq:pot}.

The momentum equation \eqref{eq:momentum} can be integrated and combined with the dynamic boundary condition \eqref{eq:dynam} to give the so-called \textsc{Euler}--\textsc{Lagrange} integral\footnote{The \textsc{Euler}--\textsc{Lagrange} integral becomes the \textsc{Bernoulli} integral for steady flows. We implicitly chose the gauge where the \textsc{Bernoulli} `constant' is identically zero.} evaluated at the free surface:
\begin{equation*}
  \phi_{\,t}\ +\ \half\,\abs{\grad\phi}^{\,2}\ +\ g\,\eta\ =\ 0\,, \qquad y\ =\ \eta\,(x,\,t)\,.
\end{equation*}
On solid stationary walls the velocity potential satisfies the homogeneous Neumann condition
\begin{equation*}
  \grad\phi\scal\n\ =\ 0\,, \qquad \x\ \in\ \Gamma_b \cup \Gamma_r\,,
\end{equation*}
(with $\n$ defined above) while on the moving wall $\Gamma_\ell$ it satisfies the following non-homogeneous Neumann condition
\begin{equation*}
  \grad\phi\scal\n\ \equiv\ \pd{\phi}{x}\ =\ \dot{s}\,(t)\,, \qquad \x\ \in\ \Gamma_\ell\,.
\end{equation*}

So, instead of seeking for two components of the velocity field $\u(\x,\,t)\ =\ \bigl(u(\x,t),\, v(\x,t)\bigr)$ and the pressure function $p(\x,\,t)\,$, the formulation was simplified to one unknown function, \ie~the velocity potential $\phi(\x,\,t)\,$, thanks to the irrotationality assumption. The potential flow formulation with free surface is known as the \emph{full water wave problem} and it was shown in numerous studies to be an excellent model for water waves \cite{Stoker1957, Johnson2004} (see \cite{Craik2004} for water wave problem history review).

%%% ------------------------------------------------------------------------ %%%

\subsubsection{Initial conditions}
\label{sec:icond}

In order to obtain a well-posed problem, we have to specify also the appropriate initial conditions. The free surface elevation in the parametric form is initially given by two functions:
\begin{equation*}
  \xi\,(q,\,0)\ =\ \xi_{\,0}\,(q)\,, \qquad
  \eta\,(q,\,0)\ =\ \eta_{\,0}\,(q)\,, \qquad q\ \in\ [\,0,\,1\,]\,.
\end{equation*}
Let us assume that the initial distribution of fluid particles is known as well:
\begin{equation*}
  \u\,(\x,\,0)\ =\ \u_{\,0}\,(\x)\,, \qquad \x\ \in\ \Omega\,(0)\,.
\end{equation*}
Then, we can construct an initial condition for the velocity potential $\phi$ in the following way:
\begin{equation*}
  \phi\,(\x,\,0)\ =\ \phi_{\,0}\,(\x)\ \equiv\ \phi_{\,0}\,(\x_{\,0})\ +\ \int_{\gammat_{\x_{\,0}\,\leadsto\x}} u_{\,0}\;\ud x\ +\ v_{\,0}\;\ud y\,,
\end{equation*}
where $\phi_{\,0}\,(\x_{\,0})$ is the value of the velocity potential in an arbitrary point $\x_{\,0}\ \in\ \bar{\Omega}\,(0)$ and $\gammat_{\x_{\,0}\,\leadsto\x}\ \subseteq\ \bar{\Omega}\,(0)$ is an arbitrary piecewise smooth curve connecting points $\x_{\,0}$ with $\x\,$. We assume that the initial distribution of the velocities $\u_{\,0}\,(\x)$ is potential. Thus, the curvilinear integral above does not depend on the path $\gammat_{\x_{\,0}\,\leadsto\x}\,$.

%%% ------------------------------------------------------------------------ %%%

\subsection{Piston motion}

In order to describe the horizontal motion of the piston we adopt the following very simple model based on the second law of \textsc{Newton} \cite{Newton1687, Landau1976}:
\begin{equation}\label{eq:piston}
  m\,\ddot{s}\ +\ k\,s\ =\ -\,\bigl[\,\F\,(t)\ -\ \F\,(0)\,\bigr]\,,
\end{equation}
where $m$ is the wall mass and $k$ is the stiffness coefficient of springs. Finally, $\F\,(t)$ is the force acting on the left moving wall. It can be computed by integrating the contributions of the whole water column
\begin{equation*}
  \F\,(t)\ =\ \int_{\,-h_0}^{\,\eta_s\,(t)}\, p\,\bigl(s(t),\, y,\, t\bigr)\;\ud y\,, \qquad 
  \eta_{\,s}\,(t)\ \eqdef\ \eta\,\bigl(s\,(t),\, t\bigr)\,.
\end{equation*}
If initially the fluid was at rest, the force $\F(0)$ consists only of the hydrostatic loading on the wall, \ie
\begin{equation}\label{eq:f0}
  \F\,(0)\ =\ g\;\frac{h_{\,0}^{\,2}}{2}\,.
\end{equation}
The pressure $p\,(\x,\,t)\ \equiv\ p\,(x,\,y,\,t)$ can be computed at any point inside the fluid thanks to the \textsc{Euler}--\textsc{Lagrange} integral:
\begin{equation*}
  p\,(x,\,y,\,t)\ =\ -\,\phi_{\,t}\ -\ \half\,\abs{\grad\phi}^{\,2}\ -\ g\,y\,.
\end{equation*}
A more detailed derivation of equation~\eqref{eq:piston} can be found in Appendix~\ref{app:pist}.

The second order nonlinear and non-autonomous \acf{ode} \eqref{eq:piston} has to be completed with two initial conditions. If nothing is indicated, we start the integration from the rest state, \ie
\begin{equation*}
  s\,(0)\ =\ \dot{s}\,(0)\ =\ 0\,.
\end{equation*}

%%% ------------------------------------------------------------------------ %%%

\subsection{Dimensionless problem}
\label{sec:dimless}

Equations given in previous Section can be further simplified by choosing appropriate scaled variables. Namely, we introduce the following scaled independent
\begin{equation*}
  x^*\ =\ \frac{x}{h_0}\,, \qquad y^*\ =\ \frac{y}{h_0}\,, \qquad t^*\ =\ t\;\sqrt{\frac{g}{h_0}}
\end{equation*}
and dependent
\begin{equation*}
 s^*\ =\ \frac{s}{h_{\,0}}\,, \qquad \eta^*\ =\ \frac{\eta}{h_{\,0}}\,, \qquad \phi^*\ =\ \frac{\phi}{h_{\,0}\sqrt{g\,h_{\,0}}}\,, \qquad p^*\ =\ \frac{p}{\rho\,g\,h_{\,0}}\,, \qquad \F^*\ =\ \frac{\F}{\rho\,g\,h^{\,3}_{\,0}}
\end{equation*}
variables. The dimensional coefficients and parameters appearing in governing equations scale as follows:
\begin{equation*}
  \ell^*\ =\ \frac{\ell}{h_{\,0}}\,, \qquad
  m^*\ =\ \frac{m}{\rho\, h^{\,3}_{\,0}}\,, \qquad 
  k^*\ =\ \frac{k}{\rho\, g\, h^{\,2}_{\,0}}\,.
\end{equation*}
Finally, dimensionless governing equations read (where for simplicity we drop out all asterisk $*$ symbol from superscripts):
\begin{align}
  \grad^2\,\phi\ &=\ 0\,, \qquad (\x,\,t)\ \in\ \Omega\,(t)\times\R^+\,,\label{eq:laplace} \\
  \eta_{\,t}\ +\ \phi_{\,x}\cdot\eta_{\,x}\ &=\ \phi_{\,y}\,, \qquad y\ =\ \eta\,(x,\,t)\,, \label{eq:kinfs} \\
  \phi_{\,t}\ +\ \half\,\abs{\grad\,\phi}^{\,2}\ +\ \eta\ &=\ 0\,, \qquad y\ =\ \eta\,(x,\,t)\,, \label{eq:dynfs} \\
  \phi_{\,y}\ &=\ 0\,, \qquad y\ =\ -1\,, \label{eq:bottom} \\
  \phi_{\,x}\ &=\ \dot{s}\,, \qquad x\ =\ s\,(t)\,, \quad -1\ \leq\ y\ \leq\ \eta\,(s(t),\, t)\,,\label{eq:lwall} \\
  \phi_{\,x}\ &=\ 0, \qquad x\ =\ \ell\,, \quad -1\ \leq\ y\ \leq\ \eta\,(\ell,\,t)\,,\label{eq:rwall} \\
  m\,\ddot{s}\ +\ k\,s\ &=\ -\,\bigl[\,\F\,(t)\ -\ \F\,(0)\,\bigr]\,, \label{eq:spring} \\
  \F\,(t)\ &=\ \int_{\,-1}^{\,\eta\,\left(s(t),\, t\right)}\, p\,\bigl(s\,(t),\, y,\, t\bigr)\;\ud y\,,\label{eq:force} \\
  p\,(\x,\,t)\ &=\ -\,\phi_t\ -\ \half\,\abs{\grad\,\phi}^{\,2}\ -\ y\,.\label{eq:pressure}
\end{align}
In numerical simulations below we shall solve this dimensionless system of equations. It is equivalent to setting simply $h_0\ =\ 1$ and $g\ =\ 1$ in the computer code.

%%% ------------------------------------------------------------------------ %%%

\section{Equations on a fixed domain}
\label{sec:trans}

The main difficulty of the problem described above is that the computational domain $\Omega\,(t)$ is time dependent. First of all, it is due to the motion of the free surface, but also due to the motion of the left wall. Consequently, the strategy adopted in this study consists in transforming the problem to a fixed domain $\Q^{\,0}\,$. Obviously, this transformation will be time dependent due to the unsteady character of the problem. In the past; the idea of using conformal mappings from Complex Analysis has become popular in 2D Hydrodynamics \cite{Lavrentev1977}. For the 2D water wave problem it was proposed by L.~\textsc{Ovsyannikov} (1974) \cite{Ovsyannikov1974} and implemented much later numerically by A.~\textsc{Dyachenko} \etal (1996) \cite{Dyachenko1996a}. Strictly speaking, in our developments we do not need the conformal property of the underlying mapping. Consequently we shall relax this assumption. Moreover, it will allow us to have a \emph{fixed} and finite transformed domain $\Q^{\,0}\,$. For instance, we can choose $\Q^{\,0}$ as a unit square
\begin{equation*}
  \Q^{\,0}\ \eqdef\ \bigl\{(q^{\,1},\, q^{\,2})\ |\ 0\ \leq\ q^{\,1},\, q^{\,2}\ \leq\ 1\bigr\}\ \equiv\ [0,\,1]^2
\end{equation*}
and we consider a smooth bijective mapping (\ie~a diffeomorphism) $\x\,:\ \Q^{\,0}\ \mapsto\ \Omega\,(t)\,$:
\begin{equation}\label{eq:map}
  \x\ =\ \x\,(\q,\, t) \qquad \Longleftrightarrow \qquad
  \left\{
  \begin{array}{rl}
    x\ &=\ x\,(q^{\,1},\, q^{\,2},\, t)\,, \\
    y\ &=\ y\,(q^{\,1},\, q^{\,2},\, t)\,.
  \end{array}
  \right.
\end{equation}
\begin{equation}\label{eq:diagram}
\begin{tikzcd}[font=\large, column sep = 6em]
  \Omega\,(t) \arrow[bend left = 25]{r}[font=\small]{\x^{-1}\,(\q,\,t)}
  & \Q^{\,0} \arrow[bend left = 25]{l}[font=\small]{\x\,(\q,\,t)}
\end{tikzcd}
\end{equation}
Additionally we assume that the \textsc{Jacobian} matrix $\J\,(\q,\,t)$ of transformation \eqref{eq:map} is non-singular, \ie
\begin{equation}\label{eq:jac}
  \J\,(\q,\,t)\ \eqdef\ \begin{vmatrix}
    x_{\,q^{\,1}} & x_{\,q^{\,2}} \\
    y_{\,q^{\,1}} & y_{\,q^{\,2}}
  \end{vmatrix}\ \equiv\ 
  x_{\,q^{\,1}}\,y_{\,q^{\,2}}\ -\ x_{\,q^{\,2}}\,y_{\,q^{\,1}}\ \neq\ 0\,.
\end{equation}
This situation is schematically depicted in diagram \eqref{eq:diagram} and in Figure~\ref{fig:map}. For more details on the computation of partial derivatives we refer to Appendix~\ref{app:comp}.

It is assumed that left $\Gamma_\ell$ and right $\Gamma_r$ walls are images of left and right sides of the square $\Q^{\,0}\,$, \ie $q^{\,1}\ =\ 0$ and $q^{\,1}\ =\ 1$ correspondingly ($0\ \leq\ q^{\,2}\ \leq\ 1$). Similarly, the upper ($q^{\,2}\ =\ 1$) and lower ($q^{\,2}\ =\ 0$) sides of the square are respectively mapped on the free surface $\Gamma_s$ and bottom $\Gamma_b\,$. The boundary of the square $\Q^{\,0}$ will be denoted by $\gamma\,$, \ie
\begin{equation*}
  \gamma\ \eqdef\ \partial\,\Q^{\,0}\ =\ \gamma_{\,s}\ \cup\ \gamma_{\,b}\ \cup\ \gamma_{\,r}\ \cup\ \gamma_{\,\ell}\,.
\end{equation*}
The practical construction of the mapping $\x\,(\q,\,t)$ is described in Section~\ref{sec:ell}. At the present stage of the exposition we shall assume that mapping $\x\,(\q,\,t)$ is simply known.

\begin{figure}
  \centering
  \includegraphics[width=0.79\textwidth]{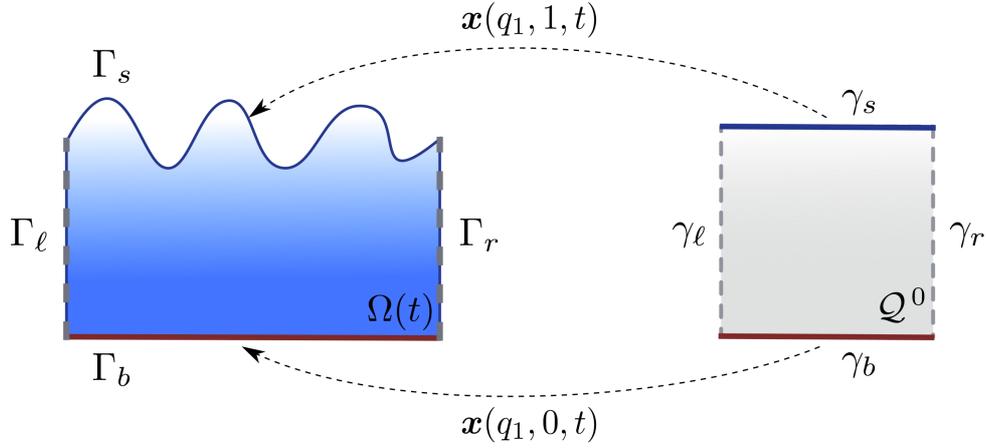}
  \caption{\small\em Schematic illustration of the mapping \eqref{eq:map}. Dashed lines show the correspondence between two boundaries of domains $\Q^{\,0}$ and $\Omega\,(t)$. The same property holds for two other lateral boundaries, but it is not shown for the sake of picture clarity.}
  \label{fig:map}
\end{figure}

In this Section we shall skip the intermediate computations, which are rather standard in the Differential Geometry, for example. We detail them in Appendices~\ref{app:comp} and \ref{app:lapl}. The system of governing equations \eqref{eq:laplace} -- \eqref{eq:pressure} posed on the fixed domain $\Q^{\,0}$ reads:
\begin{align}
  \pd{}{q^{\,\alpha}}\,\Bigl(\K_{\,\alpha\,\beta}\,\pd{\phi}{q^{\,\beta}}\Bigr)\ &=\ 0\,, \qquad \q\ \in\ \Q^{\,0}\,,\quad \alpha,\,\beta\ =\ 1,\,2, \label{eq:laplaceQ} \\
  \eta_{\,t}\ +\ v^{\,1}\cdot\eta_{\,q^{\,1}}\ &=\ v\,, \qquad q^{\,2}\ =\ 1\,, \label{eq:kinfsQ} \\
  \phi_{\,t}\ -\ (u\cdot x_t\ +\ v\cdot y_t)\ +\ \half\, |\grad_{\x}\,\phi|^2\ +\ \eta\ &=\ 0\,, \qquad q^{\,2}\ =\ 1\,, \label{eq:dynfsQ} \\
  \K_{\,2\,1}\,\pd{\phi}{q^{\,1}}\ +\ \K_{\,2\,2}\,\pd{\phi}{q^{\,2}}\ &=\ 0\,, \qquad q^{\,2}\ =\ 0\,, \label{eq:bottomQ} \\
  \K_{\,1\,1}\,\pd{\phi}{q^{\,1}}\ +\ \K_{\,1\,2}\,\pd{\phi}{q^{\,2}}\ &=\ y_{q^{\,2}}\,\dot{s}\,, \qquad q^{\,1}\ =\ 0\,, \quad 0\ \leq\ q^{\,2}\ \leq\ 1\,, \label{eq:lwallQ} \\
  \K_{\,1\,1}\,\pd{\phi}{q^{\,1}}\ +\ \K_{\,1\,2}\,\pd{\phi}{q^{\,2}}\ &=\ 0, \qquad q^{\,1}\ =\ 1, \quad 0\ \leq\ q^{\,2}\ \leq\ 1\,, \label{eq:rwallQ} \\ 
  \F\,(t)\ &=\ \int_{\,0}^{\,1}\, p\,\bigl(0,\, q^{\,2},\, t\bigr)\;\ud q^{\,2}\,, \label{eq:forceQ} \\
  p\, &=\, -\,\phi_{\,t}\, +\, (u\cdot x_t\, +\, v\cdot y_t)\, -\, \half\,|\grad_{\x}\phi|^2\, -\, y\,, \label{eq:pressureQ}
\end{align}
where the implicit convention about the summation over repeated indices is adopted. The transformation of the \textsc{Laplace} equation is detailed in Appendix~\ref{app:lapl}.

\begin{remark}
If the free surface is given in the parametric form \eqref{eq:param}, then we have to add one more kinematic free surface boundary condition:
\begin{equation}\label{eq:kinfsQxi}
  \xi_{\,t}\ +\ v^{\,1}\cdot\xi_{\,q^{\,1}}\ =\ u\,, \qquad \q\ \in\ \gamma_{\,s}\,.
\end{equation}
See also Appendix~\ref{app:bc} for some hints on the derivation of boundary conditions in the transformed domain.
\end{remark}

Notice that equation \eqref{eq:spring} is not affected by transformation \eqref{eq:map}. That is why we do not repeat it here. Above we introduced the following notations. The coefficients in transformed \textsc{Laplacian} are defined as
\begin{equation}\label{eq:K}
  \K_{\,1\,1}\ \eqdef\ \frac{g_{\,2\,2}}{\J}\,, \qquad
  \K_{\,1\,2}\ \equiv\ \K_{\,2\,1}\ \eqdef\ -\frac{g_{\,1\,2}}{\J}\,, \qquad
  \K_{\,2\,2}\ \eqdef\ \frac{g_{\,1\,1}}{\J}\,,
\end{equation}
where $\G = (g_{\alpha\beta})_{\alpha,\beta\, =\, 1,\,2}$ is the familiar metric tensor
\begin{multline}\label{eq:metric}
  g_{\,1\,1}\ \eqdef\ (x_{\,q^{\,1}})^2\ +\ (y_{\,q^{\,1}})^2\,, \qquad
  g_{\,1\,2}\ \equiv\ g_{\,2\,1}\ \eqdef\ x_{\,q^{\,1}} x_{\,q^{\,2}}\ +\ y_{\,q^{\,1}} y_{\,q^{\,2}}\,, \\
  g_{\,2\,2}\ \eqdef\ (x_{\,q^{\,2}})^2\ +\ (y_{\,q^{\,2}})^2\,.
\end{multline}
In fact, tensor $\K\ =\ (\K_{\,\alpha\beta})_{\alpha,\,\beta\, =\, 1,2}$ can be expressed in a compact matrix form through the metric tensor $\G$ as
\begin{equation*}
  \K\ =\ \J\;\G^{-1}\,.
\end{equation*}
The \textsc{Cartesian} $(u,\, v)$ and the contravariant $v^{\,1}$ components of the velocity field are defined as
\begin{align}\label{eq:cartU}
  u\ &\eqdef\ \frac{1}{\J}\,\bigl[\phi_{q^{\,1}}\cdot y_{q^{\,2}}\ -\ \phi_{q^{\,2}}\cdot y_{q^{\,1}}\bigr]\,, \\
  v\ &\eqdef\ \frac{1}{\J}\,\bigl[-\phi_{q^{\,1}}\cdot x_{q^{\,2}}\ +\ \phi_{q^{\,2}}\cdot x_{q^{\,1}}\bigr]\,,\label{eq:cartV} \\
  v^{\,1}\ &\eqdef\ \pd{q^{\,1}}{t}\ +\ u\cdot\pd{q^{\,1}}{x}\ +\ v\cdot\pd{q^{\,1}}{y}\,.\label{eq:cartW}
\end{align}
In order to compute the derivatives of the inverse mapping, we use the following expressions:
\begin{equation}\label{eq:25}
  \pd{q^{\,1}}{t}\ =\ \frac{1}{\J}\,\Bigl[\,\pd{y}{t}\,\pd{x}{q^{\,2}}\ -\ \pd{x}{t}\,\pd{y}{q^{\,2}}\,\Bigr]\,, \qquad
  \pd{q^{\,1}}{x}\ =\ \frac{1}{\J}\,\pd{y}{q^{\,2}}\,, \qquad
  \pd{q^{\,1}}{y}\ =\ -\frac{1}{\J}\,\pd{x}{q^{\,2}}\,.
\end{equation}
\begin{equation}\label{eq:27}
  \pd{q^{\,2}}{t}\ =\ \frac{1}{\J}\,\Bigl[\,\pd{x}{t}\,\pd{y}{q^{\,1}}\ -\ \pd{y}{t}\,\pd{x}{q^{\,1}}\,\Bigr]\,, \qquad
  \pd{q^{\,2}}{x}\ =\ -\frac{1}{\J}\,\pd{y}{q^{\,1}}\,, \qquad
  \pd{q^{\,2}}{y}\ =\ \frac{1}{\J}\,\pd{x}{q^{\,1}}\,.
\end{equation}

The mathematical problem of a potential flow formulation with free surface in curvilinear coordinates consists in determining the velocity potential $\phi\,(q^{\,1},\,q^{\,2},\,t)$ and the free surface profile $\eta\,(q^{\,1},\,t)\,$. The velocity potential satisfies in the domain $\Q^{\,0}$ an elliptic equation \eqref{eq:laplaceQ} with non-constant coefficients along with boundary conditions \eqref{eq:kinfsQ} -- \eqref{eq:rwallQ} on $\partial\Q^{\,0}\,$. In order to obtain a well-posed problem we have to complete the formulation with two initial conditions at $t\ =\ 0$ directly in the fixed domain:
\begin{align*}
  \phi\,(q^{\,1},\,1,\,0)\ &=\ \phi_{\,0}(q^{\,1})\,, \qquad \q\ =\ (q^{\,1},\,1)\ \in\ \Gamma_s\,, \\
  \eta\,(q^{\,1},\,0)\ &=\ \eta_{\,0}\,(q^{\,1})\,, \qquad \q\ =\ (q^{\,1},\,1)\ \in\ \Gamma_s\,.
\end{align*}
Earlier the initial conditions were discussed in Section~\ref{sec:icond}. They can be transposed on the fixed domain $\Q^{\,0}$ using the inverse mapping $\x^{\,-1}\,(\q,\,t)\,$.

%%% ------------------------------------------------------------------------ %%%

\subsection{On some properties of elliptic operators in curvilinear coordinates}
\label{sec:prop}

It can be noticed that the \textsc{Laplace} equation in curvilinear coordinates \eqref{eq:laplaceQ} has a more complex form comparing to the initial \textsc{Cartesian} coordinates $O\,x\,y\,$. In particular, constant coefficients become variable in space and time. Moreover, the mixed derivatives appear as well. However, some properties are important to construct qualitatively correct numerical discretizations. For example, below we shall construct a finite difference scheme for equation \eqref{eq:laplaceQ} with a positive definite difference operator. The proof of this fact relies heavily on the uniform ellipticity property of operator \eqref{eq:laplaceQ}.

\begin{lemma}\label{lem:1}
Partial differential equation \eqref{eq:laplaceQ} is uniformly elliptic.
\end{lemma}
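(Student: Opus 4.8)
The plan is to reduce the statement to a pointwise-uniform two-sided bound on the quadratic form of the symmetric coefficient matrix $\K = (\K_{\alpha\beta})$ appearing in \eqref{eq:laplaceQ}. Concretely, I would show that there exist constants $0 < \nu_{\min} \leq \nu_{\max}$, independent of $(\q,\,t)$, such that
\[
  \nu_{\min}\,\abs{\vz}^2\ \leq\ \K_{\alpha\beta}\,z_\alpha\,z_\beta\ \leq\ \nu_{\max}\,\abs{\vz}^2\,, \qquad \forall\,\vz\ \in\ \R^2\,,
\]
which is exactly uniform ellipticity. First I would invoke the representation $\K = \J\,\G^{-1}$ stated above, together with the explicit form of the metric tensor \eqref{eq:metric}: since $\G$ is the Gram matrix of the two columns of the \textsc{Jacobian} matrix of the mapping \eqref{eq:map}, it is symmetric, and it is positive definite precisely because the \textsc{Jacobian} is non-singular, $\J\ \neq\ 0$, by \eqref{eq:jac}. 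Hence $\G^{-1}$, and therefore $\K$, are symmetric and (once the sign of $\J$ is fixed) positive definite. By connectedness of $\Q^{\,0}$ and continuity of $\J\ \neq\ 0$, the sign of $\J$ is constant, so I may assume $\J\ >\ 0$ without loss of generality.

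The key algebraic observation is a determinant identity. Expanding \eqref{eq:metric} gives $\det\G\ =\ g_{\,1\,1}\,g_{\,2\,2}\ -\ g_{\,1\,2}^2\ =\ (x_{\,q^{\,1}}\,y_{\,q^{\,2}}\ -\ x_{\,q^{\,2}}\,y_{\,q^{\,1}})^2\ =\ \J^2$, so that, using \eqref{eq:K},
\[
  \det\K\ =\ \K_{\,1\,1}\,\K_{\,2\,2}\ -\ \K_{\,1\,2}^2\ =\ \frac{g_{\,1\,1}\,g_{\,2\,2}\ -\ g_{\,1\,2}^2}{\J^2}\ =\ 1\,.
\]
Consequently the two positive eigenvalues $\lambda_\pm$ of $\K$ are mutually reciprocal, $\lambda_+\,\lambda_-\ =\ 1$, whence $\lambda_-\ =\ 1/\lambda_+\,$. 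It therefore suffices to bound the larger eigenvalue from above, since this automatically bounds the smaller one from below.

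For that bound I would use $\lambda_+\ \leq\ \lambda_+\ +\ \lambda_-\ =\ \operatorname{tr}\K\ =\ (g_{\,1\,1}\ +\ g_{\,2\,2})/\J\,$. Because the mapping $\x\,(\q,\,t)$ is a smooth diffeomorphism, the functions $g_{\,1\,1}\ +\ g_{\,2\,2}$ and $\J$ are continuous on the compact closure $\overline{\Q^{\,0}}\ =\ [0,1]^2$, and $\J$ is there bounded away from zero; hence $\operatorname{tr}\K\ \leq\ \Lambda\ <\ \infty$ uniformly. Taking $\nu_{\max}\ =\ \Lambda$ and $\nu_{\min}\ =\ 1/\Lambda$ then yields $\lambda_+\ \leq\ \nu_{\max}$ and $\lambda_-\ \geq\ \nu_{\min}\,$, which is the claimed uniform ellipticity. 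I expect the only genuine obstacle to be of a soft rather than computational nature: passing from pointwise positive definiteness to \emph{uniform} bounds relies on the non-degeneracy $\J\ \neq\ 0$ of \eqref{eq:jac} and on compactness of $\overline{\Q^{\,0}}$ to convert continuity into uniform control; and if uniformity in time is also required, one must additionally assume that $\x\,(\q,\,t)$ remains a uniformly non-degenerate diffeomorphism over the time interval of interest.
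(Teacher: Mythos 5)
Your argument is essentially the paper's own: both reduce uniform ellipticity to two-sided, $(\q,t)$-independent eigenvalue bounds for the symmetric matrix $\K$ with $\det\K\ =\ 1$, relying on the positivity of the \textsc{Jacobian} $\J$ and the boundedness of the metric coefficients. The only differences are cosmetic --- you bound $\lambda_{+}$ by $\operatorname{tr}\K\ =\ (g_{\,1\,1}+g_{\,2\,2})/\J$ and exploit $\lambda_{-}\ =\ 1/\lambda_{+}$, whereas the paper writes both eigenvalues explicitly via the discriminant $\Dd\ =\ (g_{\,1\,1}-g_{\,2\,2})^{2}+4\,g_{\,1\,2}^{2}$ and sets $c_{\,1}\ =\ \inf\lambda_{\K}^{(-)}$, $c_{\,2}\ =\ \sup\lambda_{\K}^{(+)}$; your explicit appeal to compactness of $[0,1]^{2}$ (and your caveat about uniformity in $t$) makes the passage from pointwise to uniform bounds slightly more careful than the paper's.
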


\begin{proof}
By our assumptions the mapping \eqref{eq:map} is differentiable with bounded derivatives in $\Q^{\,0}$ and non-degenerate with a positive \textsc{Jacobian} $\J\ >\ 0\,$, $\forall \q\ \in\ \Q^{\,0}\,$, $\forall t\ >\ 0\,$. Thanks to the identity
\begin{equation*}
  g_{\,1\,1}\cdot g_{\,2\,2}\ \equiv\ \J^{\,2}\ +\ g_{\,1\,2}^{\,2}\,,
\end{equation*}
the metric components $g_{\,1\,1}$ and $g_{\,2\,2}$ take strictly positive values. Thanks to definitions \eqref{eq:K}, it is straightforward to conclude that $\K_{\,1\,1}\,$, $\K_{\,2\,2}$ are also positive. The matrix
\begin{equation*}
  \K\ =\ \begin{pmatrix}
    \K_{\,1\,1} & \K_{\,1\,2} \\
    \K_{\,2\,1} & \K_{\,2\,2}
  \end{pmatrix}
\end{equation*}
is symmetric and its determinant is equal to $\det\K\ =\ 1$ due to \eqref{eq:K}. Consequently, it possesses two orthonormal eigenvectors corresponding to eigenvalues
\begin{align*}
  0\ \leq\ \lambda_{\K}^{(-)}\ \eqdef\ \frac{g_{\,1\,1}\ +\ g_{\,2\,2}\ -\ \sqrt{\Dd}}{2\,\J}\ \leq\ 1\,, \\
  1\ \leq\ \lambda_{\K}^{(+)}\ \eqdef\ \frac{g_{\,1\,1}\ +\ g_{\,2\,2}\ +\ \sqrt{\Dd}}{2\,\J}\,,
\end{align*}
where the discriminant is defined as
\begin{equation*}
  \Dd\ \eqdef\ \bigl(g_{\,1\,1}\ +\ g_{\,2\,2}\bigr)^2\ -\ 4\,\J^{\,2}\ =\ \bigl(g_{\,1\,1}\ -\ g_{\,2\,2}\bigr)^2\ +\ 4\,g_{\,1\,2}^{\,2}\ \geq\ 0\,.
\end{equation*}
Then, for any real numbers $\forall \xi\,$, $\zeta\ \in\ \R$ and for any point $\forall\q\ \in\ \Q^{\,0}$ we have the following inequalities:
\begin{equation}\label{eq:34}
  c_1\,\bigl(\xi^2\ +\ \zeta^2)\ \leq\ \Qf\,(\xi,\,\zeta)\ \leq\ c_2\,\bigl(\xi^2\ +\ \zeta^2)\,,
\end{equation}
where $\Qf$ is the quadratic form defined as
\begin{equation}\label{eq:quadratic}
  \Qf\,(\xi,\,\zeta)\ \eqdef\ \K_{\,1\,1}\,\xi^{\,2}\ +\ 2\,\K_{\,1\,2}\,\xi\,\zeta\ +\ \K_{\,2\,2}\,\zeta^{\,2}\,.
\end{equation}
The constants $c_{\,1,\,2}$ are defined as
\begin{align}\label{eq:c1}
  c_1\ &\eqdef\ \inf_{\q\ \in\ \Q^{\,0}}\;\frac{g_{\,1\,1}\ +\ g_{\,2\,2}\ -\ \sqrt{\Dd}}{2\,\J}\ >\ 0\,, \\
  c_2\ &\eqdef\ \sup_{\q\ \in\ \Q^{\,0}}\;\frac{g_{\,1\,1}\ +\ g_{\,2\,2}\ +\ \sqrt{\Dd}}{2\,\J}\ <\ \infty\,.\label{eq:c2}
\end{align}
The positivity of $c_{\,1}\ >\ 0$ implies the uniform ellipticity property.
\end{proof}

The constants $c_{\,1,\,2}$ defined in the proof above can be used to estimate the convergence speed of iterative methods, which depends directly on the conditioning number \cite{Demmel1997} of the matrix corresponding to the difference operator \cite{Samarskii2001}. The conditioning number depends on the ratio $\dfrac{c_2}{c_1}\,$. High ratio implies a poorly conditioned difference operator and, thus, more iterations are needed to converge to the solution within prescribed accuracy. We underline also that constants $c_{\,1,\,2}$ depend only on the mapping \eqref{eq:map}. Below we provide two examples which illustrate situations where the ratio $\dfrac{c_2}{c_1}\ \gg\ 1\,$.

%%% ------------------------------------------------------------------------ %%%

\subsubsection{Example 1}

Let us take a physical domain in the form of a rectangle:
\begin{equation*}
  \Omega_{\,\blacksquare}\ \eqdef\ \bigl\{\x\ =\ (x,\,y)\;\vert\; 0\ \leq\ x\ \leq\ \ell_{\,1}\,, \quad 0\ \leq\ y\ \leq\ \ell_{\,2}\, \bigr\}
\end{equation*}
with substantially different sizes of the walls, \eg we can assume $0\ <\ \ell_{\,1}\ \ll\ \ell_{\,2}\,$. The mapping \eqref{eq:map} from $\Q^{\,0}\ \mapsto\ \Omega_{\,\blacksquare}$ is given explicitly by formulas
\begin{equation}\label{eq:mapS}
  x\ =\ q^{\,1}\cdot\ell_{\,1}\,, \qquad 
  y\ =\ q^{\,2}\cdot\ell_{\,2}\,.
\end{equation}
Then, we can compute explicitly the metric coefficients
\begin{equation*}
  g_{\,1\,1}\ =\ \ell_{\,1}^{\,2}\,, \qquad
  g_{\,1\,2}\ =\ 0\,, \qquad
  g_{\,2\,2}\ =\ \ell_{\,2}^{\,2}\,,
\end{equation*}
the \textsc{Jacobian} 
\begin{equation*}
  \J\ =\ \ell_{\,1}\cdot\ell_{\,2}\,,
\end{equation*}
and eigenvalues of the matrix $\K\,$:
\begin{equation*}
  c_{\,1}\ =\ \frac{\ell_{\,1}}{\ell_{\,2}}\,, \qquad
  c_{\,2}\ =\ \frac{\ell_{\,2}}{\ell_{\,1}}\,.
\end{equation*}
Thus, the conditioning of the finite difference operator will scale with
\begin{equation*}
  \frac{c_{\,2}}{c_{\,1}}\ \equiv\ \Bigl(\frac{\ell_{\,2}}{\ell_{\,1}}\Bigr)^{\,2}\ \gg\ 1\,.
\end{equation*}
Thus, for highly distorted domains the ratio $\dfrac{c_{\,2}}{c_{\,1}}$ becomes large. If on the reference domain $\Q^{\,0}$ we use a uniform square grid, than mapping \eqref{eq:mapS} generates a uniform grid with distorted cells. The convergence of iterative methods on such grids will be slowed down. Thus, in real computations we should avoid highly distorted cells since they will penalize the convergence of linear solvers.

%%% ------------------------------------------------------------------------ %%%

\subsubsection{Example 2}

Let us take another physical domain $\Omega_{\,\square}$ having the shape of a parallelogram. It can be obtained as an image of the reference domain $\Q^{\,0}$ under the following mapping
\begin{equation}\label{eq:mapR}
  x\ =\ q^{\,1}\ +\ q^{\,2}\cdot\cos\psi\,, \qquad
  y\ =\ q^{\,2}\cdot\sin\psi\,,
\end{equation}
where $0\ <\ \psi\ \ll\ \dfrac{\pi}{2}$ is a small angle. Then, the metric coefficients of this mapping are
\begin{equation*}
  g_{\,1\,1}\ =\ 1\,, \qquad
  g_{\,1\,2}\ =\ \cos\psi\,, \qquad
  g_{\,2\,2}\ =\ 1\,.
\end{equation*}
The \textsc{Jacobian} $\J\ =\ \sin\psi$ and eigenvalues of the matrix $\K$ are
\begin{equation*}
  c_{\,1}\ =\ \tan\frac{\psi}{2}\,, \qquad
  c_{\,2}\ =\ \cot\frac{\psi}{2}\,,
\end{equation*}
and henceforth
\begin{equation*}
  \frac{c_{\,2}}{c_{\,1}}\ \equiv\ \cot^{\,2}\frac{\psi}{2}\ \gg\ 1\,.
\end{equation*}
So, we showed that the ratio $\dfrac{c_{\,2}}{c_{\,1}}$ becomes large for domains $\Omega_{\,\square}$ featuring a small angle. The problem is that the grid generated by mapping \eqref{eq:mapR} is substantially non-orthogonal, since it consists of parallelograms with an acute angle $\psi\,$. On such grids we should expect some reduction of iterative methods convergence speed. Thus, in our computations we should avoid grids featuring small angles.

%%% ------------------------------------------------------------------------ %%%

\section{Finite difference scheme in curvilinear coordinates}
\label{sec:fd}

In the numerical simulation of free surface potential flows of an ideal fluid, an elliptic equation to determine the velocity potential $\phi$ has to be solved at every time step. According to our numerical algorithm (see Section~\ref{sec:alg} for more details), we determine first the velocity potential value on the free surface $\Gamma_s$, then using this value we reconstruct the velocity potential $\phi$ in the whole fluid domain (by taking into account other boundary conditions on $\Gamma_b\,$, $\Gamma_r$ and $\Gamma_\ell$). So, in this Section we assume that we know the values of the velocity potential in nodes which constitute the pre-image of the free surface $\Gamma_s\,$. The goal is to determine the values of the velocity potential in all remaining grid nodes. These values will be determined by solving a system of difference equations constructed below.

Moreover, we assume that the curvilinear grid $\Omega_{\,h}^{\,n}$ on the $n$\up{th} time layer is already constructed. The nodes $\x_{\,\j}^{\,n}$ are images under the mapping \eqref{eq:map} of fixed nodes $\q_{\,\j}\,$, which constitute the uniformly distributed grid $\Q^{\,0}_{\,h}\ \subseteq\ \Q^{\,0}\,$. Here $\j\ \eqdef\ (j_{\,1},\,j_{\,2})$ is a multi-index and $\q_{\,\j}\ \eqdef\ \bigl(q_{\,j_{\,1}}^{\,1},\,q_{\,j_{\,2}}^{\,2}\bigr)\,$. The uniform grid is traditionally defined as
\begin{equation*}
  q_{\,j_{\,\alpha}}^{\,\alpha}\ \eqdef\ j_{\,\alpha}\cdot h_{\,\alpha}\,, \qquad
  j_{\,\alpha}\ =\ 0,\,\ldots,\,N_{\,\alpha}\,, \qquad
  h_{\,\alpha}\ \eqdef\ \frac{1}{N_{\,\alpha}}\,, \qquad
  \alpha\ =\ 1,\,2\,.
\end{equation*}
We make an additional geometrical (and not very restrictive) assumption on the mapping \eqref{eq:map}: boundary components $\gamma_{\,\ell}\,$, $\gamma_{\,b}\,$, $\gamma_{\,r}$ and $\gamma_{\,s}$ are mapped on corresponding boundary components $\Gamma_{\ell}\,$, $\Gamma_{b}\,$, $\Gamma_{r}$ and $\Gamma_{s}$ of the fluid domain $\Omega(t)$ (see Figure~\ref{fig:map} for an illustration). The boundary of the reference domain $\Q^{\,0}$ after the discretization is denoted as
\begin{equation*}
  \gamma^{\,h}\ \eqdef\ \partial\,\Q^{\,0}_{\,h}\ \equiv\ \gamma_{\,\ell}^{\,h}\ \cup\ \gamma_{\,b}^{\,h}\ \cup\ \gamma_{\,r}^{\,h}\ \cup\ \gamma_{\,s}^{\,h}\,.
\end{equation*}
The sketch of the discretized reference domain $\Q^{\,0}_{\,h}$ is depicted in Figure~\ref{fig:mesh}.

\begin{figure}
  \centering
  \includegraphics[width=0.59\textwidth]{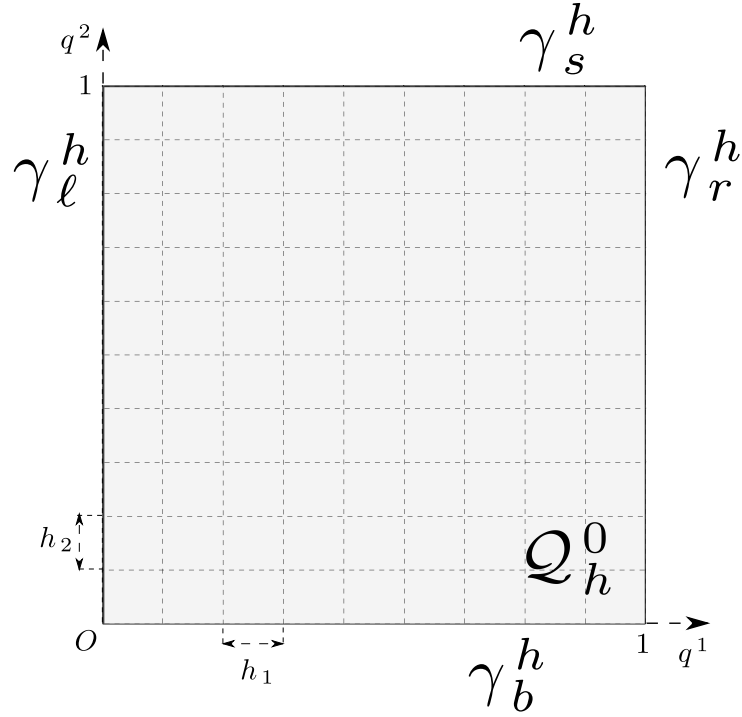}
  \caption{\small\em Sketch of the discretized reference domain $\Q^{\,0}_{\,h}\,$.}
  \label{fig:mesh}
\end{figure}

In order to construct a finite difference approximation we employ the integro-interpolation method using the terminology of \textsc{Tikhonov} \& \textsc{Samarskii} \cite{Tikhonov1963}. In the western literature this method is closer to the finite volume/conservative finite difference methods \cite{Godunov1987, Godunov1999}. The choice for finite differences seems to be natural since we would like to solve an elliptic equation on a simple \textsc{Cartesian} domain \cite{Richtmyer1967}. For this purpose we replace equation~\eqref{eq:laplaceQ} by the following integral relation
\begin{equation}\label{eq:int}
  \sqint_{\partial\,\C}\,\digamma^{\,1}\,\ud q^{\,2}\ -\ \digamma^{\,2}\,\ud q^{\,1}\ =\ 0\,,
\end{equation}
where $\C\ \subseteq\ \Q^{\,0}$ is an arbitrary sub-domain with a piece-wise smooth boundary. We introduced above the notation
\begin{equation*}
  \digamma^{\,1}\ \eqdef\ \K_{\,1\,1}\cdot\pd{\phi}{q^{\,1}}\ +\ \K_{\,1\,2}\cdot\pd{\phi}{q^{\,2}}\,, \qquad
  \digamma^{\,2}\ \eqdef\ \K_{\,2\,1}\cdot\pd{\phi}{q^{\,1}}\ +\ \K_{\,2\,2}\cdot\pd{\phi}{q^{\,2}}\,.
\end{equation*}
Below we construct discrete approximations based on this integral formulation.

\begin{remark}\label{rem:lapl}
Please, notice that the transformed \textsc{Laplace} equation \eqref{eq:laplaceQ} can be compactly rewritten using the new notation:
\begin{equation*}
  \pd{\digamma^{\,1}}{q^{\,1}}\ +\ \pd{\digamma^{\,2}}{q^{\,2}}\ =\ 0\,, \qquad \q\ =\ \bigl(q^{\,1},\,q^{\,2})\ \in\ \Q^{\,0}\,.
\end{equation*}
\end{remark}

%%% ------------------------------------------------------------------------ %%%

\subsection{Approximation in interior nodes}

\begin{figure}
  \centering
  \subfigure[]{\includegraphics[width=0.685\textwidth]{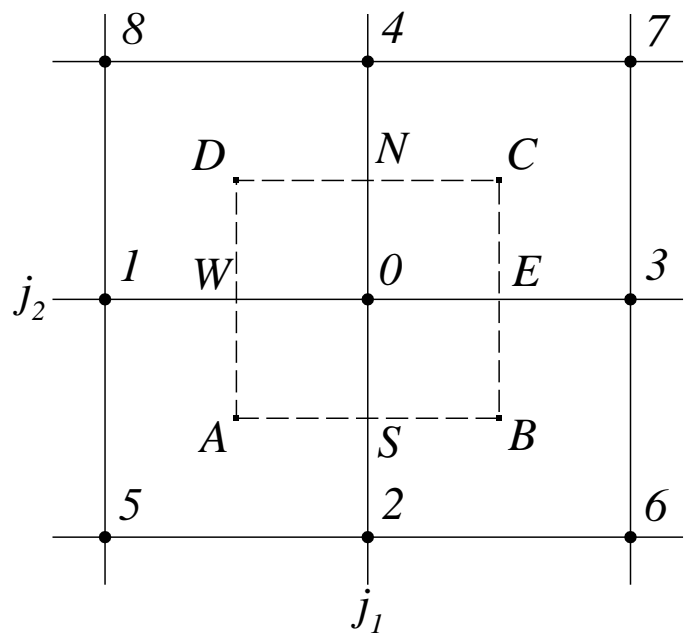}}
  \subfigure[]{\includegraphics[width=0.685\textwidth]{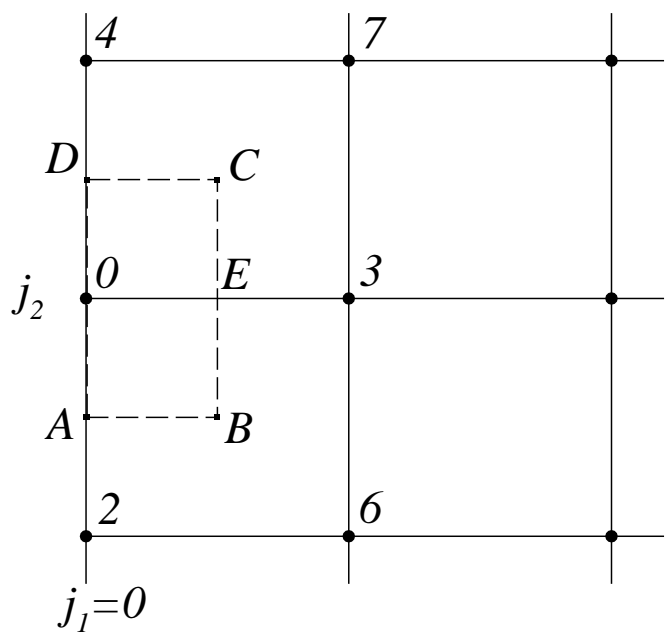}}
  \caption{\small\em Finite difference stencil and the integration contour (shown with a dotted line) in an interior (a) and a left boundary (b) node.}
  \label{fig:stencil}
\end{figure}

Let us consider an interior grid $\Q^{\,0}_{\,h}$ node marked with symbol $0$ in Figure~\ref{fig:stencil}(\textit{a}) along with the integration contour $A\,B\,C\,D\,$. For this contour the integral equation \eqref{eq:int} takes the form
\begin{equation}\label{eq:int1}
  \mint{-}_{\,B\,C}\digamma^{\,1}\,\ud q^{\,2}\ -\ \mint{-}_{\,A\,D}\digamma^{\,1}\,\ud q^{\,2}\ +\ \mint{-}_{\,D\,C}\digamma^{\,2}\,\ud q^{\,1}\ -\ \mint{-}_{\,A\,B}\digamma^{\,2}\,\ud q^{\,1}\ =\ 0\,.
\end{equation}
Now, by applying a quadrature rule to all integrals above, one can obtain a discrete analogue of this integral relation. In the present study we employ the quadrature formula of rectangles. For the integral over the segment $B\,C$ we have
\begin{multline}\label{eq:intBC}
  \mint{-}_{\,B\,C}\digamma^{\,1}\,\ud q^{\,2}\ \approx\ \biggl\{\frac{\K_{\,1\,1}\,(0)\ +\ \K_{\,1\,1}\,(3)}{2}\cdot\frac{\phi_{\,3}\ -\ \phi_{\,0}}{h_{\,1}}\ +\\ \frac{1}{2}\Bigl[\,\K_{\,1\,2}\,(3)\;\frac{\phi_{\,7}\ -\ \phi_{\,6}}{2\,h_{\,2}}\ +\ \K_{\,1\,2}\,(0)\;\frac{\phi_{\,4}\ -\ \phi_{\,2}}{2\,h_{\,2}}\,\Bigr]\biggr\}\,h_{\,2}\ =\\ \frac{1}{2}\;\Bigl\{\bigl(\K_{\,1\,1}\,\phi_{\,q^{\,1}}^{\,\natural}\bigr)_{j_{\,1},\,j_{\,2}}\ +\ \bigl(\K_{\,1\,1}\,\phi_{\,q^{\,1}}^{\,\flat}\bigr)_{j_{\,1}+1,\,j_{\,2}}\ +\\ \frac{1}{2}\;\Bigl[\,\bigl(\K_{\,1\,2}\,\phi_{\,q^{\,2}}^{\,\natural}\bigr)_{j_{\,1}+1,\,j_{\,2}}\ +\ \bigl(\K_{\,1\,2}\,\phi_{\,q^{\,2}}^{\,\flat}\bigr)_{j_{\,1}+1,\,j_{\,2}}\ +\ \bigl(\K_{\,1\,2}\,\phi_{\,q^{\,2}}^{\,\natural}\bigr)_{j_{\,1},\,j_{\,2}}\ +\ \bigl(\K_{\,1\,2}\,\phi_{\,q^{\,2}}^{\,\flat}\bigr)_{j_{\,1},\,j_{\,2}}\,\Bigr]\Bigr\}\,h_{\,2}\,,
\end{multline}
where we used the grid numbering shown in Figure~\ref{fig:stencil}(\textit{a}) and we introduced the following notation
\begin{align*}
  \phi_{\,q^{\,1},\,\j}^{\,\natural}\ &\eqdef\ \frac{\phi_{j_{\,1}+1,\,j_{\,2}}\ -\ \phi_{j_{\,1},\,j_{\,2}}}{h_{\,1}}\,, \qquad \phi_{\,q^{\,1},\,\j}^{\,\flat}\ \eqdef\ \frac{\phi_{j_{\,1},\,j_{\,2}}\ -\ \phi_{j_{\,1}-1,\,j_{\,2}}}{h_{\,1}}\,, \\
  \phi_{\,q^{\,2},\,\j}^{\,\natural}\ &\eqdef\ \frac{\phi_{j_{\,1},\,j_{\,2}+1}\ -\ \phi_{j_{\,1},\,j_{\,2}}}{h_{\,2}}\,, \qquad \phi_{\,q^{\,2},\,\j}^{\,\flat}\ \eqdef\ \frac{\phi_{j_{\,1},\,j_{\,2}}\ -\ \phi_{j_{\,1},\,j_{\,2}-1}}{h_{\,2}}\,.
\end{align*}
for left- and right-sided finite difference operators. By using similar approximations to discretize other integrals over sides $A\,D\,$, $D\,C$ and $A\,B$ we obtain a fully discrete analogue of the integral equation \eqref{eq:int}:
\begin{multline*}
  \biggl\{\frac{1}{2}\;\Bigl[\,\bigl(\K_{\,1\,1}\,\phi_{\,q^{\,1}}^{\,\natural}\bigr)_{\,j_{\,1},\,j_{\,2}}\ +\ \bigl(\K_{\,1\,1}\,\phi_{\,q^{\,1}}^{\,\flat}\bigr)_{\,j_{\,1}+1,\,j_{\,2}}\,\Bigr]\ +\\
  \frac{1}{4}\;\Bigl[\,\bigl(\K_{\,1\,2}\,\phi_{\,q^{\,2}}^{\,\natural}\bigr)_{\,j_{\,1}+1,\,j_{\,2}}\ +\ \bigl(\K_{\,1\,2}\,\phi_{\,q^{\,2}}^{\,\flat}\bigr)_{\,j_{\,1}+1,\,j_{\,2}}\ +\ \bigl(\K_{\,1\,2}\,\phi_{\,q^{\,2}}^{\,\natural}\bigr)_{\,j_{\,1},\,j_{\,2}}\ +\ \bigl(\K_{\,1\,2}\,\phi_{\,q^{\,2}}^{\,\flat}\bigr)_{\,j_{\,1},\,j_{\,2}}\,\Bigr]\biggr\}\,h_{\,2}\\
  -\ \biggl\{\frac{1}{2}\;\Bigl[\,\bigl(\K_{\,1\,1}\,\phi_{\,q^{\,1}}^{\,\natural}\bigr)_{\,j_{\,1}-1,\,j_{\,2}}\ +\ \bigl(\K_{\,1\,1}\,\phi_{\,q^{\,1}}^{\,\flat}\bigr)_{\,j_{\,1},\,j_{\,2}}\,\Bigr]\ +\\
  \frac{1}{4}\;\Bigl[\,\bigl(\K_{\,1\,2}\,\phi_{\,q^{\,2}}^{\,\natural}\bigr)_{\,j_{\,1},\,j_{\,2}}\ +\ \bigl(\K_{\,1\,2}\,\phi_{\,q^{\,2}}^{\,\flat}\bigr)_{\,j_{\,1},\,j_{\,2}}\ +\ \bigl(\K_{\,1\,2}\,\phi_{\,q^{\,2}}^{\,\natural}\bigr)_{\,j_{\,1}-1,\,j_{\,2}}\ +\ \bigl(\K_{\,1\,2}\,\phi_{\,q^{\,2}}^{\,\flat}\bigr)_{\,j_{\,1}-1,\,j_{\,2}}\,\Bigr]\biggr\}\,h_{\,2}\\
  +\ \biggl\{\frac{1}{2}\;\Bigl[\,\bigl(\K_{\,2\,2}\,\phi_{\,q^{\,2}}^{\,\natural}\bigr)_{\,j_{\,1},\,j_{\,2}}\ +\ \bigl(\K_{\,2\,2}\,\phi_{\,q^{\,2}}^{\,\flat}\bigr)_{\,j_{\,1},\,j_{\,2}+1}\,\Bigr]\ +\\
  \frac{1}{4}\;\Bigl[\,\bigl(\K_{\,2\,1}\,\phi_{\,q^{\,1}}^{\,\natural}\bigr)_{\,j_{\,1},\,j_{\,2}+1}\ +\ \bigl(\K_{\,2\,1}\,\phi_{\,q^{\,1}}^{\,\flat}\bigr)_{\,j_{\,1},\,j_{\,2}+1}\ +\ \bigl(\K_{\,2\,1}\,\phi_{\,q^{\,1}}^{\,\natural}\bigr)_{\,j_{\,1},\,j_{\,2}}\ +\ \bigl(\K_{\,2\,1}\,\phi_{\,q^{\,1}}^{\,\flat}\bigr)_{\,j_{\,1},\,j_{\,2}}\,\Bigr]\biggr\}\,h_{\,1}\\
  -\ \biggl\{\frac{1}{2}\;\Bigl[\,\bigl(\K_{\,2\,2}\,\phi_{\,q^{\,2}}^{\,\natural}\bigr)_{\,j_{\,1},\,j_{\,2}-1}\ +\ \bigl(\K_{\,2\,2}\,\phi_{\,q^{\,2}}^{\,\flat}\bigr)_{\,j_{\,1},\,j_{\,2}}\,\Bigr]\ +\\
  \frac{1}{4}\;\Bigl[\,\bigl(\K_{\,2\,1}\,\phi_{\,q^{\,1}}^{\,\natural}\bigr)_{\,j_{\,1},\,j_{\,2}}\ +\ \bigl(\K_{\,2\,1}\,\phi_{\,q^{\,1}}^{\,\flat}\bigr)_{\,j_{\,1},\,j_{\,2}}\ +\ \bigl(\K_{\,2\,1}\,\phi_{\,q^{\,1}}^{\,\natural}\bigr)_{\,j_{\,1},\,j_{\,2}-1}\ +\ \bigl(\K_{\,2\,1}\,\phi_{\,q^{\,1}}^{\,\flat}\bigr)_{\,j_{\,1},\,j_{\,2}-1}\,\Bigr]\biggr\}\,h_{\,1}\ =\ 0\,.
\end{multline*}
By dividing the both parts of the last equation by the area $h_{\,1}\cdot h_{\,2}$ of the rectangle $A\,B\,C\,D\,$, we obtain more compact difference equations in any interior node $\q_{\,\j}\ \in\ \Q_{\,h}^{\,0}\,$:
\begin{multline*}
  \frac{1}{2}\;\Bigl\{\bigl(\K_{\,1\,1}\,\phi_{\,q^{\,1}}^{\,\natural}\bigr)_{\,q^{\,1}}^{\,\flat}\ +\ \bigl(\K_{\,1\,1}\,\phi_{\,q^{\,1}}^{\,\flat}\bigr)_{\,q^{\,1}}^{\,\natural}\Bigr\}_{\,\j}\ +\\ 
  \frac{1}{4}\;\Bigl\{\bigl(\K_{\,1\,2}\,\phi_{\,q^{\,2}}^{\,\natural}\ +\ \K_{\,1\,2}\,\phi_{\,q^{\,2}}^{\,\flat}\bigr)_{\,q^{\,1}}^{\,\natural}\ +\ \bigl(\K_{\,1\,2}\,\phi_{\,q^{\,2}}^{\,\natural}\ +\ \K_{\,1\,2}\,\phi_{\,q^{\,2}}^{\,\flat}\bigr)_{\,q^{\,1}}^{\,\flat}\Bigr\}_{\,\j}\ + \\
  \frac{1}{2}\;\Bigl\{\bigl(\K_{\,2\,2}\,\phi_{\,q^{\,2}}^{\,\natural}\bigr)_{\,q^{\,2}}^{\,\flat}\ +\ \bigl(\K_{\,2\,2}\,\phi_{\,q^{\,2}}^{\,\flat}\bigr)_{\,q^{\,2}}^{\,\natural}\Bigr\}_{\,\j}\ +\\
  \frac{1}{4}\;\Bigl\{\bigl(\K_{\,2\,1}\,\phi_{\,q^{\,1}}^{\,\natural}\ +\ \K_{\,2\,1}\,\phi_{\,q^{\,1}}^{\,\flat}\bigr)_{\,q^{\,2}}^{\,\natural}\ +\ \bigl(\K_{\,2\,1}\,\phi_{\,q^{\,1}}^{\,\natural}\ +\ \K_{\,2\,1}\,\phi_{\,q^{\,1}}^{\,\flat}\bigr)_{\,q^{\,2}}^{\,\flat}\Bigr\}_{\,\j}\ =\ 0\,. 
\end{multline*}
These equations approximate the original differential equation \eqref{eq:laplaceQ} to the order $\O\bigl(h_{\,1}^{\,2}\ +\ h_{\,2}^{\,2}\bigr)$ provided that solutions are sufficiently smooth. This is achieved by using the $9-$point stencil shown in Figure~\ref{fig:stencil}(\textit{a}).

%%% ------------------------------------------------------------------------ %%%

\subsection{Approximation of boundary conditions}
\label{sec:bc}

For the sake of manuscript conciseness we explain the boundary conditions treatment on the example of condition \eqref{eq:lwallQ} imposed on the fixed\footnote{The left wall might be moving in the physical space. However, it becomes fixed in the transformed domain.} boundary $\gamma_{\,\ell}\,$. Let $\gamma_{\,\ell}^{\,h}\ \eqdef\ \bigl\{\q_{\,0,\,j_{\,2}}\ \in\ \gamma_{\,h}\ \vert\ j_{\,2}\ =\ 1,\,2,\,\ldots,\,N_2-1\bigr\}$ be the set of `interior' grid nodes belonging to the left boundary $\gamma_{\,\ell}$ except two angular nodes $\q_{\,0,\,N_2}$ and $\q_{\,0,\,0}\,$, which deserve a special consideration. For a boundary node $\q_{\,\j}\ \in\ \gamma_{\,\ell}^{\,h}$ we choose the integration contour as it is shown in Figure~\ref{fig:stencil}(\textit{b}). The quadrature rule for the integral over side $B\,C$ is given in equation \eqref{eq:intBC} (one has just to substitute $j_{\,1}\ \hookleftarrow\ 0$). On the segment $A\,D$ we have to use the boundary condition \eqref{eq:lwallQ} that we can rewrite in a compact form:
\begin{equation}\label{eq:bc1}
  \digamma^{\,1}\,\bigr\vert_{\q\,\in\,\gamma_{\,\ell}}\ =\ y_{\,q^{\,2}}\,\,\dot{s}\ \defeq\ \mu^{\,0}\,.
\end{equation}
Using this information, we have directly the following approximation:
\begin{equation*}
  \mint{-}_{AD}\digamma^{\,1}\;\ud q^{\,2}\ \approxeq\ \mu_{\,j_{\,2}}^{\,0}\,h_{\,2}\,.
\end{equation*}
For the side $D\,C$ the formula of rectangles yield the following approximation of the integral:
\begin{multline*}
  \mint{-}_{DC}\digamma^{\,2}\;\ud q^{\,1}\ \approxeq\ \biggl\{\frac{\K_{\,2\,2}\,(0)\ +\ \K_{\,2\,2}\,(4)}{2}\cdot\frac{\phi_{\,4}\ -\ \phi_{\,0}}{h_{\,2}}\ +\\
  \frac{1}{2}\;\Bigl[\,\K_{\,2\,1}\,(4)\cdot\frac{\phi_{\,7}\ -\ \phi_{\,4}}{h_{\,1}}\ +\ \K_{\,2\,1}\,(0)\cdot\frac{\phi_{\,3}\ -\ \phi_{\,0}}{h_{\,1}}\,\Bigr]\biggr\}\;\frac{h_{\,1}}{2}\ = \\
  \biggl\{\frac{1}{2}\;\Bigl[\,\bigl(\K_{\,2\,2}\,\phi_{\,q^{\,2}}^{\,\natural}\bigr)_{\,j_{\,1},\,j_{\,2}}\ +\ \bigl(\K_{\,2\,2}\,\phi_{\,q^{\,2}}^{\,\flat}\bigr)_{\,j_{\,1},\,j_{\,2}+1}\,\Bigr]\ +\ \frac{1}{2}\;\Bigl[\,\bigl(\K_{\,2\,1}\,\phi_{\,q^{\,1}}^{\,\natural}\bigr)_{\,j_{\,1},\,j_{\,2}+1}\ +\ \bigl(\K_{\,2\,1}\,\phi_{\,q^{\,1}}^{\,\natural}\bigr)_{\,j_{\,1},\,j_{\,2}}\,\Bigr]\biggr\}\;\frac{h_{\,1}}{2}\,.
\end{multline*}
For the segment $A\,B$ we give only the final result without intermediate calculations:
\begin{multline*}
  \mint{-}_{AB}\digamma^{\,2}\;\ud q^{\,1}\ \approxeq\ \biggl\{\frac{1}{2}\;\Bigl[\,\bigl(\K_{\,2\,2}\,\phi_{\,q^{\,2}}^{\,\natural}\bigr)_{\,j_{\,1},\,j_{\,2}-1}\ +\ \bigl(\K_{\,2\,2}\,\phi_{\,q^{\,2}}^{\,\flat}\bigr)_{\,j_{\,1},\,j_{\,2}}\,\Bigr]\ +\\
  \frac{1}{2}\;\Bigl[\,\bigl(\K_{\,2\,1}\,\phi_{\,q^{\,1}}^{\,\natural}\bigr)_{\,j_{\,1},\,j_{\,2}}\ +\ \bigl(\K_{\,2\,1}\,\phi_{\,q^{\,1}}^{\,\natural}\bigr)_{\,j_{\,1},\,j_{\,2}-1}\,\Bigr]\biggr\}\;\frac{h_{\,1}}{2}\,.
\end{multline*}
Now we substitute all these approximations into the integral equation \eqref{eq:int1} and after dividing its both parts by $h_{\,2}$, we obtain the following difference relation in any node $\gamma_{\,\ell}^{\,h}\,$:
\begin{multline}\label{eq:45}
  \bigl(\K_{\,1\,1}\bigr)_{\,j_{\,1}+\frac{1}{2},\,j_{\,2}}\,\bigl(\phi_{\,q^{\,1}}^{\,\natural}\bigr)_{\,\j}\ +\ \frac{1}{4}\;\Bigl[\,\bigl(\K_{\,1\,2}\,\phi_{\,q^{\,2}}^{\,\natural}\ +\ \K_{\,1\,2}\,\phi_{\,q^{\,2}}^{\,\flat}\bigr)_{\,j_{\,1}+1,\,j_{\,2}}\ +\ \bigl(\K_{\,1\,2}\,\phi_{\,q^{\,2}}^{\,\natural}\ +\ \K_{\,1\,2}\,\phi_{\,q^{\,2}}^{\,\flat}\bigr)_{\,\j}\,\Bigr]\ + \\
  \biggl\{\frac{1}{2}\;\Bigl[\,\bigl(\K_{\,2\,2}\,\phi_{\,q^{\,2}}^{\,\natural}\bigr)_{\,q^{\,2}}^{\,\flat}\ +\ \bigl(\K_{\,2\,2}\,\phi_{\,q^{\,2}}^{\,\flat}\bigr)_{\,q^{\,2}}^{\,\natural}\,\Bigr]\ +\ \frac{1}{2}\;\Bigl[\,\bigl(\K_{\,2\,1}\,\phi_{\,q^{\,1}}^{\,\natural}\bigr)_{\,q^{\,2}}^{\,\natural}\ +\ \bigl(\K_{\,2\,1}\,\phi_{\,q^{\,1}}^{\,\natural}\bigr)_{\,q^{\,2}}^{\,\flat}\,\Bigr]\biggr\}_{\,\j}\;\frac{h_{\,1}}{2}\ =\ \mu_{\,j_{\,2}}^{\,0}\,.
\end{multline}
Here we introduced the following notation:
\begin{equation*}
  \bigl(\,\K_{\,1\,1}\bigr)_{\,j_{\,1}+\frac{1}{2},\,j_{\,2}}\ \eqdef\ \frac{\bigl(\,\K_{\,1\,1}\bigr)_{\,j_{\,1}+1,\,j_{\,2}}\ +\ \bigl(\,\K_{\,1\,1}\bigr)_{\,j_{\,1},\,j_{\,2}}}{2}\,.
\end{equation*}
This difference equation is inhomogeneous because of the wall motion. If the (left) wall is fixed, then $\mu^{\,0}\ \equiv\ 0$ and this relation becomes homogeneous.

\begin{remark}
Please, notice that in all formulas in this Section $j_{\,1}\ \equiv\ 0\,$.
\end{remark}

%%% ------------------------------------------------------------------------ %%%

\subsubsection{Consistency}

In this Section we show that equation \eqref{eq:45} approximates the boundary condition \eqref{eq:lwallQ} to the order $\O(h_{\,1}^{\,2}\ +\ h_{\,2}^{\,2})\,$. Indeed, let us substitute a smooth solution $\phi\,(q^{\,1},\,q^{\,2},\,t)$ into equation \eqref{eq:45} and perform local \textsc{Taylor} expansions. Then, the consistency error $\psi$ can be computed:
\begin{multline*}
  \psi_{\,0,\,j_{\,2}}\ =\ \Bigl(\K_{\,1\,1}\;\pd{\phi}{q^{\,1}}\Bigr)(E)\ +\ \O\bigl(h_{\,1}^{\,2}\bigr)\ +\ \frac{1}{2}\;\biggl\{\Bigl(\K_{\,1\,2}\;\pd{\phi}{q^{\,2}}\Bigr)(3)\ +\ \Bigl(\K_{\,1\,2}\;\pd{\phi}{q^{\,2}}\Bigr)(0)\ +\ \O\bigl(h_{\,2}^{\,2}\bigr)\biggr\}\ +\\ \biggl\{\pd{}{q^{\,2}}\Bigl(\K_{\,2\,2}\;\pd{\phi}{q^{\,2}}\Bigr)(0)\ +\ \O\bigl(h_{\,2}^{\,2}\bigr)\biggr\}\;\frac{h_{\,1}}{2}\ +\\ \biggl\{\K_{\,2\,1}\,\Bigl[\,\pd{\phi}{q^{\,1}}\ +\ h_{\,1}\;\pd{{}^{\,2}\phi}{(q^{\,1})^{\,2}}\ +\ \O\bigl(h_{\,1}^{\,2}\bigr)\,\Bigr]\,(4)\ -\ \K_{\,2\,1}\,\Bigl[\,\pd{\phi}{q^{\,1}}\ +\ h_{\,1}\;\pd{{}^{\,2}\phi}{(q^{\,1})^{\,2}}\ +\ \O\bigl(h_{\,1}^{\,2}\bigr)\,\Bigr]\,(2)\biggr\}\;\frac{h_{\,1}}{4\,h_{\,2}}\ -\ \mu_{\,j_{\,2}}^{\,0}\\ =\ \K_{\,1\,1}\;\pd{\phi}{q^{\,1}}\ +\ \frac{h_{\,1}}{2}\;\pd{}{q^{\,1}}\Bigl(\K_{\,1\,1}\,\pd{\phi}{q^{\,1}}\Bigr)\ +\ \K_{\,1\,2}\;\pd{\phi}{q^{\,2}}\ +\ \frac{h_{\,1}}{2}\;\pd{}{q^{\,1}}\Bigl(\K_{\,1\,2}\;\pd{\phi}{q^{\,2}}\Bigr)\ + \\
  \frac{h_{\,1}}{2}\;\pd{}{q^{\,2}}\Bigl(\K_{\,2\,2}\;\pd{\phi}{q^{\,2}}\Bigr)\ +\ \frac{h_{\,1}}{2}\;\pd{}{q^{\,2}}\Bigl(\K_{\,2\,1}\;\pd{\phi}{q^{\,1}}\Bigr)\ -\ \mu_{\,j_{\,2}}^{\,0}\ =\\ 
  \digamma^{\,1}\ -\ \mu^{\,0}\ +\ \biggl\{\pd{\digamma^{\,1}}{q^{\,1}}\ +\ \pd{\digamma^{\,2}}{q^{\,2}}\biggr\}\;\frac{h_{\,1}}{2}\ +\ \O\bigl(h_{\,1}^{\,2}\ +\ h_{\,2}^{\,2}\bigr)\,.
\end{multline*}
In the very last expression all quantities are evaluated in the same node $\q_{\,0,\,j_{\,2}}\,$, $j_{\,2}\ =\ 1,\,2,\,\ldots,\,N_2-1\,$. Moreover, taking into account the boundary condition \eqref{eq:bc1} and assuming that \textsc{Laplace} equation \eqref{eq:laplaceQ} (see also Remark~\ref{rem:lapl}) is fulfilled up to the boundary $\gamma_{\,\ell}\,$, we obtain that
\begin{equation*}
  \psi_{\,0,\,j_{\,2}}\ =\ \O\bigl(h_{\,1}^{\,2}\ +\ h_{\,2}^{\,2}\bigr)\,.
\end{equation*}
Thus, the boundary condition is approximated to the second order in space. In a similar way we can construct finite difference approximations in boundary nodes $\gamma_{\,b}^{\,h}\ \eqdef\ \bigl\{\q_{\,j_{\,1},\,0}\ \in\ \gamma^{\,h}\ \vert\ j_{\,1}\ =\ 1,\,2,\,\ldots,\,N_1-1\bigr\}$ and $\gamma_{\,r}^{\,h}\ \eqdef\ \bigl\{\q_{\,N_1,\,j_{\,2}}\ \in\ \gamma^{\,h}\ \vert\ j_{\,2}\ =\ 1,\,2,\,\ldots,\,N_2-1\bigr\}\,$. In the derivation of these boundary conditions one has to use the following impermeability conditions (compare with conditions \eqref{eq:bottomQ} and \eqref{eq:rwallQ}):
\begin{equation*}
  \digamma^{\,2}\,\bigl\vert_{\,\q\,\in\,\gamma_{\,b}}\ =\ 0\,, \qquad
  \digamma^{\,1}\,\bigl\vert_{\,\q\,\in\,\gamma_{\,r}}\ =\ 0\,.
\end{equation*}
All these difference equations have six point stencils to achieve the second order accuracy: three nodes lie in the interior of the domain $\Q^{\,0}_{\,h}$ and three on the boundary $\gamma^{\,h}\,$. Finally, there are also two angular nodes $\q_{\,0,\,0}$ and $\q_{\,N_1,\,0}\,$. The stencils contain four nodes: one interior and three on the boundaries.

%%% ------------------------------------------------------------------------ %%%

\subsection{Difference operator}

In order to determine the discrete values of the velocity potential $\bigl\{\phi_{\,\j}\bigr\}\,$, we constructed above a finite difference problem with the uniform second order accuracy. In this difference problem one has to find the values of the velocity potential $\phi_{\,\j}$ in all nodes except those on the free surface, where the \textsc{Dirichlet}-type condition is imposed:
\begin{equation*}
  \phi_{\,j_{\,1},\,N_2}\ =\ \mu_{\,j_{\,1}}^{\,s}\,, \qquad
  j_{\,1}\ =\ 0,\,1,\,\ldots,\,N_1\,.
\end{equation*}
In this way we have to determine \emph{only} $\bigl(N_1\ +\ 1\bigr)\cdot N_2$ the values $\bigl\{\phi_{\,\j}\bigr\}_{\,\j}$ in the nodes $\bigl\{\x_{\,j_{\,1},\,j_{\,2}}\bigr\}$ with $j_{\,1}\ =\ 0,\,1,\,\ldots,\,N_1$ and $j_{\,2}\ =\ 0,\,1,\,\ldots,\,N_2-1\,$. In the resulting problem we have precisely $\bigl(N_1\ +\ 1\bigr)\cdot N_2$ equations.

In order to study theoretically the finite difference problem, it will be convenient to make a change of variables in order to get homogeneous \textsc{Dirichlet}'s boundary conditions in nodes $\q_{\,j_{\,1},\,N_2}\,$, $j_{\,1}\ =\ 0,\,1,\,\ldots,\,N_1\,$. For this purpose we introduce a new grid function
\begin{equation*}
  \phit_{\,j_{\,1},\,j_{\,2}}\ \eqdef\ \begin{dcases}
    \ \mu_{\,j_{\,1}}^{\,s}\,, & \quad j_{\,1}\ =\ 0,\,1,\,\ldots,\,N_1,\, \quad j_{\,2}\ =\ N_2\,, \\
    \ 0\,, & \quad j_{\,1}\ =\ 0,\,1,\,\ldots,\,N_1,\, \quad j_{\,2}\ <\ N_2\,,
  \end{dcases}
\end{equation*}
and a new dependent variable
\begin{equation*}
  \phin\ \eqdef\ \phi\ -\ \phit\,.
\end{equation*}
Then, in the nodes $\q_{\,\j}\ \in\ \gamma^{\,h}_{\,s}$ on the free surface the function $\phin_{\,\j}$ will take zero values. The difference problem for $\phin$ will differ from the one for $\phi$ by inhomogeneous right hand sides in the nodes close to the free surface pre-image $\gamma^{\,h}_{\,s}\,$. For the new problem we introduce the operator notation
\begin{equation}\label{eq:46}
  \Lambda\,\phin\ =\ \nuv\,,
\end{equation}
where $\nuv$ is the vector of the right hand sides. The difference operator $\Lambda$ can be decomposed in sub-operators for our convenience:
\begin{equation*}
  \Lambda\ \equiv\ \Lambda_{\,1}\ +\ \Lambda_{\,2}\,,
\end{equation*}
and on one more level:
\begin{equation*}
  \Lambda_{\,1}\ \equiv\ \Lambda_{\,1\,1}\ +\ \Lambda_{\,1\,2}\,, \qquad
  \Lambda_{\,2}\ \equiv\ \Lambda_{\,2\,1}\ +\ \Lambda_{\,2\,2}\,.
\end{equation*}
Operators $\bigl\{\Lambda_{\,\alpha\,\beta}\bigr\}_{\,1\,\leq\,\alpha,\,\beta\,\leq\,2}$ are defined below as follows.
\begin{equation}\label{eq:lambda11}
  \Lambda_{\,1\,1}\,\phin_{\,\j}\ \eqdef\ \begin{dcases}
  \ \frac{1}{2}\Bigl[\,\bigl(\K_{\,1\,1}\,\phin_{\,q^{\,1}}^{\,\natural}\bigr)_{\,q^{\,1}}^{\,\flat}\ +\ \bigl(\K_{\,1\,1}\,\phin_{\,q^{\,1}}^{\,\flat}\bigr)_{\,q^{\,1}}^{\,\natural}\,\Bigr]_{\,\j}\,, & 0\ <\ j_{\,1}\ <\ N_1\,, \quad 0\ \leq\ j_{\,2}\ <\ N_2\,, \\
  \ \frac{2}{h_{\,1}}\;\bigl(\K_{\,1\,1}\bigr)_{\frac{1}{2},\,j_{\,2}}\,\bigl(\phin_{\,q^{\,1}}^{\,\natural}\bigr)_{\,0,\,j_{\,2}}\,, & j_{\,1}\ =\ 0\,, \quad 0\ \leq\ j_{\,2}\ < N_2\,, \\
  \ -\frac{2}{h_{\,1}}\;\bigl(\K_{\,1\,1}\bigr)_{N_1-\frac{1}{2},\,j_{\,2}}\,\bigl(\phin_{\,q^{\,1}}^{\,\flat}\bigr)_{\,N_1,\,j_{\,2}}\,, & j_{\,1}\ =\ N_1\,, \quad 0\ \leq\ j_{\,2}\ < N_2\,.
  \end{dcases}
\end{equation}
\begin{equation}\label{eq:lambda12}
  \Lambda_{\,1\,2}\,\phin_{\,\j}\ \eqdef\ \begin{dcases}
  \ \frac{1}{4}\;\Bigl[\,\bigl(\K_{\,1\,2}\,\phin_{\,q^{\,2}}^{\,\natural}\ +\ \K_{\,1\,2}\,\phin_{\,q^{\,2}}^{\,\flat}\bigr)_{\,q^{\,1}}^{\,\natural}\ +\ \bigl(\K_{\,1\,2}\,\phin_{\,q^{\,2}}^{\,\natural}\ +\ \K_{\,1\,2}\,\phin_{\,q^{\,2}}^{\,\flat}\bigr)_{\,q^{\,1}}^{\,\flat}\,\Bigr]_{\,\j}\,, & \\
  \qquad\qquad\qquad 0\ <\ j_{\,1}\ <\ N_1\,, \quad 0\ <\ j_{\,2}\ <\ N_2\,, & \\
  \ \frac{1}{2\,h_{\,1}}\;\Bigl[\,\bigl(\K_{\,1\,2}\,\phin_{\,q^{\,2}}^{\,\natural}\ +\ \K_{\,1\,2}\,\phin_{\,q^{\,2}}^{\,\flat}\bigr)_{\,1,\,j_{\,2}}\ +\ \bigl(\K_{\,1\,2}\,\phin_{\,q^{\,2}}^{\,\natural}\ +\ \K_{\,1\,2}\,\phin_{\,q^{\,2}}^{\,\flat}\bigr)_{\,0,\,j_{\,2}}\,\Bigr]\,, & \\
  \qquad\qquad\qquad\quad j_{\,1}\ =\ 0\,, \quad 0\ <\ j_{\,2}\ <\ N_2\,, & \\
  \ -\frac{1}{2\,h_{\,1}}\;\Bigl[\,\bigl(\K_{\,1\,2}\,\phin_{\,q^{\,2}}^{\,\natural}\ +\ \K_{\,1\,2}\,\phin_{\,q^{\,2}}^{\,\flat}\bigr)_{\,N_1-1,\,j_{\,2}}\ +\ \bigl(\K_{\,1\,2}\,\phin_{\,q^{\,2}}^{\,\natural}\ +\ \K_{\,1\,2}\,\phin_{\,q^{\,2}}^{\,\flat}\bigr)_{\,N_1,\,j_{\,2}}\,\Bigr]\,, & \\
  \qquad\qquad\qquad\quad j_{\,1}\ =\ N_1\,, \quad 0\ <\ j_{\,2}\ <\ N_2\,, & \\
  \ \frac{1}{2}\;\Bigl[\,\bigl(\K_{\,1\,2}\,\phin_{\,q^{\,2}}^{\,\natural}\bigr)_{\,q^{\,1}}^{\,\natural}\ +\ \bigl(\K_{\,1\,2}\,\phin_{\,q^{\,2}}^{\,\natural}\bigr)_{\,q^{\,1}}^{\,\flat}\,\Bigr]_{\,j_{\,1},\,0}\,, \quad 0\ <\ j_{\,1}\ <\ N_1\,, \quad j_{\,2}\ =\ 0\,,& \\
  \ \frac{1}{h_{\,1}}\;\Bigl[\,\bigl(\K_{\,1\,2}\,\phin_{\,q^{\,2}}^{\,\natural}\bigr)_{\,1,\,0}\ +\ \bigl(\K_{\,1\,2}\,\phin_{\,q^{\,2}}^{\,\natural}\bigr)_{\,0,\,0}\,\Bigr]\,, \qquad j_{\,1}\ =\ 0\,, \quad j_{\,2}\ =\ 0\,,& \\
  \ -\frac{1}{h_{\,1}}\;\Bigl[\,\bigl(\K_{\,1\,2}\,\phin_{\,q^{\,2}}^{\,\natural}\bigr)_{\,N_1-1,\,0}\ +\ \bigl(\K_{\,1\,2}\,\phin_{\,q^{\,2}}^{\,\natural}\bigr)_{\,N_1,\,0}\,\Bigr]\,, \qquad j_{\,1}\ =\ N_1\,, \quad j_{\,2}\ =\ 0\,,&
  \end{dcases}
\end{equation}
\begin{equation}\label{eq:lambda21}
  \Lambda_{\,2\,1}\,\phin_{\,\j}\ \eqdef\ \begin{dcases}
  \ \frac{1}{4}\;\Bigl[\,\bigl(\K_{\,2\,1}\,\phin_{\,q^{\,1}}^{\,\natural}\ +\ \K_{\,2\,1}\,\phin_{\,q^{\,1}}^{\,\flat}\bigr)_{\,q^{\,2}}^{\,\natural}\ +\ \bigl(\K_{\,2\,1}\,\phin_{\,q^{\,1}}^{\,\natural}\ +\ \K_{\,2\,1}\,\phin_{\,q^{\,1}}^{\,\flat}\bigr)_{\,q^{\,2}}^{\,\flat}\,\Bigr]_{\,\j}\,, & \\
  \qquad\qquad\qquad 0\ <\ j_{\,1}\ <\ N_1\,, \quad 0\ <\ j_{\,2}\ <\ N_2\,, & \\
  \ \frac{1}{2}\;\Bigl[\,\bigl(\K_{\,2\,1}\,\phin_{\,q^{\,1}}^{\,\natural}\bigr)_{\,q^{\,2}}^{\,\natural}\ +\ \bigl(\K_{\,2\,1}\,\phin_{\,q^{\,1}}^{\,\natural}\bigr)_{\,q^{\,2}}^{\,\flat}\,\Bigr]_{\,0,\,j_{\,2}}\,, \quad j_{\,1}\ =\ 0\,, \quad 0\ <\ j_{\,2}\ <\ N_2\,, & \\
  \ \frac{1}{2}\;\Bigl[\,\bigl(\K_{\,2\,1}\,\phin_{\,q^{\,1}}^{\,\flat}\bigr)_{\,q^{\,2}}^{\,\natural}\ +\ \bigl(\K_{\,2\,1}\,\phin_{\,q^{\,1}}^{\,\flat}\bigr)_{\,q^{\,2}}^{\,\flat}\,\Bigr]_{\,N_1,\,j_{\,2}}\,, \quad j_{\,1}\ =\ N_1\,, \quad 0\ <\ j_{\,2}\ <\ N_2\,, & \\
  \ \frac{1}{2\,h_{\,2}}\;\Bigl[\,\bigl(\K_{\,2\,1}\,\phin_{\,q^{\,1}}^{\,\natural}\ +\ \K_{\,2\,1}\,\phin_{\,q^{\,1}}^{\,\flat}\bigr)_{\,j_{\,1},\,0}\ +\ \bigl(\K_{\,2\,1}\,\phin_{\,q^{\,1}}^{\,\natural}\ +\ \K_{\,2\,1}\,\phin_{\,q^{\,1}}^{\,\flat}\bigr)_{\,j_{\,1},\,1}\,\Bigr]\,, & \\
  \qquad\qquad\qquad\quad 0\ <\ j_{\,1}\ <\ N_1\,, \quad j_{\,2}\ =\ 0\,, & \\
  \ \frac{1}{h_{\,2}}\;\Bigl[\,\bigl(\K_{\,2\,1}\,\phin_{\,q^{\,1}}^{\,\natural}\bigr)_{\,0,\,1}\ +\ \bigl(\K_{\,2\,1}\,\phin_{\,q^{\,1}}^{\,\natural}\bigr)_{\,0,\,0}\,\Bigr]\,, \qquad j_{\,1}\ =\ 0\,, \quad j_{\,2}\ =\ 0\,, &\\
  \ \frac{1}{h_{\,2}}\;\Bigl[\,\bigl(\K_{\,2\,1}\,\phin_{\,q^{\,1}}^{\,\flat}\bigr)_{\,N_1,\,0}\ +\ \bigl(\K_{\,2\,1}\,\phin_{\,q^{\,1}}^{\,\flat}\bigr)_{\,N_1,\,1}\,\Bigr]\,, \qquad j_{\,1}\ =\ N_1\,, \quad j_{\,2}\ =\ 0\,,
  \end{dcases}
\end{equation}
\begin{equation}\label{eq:lambda22}
  \Lambda_{\,2\,2}\,\phin_{\,\j}\ \eqdef\ \begin{dcases}
  \ \frac{1}{2}\;\Bigl[\,\bigl(\K_{\,2\,2}\,\phin_{\,q^{\,2}}^{\,\natural}\bigr)_{\,q^{\,2}}^{\,\flat}\ +\ \bigl(\K_{\,2\,2}\,\phin_{\,q^{\,2}}^{\,\flat}\bigr)_{\,q^{\,2}}^{\,\natural}\,\Bigr]_{\,\j}\,, & 0\ \leq\ j_{\,1}\ \leq\ N_1\,, \quad 0\ <\ j_{\,2}\ <\ N_2\,, \\
  \ \frac{2}{h_{\,2}}\;\bigl(\K_{\,2\,2}\bigr)_{j_{\,1},\,\frac{1}{2}}\,\bigl(\phi_{\,q^{\,2}}^{\,\natural}\bigr)_{\,j_{\,1},\,0}\,, & 0\ \leq\ j_{\,1}\ \leq\ N_1\,, \quad j_{\,2}\ =\ 0\,.
  \end{dcases}
\end{equation}
Above we used the notation
\begin{align}\label{eq:54a}
  \bigl(\K_{\,\alpha\,\alpha}\bigr)_{j_{\,1}+\frac{1}{2},\,j_{\,2}}\ &\eqdef\ \frac{\bigl(\K_{\,\alpha\,\alpha}\bigr)_{j_{\,1},\,j_{\,2}}\ +\ \bigl(\K_{\,\alpha\,\alpha}\bigr)_{j_{\,1}+1,\,j_{\,2}}}{2}\ \equiv\ \K_{\,\alpha\,\alpha}\,\bigr\vert_{\,\mathrm{E}}\,, \\
  \bigl(\K_{\,\beta\,\beta}\bigr)_{j_{\,1},\,j_{\,2}+\frac{1}{2}}\ &\eqdef\ \frac{\bigl(\K_{\,\beta\,\beta}\bigr)_{j_{\,1},\,j_{\,2}}\ +\ \bigl(\K_{\,\beta\,\beta}\bigr)_{j_{\,1},\,j_{\,2}+1}}{2}\ \equiv\ \K_{\,\beta\,\beta}\,\bigr\vert_{\,\mathrm{N}}\,,\label{eq:54b} \\
  \bigl(\K_{\,\alpha\,\alpha}\bigr)_{j_{\,1}-\frac{1}{2},\,j_{\,2}}\ &\eqdef\ \frac{\bigl(\K_{\,\alpha\,\alpha}\bigr)_{j_{\,1},\,j_{\,2}}\ +\ \bigl(\K_{\,\alpha\,\alpha}\bigr)_{j_{\,1}-1,\,j_{\,2}}}{2}\ \equiv\ \K_{\,\alpha\,\alpha}\,\bigr\vert_{\,\mathrm{W}}\,,\label{eq:54c} \\
  \bigl(\K_{\,\beta\,\beta}\bigr)_{j_{\,1},\,j_{\,2}-\frac{1}{2}}\ &\eqdef\ \frac{\bigl(\K_{\,\beta\,\beta}\bigr)_{j_{\,1},\,j_{\,2}}\ +\ \bigl(\K_{\,\beta\,\beta}\bigr)_{j_{\,1},\,j_{\,2}-1}}{2}\ \equiv\ \K_{\,\beta\,\beta}\,\bigr\vert_{\,\mathrm{S}}\,,\label{eq:54d}
\end{align}
with $\alpha,\,\beta\ =\ 1,\,2\,$. Moreover, all the operators vanish in the nodes of the upper boundary $\gamma_{\,s}^{\,h}\,$, \ie
\begin{equation*}
  \Lambda_{\,\alpha\,\beta}\,\phin_{\,\j}\ \equiv\ 0\,, \qquad \forall\q_{\,\j}\ \in\ \gamma_{\,s}^{\,h}\,, \qquad \alpha,\,\beta\ =\ 1,\,2\,.
\end{equation*}
Below we study the properties of the operator $\Lambda$ to show that the problem \eqref{eq:46} is well-posed.

%%% ------------------------------------------------------------------------ %%%

\subsection{Mathematical study of the finite difference problem for the velocity potential}
\label{sec:study}

Let us define by $\H$ the set of functions defined on the grid $\Q^{\,0}_{\,h}$ and which take zero values on $\gamma_{\,s}^{\,h}\,$, \ie
\begin{equation*}
  \H\ \eqdef\ \bigl\{\phin: \Q^{\,0}_{\,h}\ \mapsto\ \R\ \vert\ \phin\,\bigr\vert_{\q_{\,\j}\ \in\ \gamma_{\,s}^{\,h}}\ \equiv\ 0\,\bigr\}\,.
\end{equation*}
We introduce also on space $\H$ the scalar product
\begin{multline*}
  \scalh{u}{v}\ \eqdef\ \sum_{j_{\,1}\,=\,1}^{N_1\,-\,1}\;\sum_{j_{\,2}\,=\,1}^{N_2\,-\,1} u_{\,j_{\,1},\,j_{\,2}}\,v_{\,j_{\,1},\,j_{\,2}}\,h_1\,h_2\ +\ \sum_{j_{\,2}\,=\,1}^{N_2\,-\,1} u_{\,0,\,j_{\,2}}\,v_{\,0,\,j_{\,2}}\;\frac{h_1\,h_2}{2}\ +\\ 
  \sum_{j_{\,1}\,=\,1}^{N_1\,-\,1} u_{\,j_{\,1},\,0}\,v_{\,j_{\,1},\,0}\;\frac{h_1\,h_2}{2}\ +\ \sum_{j_{\,2}\,=\,1}^{N_2\,-\,1} u_{\,N_1,\,j_{\,2}}\,v_{\,N_1,\,j_{\,2}}\;\frac{h_1\,h_2}{2}\ +\\ 
  u_{\,0,\,0}\,v_{\,0,\,0}\;\frac{h_1\,h_2}{4}\ +\ u_{\,N_1,\,0}\,v_{\,N_1,\,0}\;\frac{h_1\,h_2}{4}\,.
\end{multline*}
This scalar product induces a natural norm
\begin{equation*}
  \normh{u}^{\,2}\ \eqdef\ \scalh{u}{u}\,.
\end{equation*}
Let us define a new operator $\Aa\ \eqdef\ -\Lambda$ and study its properties. The operator $\Lambda$ was introduced in \eqref{eq:46} and rigorously defined earlier.

For any fixed value of index $j_{\,2}\,$, a function $u\ \in\ \H$ can be considered as a function being defined on a 1D grid:
\begin{equation*}
  \Q_{\,h}^{\,1}\ \eqdef\ \bigl\{q_{\,j_{\,1}}^{\,1}\ \in\ [\,0,\,1\,]\ \vert\ j_{\,1}\ =\ 0,\,1,\,\ldots,\, N_1\bigr\}\,.
\end{equation*}
The class of all grid functions defined on $\Q_{\,h}^{\,1}$ will be denoted by $\H_{\,1}\,$. For two arbitrary functions $u,\,v\ \in\ \H$ we define the scalar product on the layer $j_{\,2}$ as
\begin{equation}\label{eq:57}
  \scalhh{u}{v}\ \eqdef\ \sum_{j_{\,1}\,=\,1}^{N_1\,-\,1} u_{\,j_{\,1},\,j_{\,2}}\,v_{\,j_{\,1},\,j_{\,2}}\,h_{\,1}\ +\ u_{\,0,\,j_{\,2}}\,v_{\,0,\,j_{\,2}}\;\frac{h_{\,1}}{2}\ +\ u_{\,N_{\,1},\,j_{\,2}}\,v_{\,N_{\,1},\,j_{\,2}}\;\frac{h_{\,1}}{2}\,.
\end{equation}
Then, it is not difficult to see that
\begin{equation}\label{eq:56}
  \scalh{u}{v}\ \equiv\ \sum_{j_{\,2}\,=\,1}^{N_{\,2}\,-\,1}\scalhh{u}{v}\,h_{\,2}\ +\ \langle\, u,\,v\,\rangle_{\,\H_{\,1},\,0}\;\frac{h_{\,2}}{2}\,.
\end{equation}

However, we can proceed in a different way by considering instead `vertical' slices. Let $\H_{\,2}$ be the set of 1D functions defined on the grid
\begin{equation*}
  \Q_{\,h}^{\,2}\ \eqdef\ \bigl\{q_{\,j_{\,2}}^{\,2}\ \in\ [\,0,\,1\,]\ \vert\ j_{\,2}\ =\ 0,\,1,\,\ldots,\, N_2\bigr\}\,.
\end{equation*}
and taking zero value for $j_{\,2}\ =\ N_{\,2}\,$, \ie
\begin{equation*}
  \H_{\,2}\ \eqdef\ \bigl\{\phin: \Q^{\,2}_{\,h}\ \mapsto\ \R\ \vert\ \phin\,\bigr\vert_{j_{\,2}\,=\,N_{\,2}}\ \equiv\ 0\,\bigr\}\,.
\end{equation*}
Similarly, in $\H$ we can introduce a scalar product on the layer $j_{\,1}$ as follows:
\begin{equation}\label{eq:59}
  \langle\, u,\,v\,\rangle_{\,\H_{\,2},\,j_{\,1}}\ \eqdef\ \sum_{j_{\,2}\,=\,1}^{N_{\,2}\,-\,1} u_{\,j_{\,1},\,j_{\,2}}\,v_{\,j_{\,1},\,j_{\,2}}\,h_{\,2}\ +\ u_{\,j_{\,1},\,0}\,v_{\,j_{\,1},\,0}\;\frac{h_{\,2}}{2}\,,
\end{equation}
and the scalar product in $\H$ can be similarly expressed as
\begin{equation}\label{eq:58}
  \scalh{u}{v}\ \equiv\ \sum_{j_{\,1}\,=\,1}^{N_{\,1}\,-\,1}\langle\, u,\,v\,\rangle_{\,\H_{\,2},\,j_{\,1}}\,h_{\,1}\ +\ \langle\, u,\,v\,\rangle_{\,\H_{\,2},\,0}\;\frac{h_{\,1}}{2}\ +\ \langle\, u,\,v\,\rangle_{\,\H_{\,2},\,N_{\,1}}\;\frac{h_{\,1}}{2}\,.
\end{equation}
It is not difficult to show that formulas \eqref{eq:56} and \eqref{eq:58} provide the same value for $\scalh{u}{v}$ and, thus, they are nothing else than two different ways to express the scalar product in $\H\,$. Intuitively, we can understand it in the following way. Imagine that one wants to compute the sum of all elements of a matrix. To do it, one can move along the lines and then along the columns, which corresponds to \eqref{eq:58}, or vice versa \eqref{eq:56}. Since the addition is commutative, the resulting sum will be the same.

%%% ------------------------------------------------------------------------ %%%

\subsubsection{Some discrete identities}
\label{sec:discr}

Let us remind that for any discrete functions $\bigl\{\alphat_{\,j}\bigr\}_{\,j\,=\,0}^{\,N}\,$, $\bigl\{u_{\,j}\bigr\}_{\,j\,=\,0}^{\,N}$ and $\bigl\{v_{\,j}\bigr\}_{\,j\,=\,0}^{\,N}$ defined on a 1D grid $\Q_{\,h}^{\,1}\ =\ \bigl\{\,q_{\,j}\ \in\ [\,0,\,1\,]\,\bigr\}_{\,j\,=\,0}^{\,N}\,$, the first discrete \textsc{Green} identity can be shown:
\begin{equation}\label{eq:green1}
  \Bigl\langle \bigl(\alphat\,u_{\,q}^{\,\flat}\bigr)_{\,q}^{\,\natural},\,v \Bigr\rangle\ \equiv\ -\,\Bigl\langle \alphat\,u_{\,q}^{\,\flat},\, v_{\,q}^{\,\flat}\,\Bigr]\ +\ \bigl\{ \alphat\,u_{\,q}^{\,\flat}\,v \bigr\}_{\,N}\ -\ \alphat_{\,1}\bigl\{u_{\,q}^{\,\natural}\,v\bigr\}_{\,0}\,,
\end{equation}
and also
\begin{equation}\label{eq:green2}
  \Bigl\langle \bigl(\alphat\,u_{\,q}^{\,\natural}\bigr)_{\,q}^{\,\flat},\,v \Bigr\rangle\ \equiv\ -\,\Bigl[\,\alphat\,u_{\,q}^{\,\natural},\, v_{\,q}^{\,\natural}\,\Bigr\rangle\ +\ \alphat_{\,N-1}\bigl\{\,u_{\,q}^{\,\flat}\,v \bigr\}_{\,N}\ -\ \bigl\{\alphat\,u_{\,q}^{\,\natural}\,v\bigr\}_{\,0}\,,
\end{equation}
where for the sake of simplicity we introduced the following notations
\begin{equation*}
  \bigl\{ \alphat\, \betat \bigr\}_{\,j}\ \eqdef\ \alphat_{\,j}\,\betat_{\,j}\,, \qquad \bigl\{ \alphat\, \betat\, \gammat \bigr\}_{\,j}\ \eqdef\ \alphat_{\,j}\,\betat_{\,j}\,\gammat_{\,j}\,,
\end{equation*}
\begin{equation*}
  \Bigl\langle \alphat,\,\betat \Bigr\rangle\ \eqdef\ h\,\sum_{j\,=\,1}^{N-1} \alphat_{\,j}\,\betat_{\,j}\,, \qquad
  \Bigl\langle \alphat,\, \betat\,\Bigr]\ =\ h\,\sum_{j\,=\,1}^{N}\alphat_{\,j}\,\betat_{\,j}\,, \qquad
  \Bigl[\, \alphat,\, \betat\,\Bigr\rangle\ =\ h\,\sum_{j\,=\,0}^{N-1}\alphat_{\,j}\,\betat_{\,j}\,.
\end{equation*}
Moreover, the following summation by parts formulas hold as well
\begin{align}\label{eq:sum1}
  \bigl\langle u_{\,q}^{\,\natural},\,v\bigr\rangle\ &\equiv\ -\,\bigl\langle u,\,v_{\,q}^{\,\flat}\,\bigr]\ +\ \bigl\{u\,v\bigr\}_{\,N}\ -\ u_{\,1}\,v_{\,0}\,, \\
  \bigl\langle u_{\,q}^{\,\flat},\,v\bigr\rangle\ &\equiv\ -\,\bigl[\,u,\,v_{\,q}^{\,\natural}\,\bigr\rangle\ +\ u_{\,N-1}\,v_{\,N}\ -\ \bigl\{u\,v\bigr\}_{\,0}\,.\label{eq:sum2}
\end{align}

All these properties mentioned above show that the discretization proposed in the present study belongs to the class of operational finite difference schemes \cite{Samarskii1981}, which became later known as \emph{mimetic} methods \cite{Lipnikov2014}.

%%% ------------------------------------------------------------------------ %%%

\subsection{The main result}

\begin{theorem}\label{thm:1}
The operator $\Aa\ \equiv\ -\Lambda$ is self-adjoint in the space $\H\,$.
\end{theorem}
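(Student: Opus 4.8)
The plan is to exploit the block decomposition $\Lambda\ =\ \Lambda_{\,1\,1}\ +\ \Lambda_{\,1\,2}\ +\ \Lambda_{\,2\,1}\ +\ \Lambda_{\,2\,2}$ and to reduce the self-adjointness of $\Aa\ =\ -\Lambda$ to three separate claims: that each ``diagonal'' operator $\Lambda_{\,1\,1}$ and $\Lambda_{\,2\,2}$ is self-adjoint, and that the two ``mixed'' operators are mutually adjoint, i.e. $\scalh{\Lambda_{\,1\,2}\,u}{v}\ =\ \scalh{u}{\Lambda_{\,2\,1}\,v}$ for all $u,\,v\ \in\ \H$. Granting these, bilinearity of the scalar product at once yields $\scalh{\Lambda\,u}{v}\ =\ \scalh{u}{\Lambda\,v}$, and hence the same identity for $\Aa$. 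The essential inputs are the symmetry $\K_{\,1\,2}\ \equiv\ \K_{\,2\,1}$ coming from \eqref{eq:K}, the two equivalent representations \eqref{eq:56} and \eqref{eq:58} of the scalar product, and the discrete \textsc{Green} and summation-by-parts identities \eqref{eq:green1}--\eqref{eq:sum2}.

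For $\Lambda_{\,1\,1}$ I would slice the grid horizontally and use the row representation \eqref{eq:56}, so that on each fixed layer $j_{\,2}$ the operator reduces to the one-dimensional symmetric second difference $\tfrac{1}{2}\bigl[(\K_{\,1\,1}\,u_{\,q^{\,1}}^{\,\flat})_{\,q^{\,1}}^{\,\natural}\ +\ (\K_{\,1\,1}\,u_{\,q^{\,1}}^{\,\natural})_{\,q^{\,1}}^{\,\flat}\bigr]$ in the variable $q^{\,1}$. Applying \eqref{eq:green1}--\eqref{eq:green2} on that layer transfers one difference onto $v$ and produces a form $-\,\langle\,\K_{\,1\,1}\,u_{\,q^{\,1}}^{\,\flat},\,v_{\,q^{\,1}}^{\,\flat}\,]$ that is manifestly symmetric in $u$ and $v$. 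The whole purpose of the special definitions of $\Lambda_{\,1\,1}$ at $j_{\,1}\ =\ 0$ and $j_{\,1}\ =\ N_{\,1}$ (with the factors $2/h_{\,1}$) together with the half-weights in \eqref{eq:57} is that the endpoint contributions in the \textsc{Green} identities are reproduced exactly, leaving no unbalanced boundary term. The operator $\Lambda_{\,2\,2}$ is handled identically after switching to the column representation \eqref{eq:58}: the homogeneous \textsc{Dirichlet} condition on $\gamma_{\,s}^{\,h}$ annihilates the top boundary terms at $j_{\,2}\ =\ N_{\,2}$, while the special bottom definition of $\Lambda_{\,2\,2}$ at $j_{\,2}\ =\ 0$ absorbs the remaining endpoint term.

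For the mixed part I would show that $\Lambda_{\,2\,1}$ is the formal adjoint of $\Lambda_{\,1\,2}$. Here $\Lambda_{\,1\,2}$ carries an inner $q^{\,2}$-difference and an outer $q^{\,1}$-difference, whereas $\Lambda_{\,2\,1}$ reverses these roles; transposing both summations by means of \eqref{eq:sum1}--\eqref{eq:sum2}, and using $\K_{\,1\,2}\ \equiv\ \K_{\,2\,1}$ to identify the surviving coefficient, turns one into the other. Concretely, a summation by parts in $q^{\,1}$ moves the outer difference of $\Lambda_{\,1\,2}\,u$ onto $v$, and a second summation in $q^{\,2}$ moves the inner difference onto $v$ as well, yielding precisely $\scalh{u}{\Lambda_{\,2\,1}\,v}$. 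Adding the two mixed contributions then gives the symmetric statement $\scalh{(\Lambda_{\,1\,2}+\Lambda_{\,2\,1})\,u}{v}\ =\ \scalh{u}{(\Lambda_{\,1\,2}+\Lambda_{\,2\,1})\,v}$.

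I expect the main obstacle to be the boundary and corner bookkeeping rather than the interior identities. One must check that every half-weight in \eqref{eq:57} and \eqref{eq:59}, and every $1/4$-weight at the two corner nodes $\q_{\,0,\,0}$ and $\q_{\,N_{\,1},\,0}$, pairs off exactly with the correspondingly modified stencils in \eqref{eq:lambda11}--\eqref{eq:lambda22}, so that all leftover boundary sums from \eqref{eq:green1}--\eqref{eq:sum2} either cancel against the special boundary definitions of the operators or vanish on $\gamma_{\,s}^{\,h}$ where $u$ and $v$ are zero. The mixed operators are the most delicate here, since their boundary and corner rows combine horizontal and vertical differences; verifying that these assemble into the symmetric combination is where the genuine care is required.
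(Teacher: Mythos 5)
Your proposal is correct and follows essentially the same route as the paper: decompose $\Lambda$ into the four sub-operators, prove self-adjointness of $\Lambda_{\,1\,1}$ and $\Lambda_{\,2\,2}$ via the discrete \textsc{Green} identities in the row and column representations of the scalar product, and handle the mixed part using $\K_{\,1\,2}\equiv\K_{\,2\,1}$ together with summation by parts and the boundary/corner weight bookkeeping you flag. The only cosmetic difference is that you phrase the mixed step as mutual adjointness $\Lambda_{\,1\,2}^{\,\ast}=\Lambda_{\,2\,1}$ (which does hold here, the Dirichlet condition on $\gamma_{\,s}^{\,h}$ killing the top-row leftovers), whereas the paper computes both scalar products, adds them, and observes that the resulting bilinear form \eqref{eq:69} is symmetric --- the same computation organized slightly differently.
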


\begin{proof}
For two arbitrary grid functions $u,\,v\ \in\ \H$ we consider their scalar product on a layer $0\ \leq\ j_{\,2}\ <\ N_2\,$:
\begin{equation*}
  \bigl\langle \Lambda_{\,1\,1}\, u,\,v\bigr\rangle_{\,\H_{\,1},\,j_{\,2}}\ =\ h_{\,1}\,\sum_{j_{\,1}\,=\,1}^{N_1-1}\bigl\{v\,\Lambda_{\,1\,1}\, u\bigr\}_{\,j_{\,1},\,j_{\,2}}\ +\ \bigl\{v\,\Lambda_{\,1\,1}\, u\bigr\}_{\,0,\,j_{\,2}}\;\frac{h_{\,1}}{2}\ +\ \bigl\{v\,\Lambda_{\,1\,1}\, u\bigr\}_{\,N_{\,1},\,j_{\,2}}\;\frac{h_{\,1}}{2}\,.
\end{equation*}
Taking into account definition \eqref{eq:lambda11} of the operator $\Lambda_{\,1\,1}$ and \textsc{Green}'s identity \eqref{eq:green1} we obtain
\begin{multline*}
  \bigl\langle \Lambda_{\,1\,1}\, u,\,v\bigr\rangle_{\,\H_{\,1},\,j_{\,2}}\ =\ -\frac{1}{2}\;\sum_{j_{\,1}\,=\,0}^{N_1\,-\,1}\bigl\{\K_{\,1\,1}\,u_{\,q^{\,1}}^{\,\natural}\,v_{\,q^{\,1}}^{\,\natural}\bigr\}_{\,j_{\,1},\,j_{\,2}}\,h_{\,1}\ +\ \frac{1}{2}\;\bigl(\K_{\,1\,1}\bigr)_{\,N_1-1,\,j_{\,2}}\,\bigl\{u_{\,q^{\,1}}^{\,\flat}\,v\bigr\}_{\,N_1,\,j_{\,2}}\\
  -\ \frac{1}{2}\;\bigl\{\K_{\,1\,1}\,u_{\,q^{\,1}}^{\,\natural}\,v\bigr\}_{\,0,\,j_{\,2}}\ -\ \frac{1}{2}\;\sum_{j_{\,1}\,=\,1}^{N_1}\bigl\{\K_{\,1\,1}\,u_{\,q^{\,1}}^{\,\flat}\,v_{\,q^{\,1}}^{\,\flat}\bigr\}_{\,j_{\,1},\,j_{\,2}}\,h_{\,1}\ +\ \frac{1}{2}\;\bigl\{\K_{\,1\,1}\,u_{\,q^{\,1}}^{\,\flat}\,v\bigr\}_{\,N_1,\,j_{\,2}}\\
  -\ \frac{1}{2}\;\bigl(\K_{\,1\,1}\bigr)_{\,1,\,j_{\,2}}\,\bigl\{u_{\,q^{\,1}}^{\,\natural}\,v\bigr\}_{\,0,\,j_{\,2}}\ +\ \frac{2}{h_{\,1}}\;\bigl(\K_{\,1\,1}\bigr)_{\,1/2,\,j_{\,2}}\,\bigl\{u_{\,q^{\,1}}^{\,\natural}\,v\bigr\}_{\,0,\,j_{\,2}}\;\frac{h_{\,1}}{2}\\ -\ \frac{2}{h_{\,1}}\;\bigl(\K_{\,1\,1}\bigr)_{\,N_1\,-\,1/2,\,j_{\,2}}\,\bigl\{u_{\,q^{\,1}}^{\,\flat}\,v\bigr\}_{\,N_1,\,j_{\,2}}\;\frac{h_{\,1}}{2}\,,
\end{multline*}
or simply
\begin{equation}\label{eq:65}
  \bigl\langle \Lambda_{\,1\,1}\, u,\,v\bigr\rangle_{\,\H_{\,1},\,j_{\,2}}\ =\ -\frac{1}{2}\;\sum_{j_{\,1}\,=\,0}^{N_1\,-\,1}\bigl\{\K_{\,1\,1}\,u_{\,q^{\,1}}^{\,\natural}\,v_{\,q^{\,1}}^{\,\natural}\bigr\}_{\,j_{\,1},\,j_{\,2}}\,h_{\,1}\ -\frac{1}{2}\;\sum_{j_{\,1}\,=\,1}^{N_1}\bigl\{\K_{\,1\,1}\,u_{\,q^{\,1}}^{\,\flat}\,v_{\,q^{\,1}}^{\,\flat}\bigr\}_{\,j_{\,1},\,j_{\,2}}\,h_{\,1}\,.
\end{equation}
By substituting the last expression into formula \eqref{eq:56}, we obtain the equality
\begin{multline*}
  \scalh{\Lambda_{\,1\,1}\, u}{v}\ =\ -\frac{1}{2}\;\sum_{\,j_{\,2}\,=\,1}^{N_2\,-\,1}\biggl\{\sum_{j_{\,1}\,=\,0}^{\,N_1\,-\,1}\K_{\,1\,1}\,u_{\,q^{\,1}}^{\,\natural}\,v_{\,q^{\,1}}^{\,\natural}\ +\ \sum_{j_{\,1}\,=\,1}^{\,N_1}\K_{\,1\,1}\,u_{\,q^{\,1}}^{\,\flat}\,v_{\,q^{\,1}}^{\,\flat}\biggr\}_{\,j_{\,1},\,j_{\,2}}\,h_{\,1}\,h_{\,2} \\
  - \frac{1}{2}\;\sum_{j_{\,1}\,=\,0}^{N_1\,-\,1}\bigl\{\K_{\,1\,1}\,u_{\,q^{\,1}}^{\,\natural}\,v_{\,q^{\,1}}^{\,\natural}\bigr\}_{\,j_{\,1},\,0}\;\frac{h_{\,1}\,h_{\,2}}{2}\ -\frac{1}{2}\;\sum_{j_{\,1}\,=\,1}^{N_1}\bigl\{\K_{\,1\,1}\,u_{\,q^{\,1}}^{\,\flat}\,v_{\,q^{\,1}}^{\,\flat}\bigr\}_{\,j_{\,1},\,j_{\,2}}\;\frac{h_{\,1}\,h_{\,2}}{2}\,.
\end{multline*}
The last identity is absolutely symmetric with respect to the functions $u$ and $v\,$, \ie
\begin{equation*}
  \scalh{\Lambda_{\,1\,1}\, u}{v}\ \equiv\ \scalh{u}{\Lambda_{\,1\,1}\, v}\,.
\end{equation*}
Consequently, the operator $\Lambda_{\,1\,1}$ is self-adjoint in space $\H$, \ie
\begin{equation*}
  \Lambda_{\,1\,1}^{\,\ast}\ \equiv\ \Lambda_{\,1\,1}\,.
\end{equation*}

Using the summation by parts formulas \eqref{eq:sum1}, \eqref{eq:sum2} along with the definition \eqref{eq:lambda12} of the operator $\Lambda_{\,1\,2}$ in interior and boundary nodes, we obtain the following expression for the scalar product
\begin{multline}\label{eq:66}
  \scalh{\Lambda_{\,1\,2}\, u}{v}\ =\ -\frac{1}{4}\;\sum_{j_{\,2}\,=\,1}^{N_2\,-\,1}\biggl[\,\sum_{j_{\,1}\,=\,1}^{N_1}\,\K_{\,1\,2}\,\bigl(u_{\,q^{\,2}}^{\,\natural}\ +\ u_{\,q^{\,2}}^{\,\flat}\bigr)\,v_{\,q^{\,1}}^{\,\flat}\ +\ \sum_{j_{\,1}\,=\,0}^{N_1\,-\,1}\,\K_{\,1\,2}\,\bigl(u_{\,q^{\,2}}^{\,\natural}\ +\ u_{\,q^{\,2}}^{\,\flat}\bigr)\,v_{\,q^{\,1}}^{\,\natural}\,\biggr]_{\,j_{\,1},\,j_{\,2}}\!\!\! h_{\,1}\,h_{\,2}\\ 
  -\ \frac{1}{2}\;\sum_{j_{\,1}\,=\,1}^{N_1}\bigl\{\K_{\,1\,2}\,u_{\,q^{\,2}}^{\,\natural}\,v_{\,q^{\,1}}^{\,\flat}\bigr\}_{\,j_{\,1},\,0}\;\frac{h_{\,1}\,h_{\,2}}{2}\ -\ \frac{1}{2}\;\sum_{j_{\,1}\,=\,0}^{N_1\,-\,1}\bigl\{\K_{\,1\,2}\,u_{\,q^{\,2}}^{\,\natural}\,v_{\,q^{\,1}}^{\,\natural}\bigr\}_{\,j_{\,1},\,0}\;\frac{h_{\,1}\,h_{\,2}}{2}\,.
\end{multline}
It is obvious that
\begin{equation*}
  \scalh{\Lambda_{\,1\,2}\, u}{v}\ \neq\ \scalh{u}{\Lambda_{\,1\,2}\, v}\,.
\end{equation*}
Thus, the operator $\Lambda_{\,1\,2}$ is not self-adjoint. Nevertheless, we continue the proof without any deception. Consider right now the operator $\Lambda_{\,2\,1}$ defined in \eqref{eq:lambda21}. Using similar techniques we obtain the following expression for the scalar product of $\Lambda_{\,2\,1}\,u$ with $v\,$:
\begin{multline}\label{eq:68}
  \scalh{\Lambda_{\,2\,1}\, u}{v}\ =\ -\frac{1}{4}\;\sum_{j_{\,1}\,=\,1}^{N_1\,-\,1}\biggl[\,\sum_{j_{\,2}\,=\,1}^{N_2}\,\K_{\,2\,1}\,\bigl(u_{\,q^{\,1}}^{\,\natural}\ +\ u_{\,q^{\,1}}^{\,\flat}\bigr)\,v_{\,q^{\,2}}^{\,\flat}\ +\ \sum_{j_{\,2}\,=\,0}^{N_2\,-\,1}\,\K_{\,2\,1}\,\bigl(u_{\,q^{\,1}}^{\,\natural}\ +\ u_{\,q^{\,1}}^{\,\flat}\bigr)\,v_{\,q^{\,2}}^{\,\natural}\,\biggr]_{\,j_{\,1},\,j_{\,2}}\!\!\! h_{\,1}\,h_{\,2}\\ 
  -\ \frac{1}{2}\;\sum_{j_{\,2}\,=\,1}^{N_2}\bigl\{\K_{\,2\,1}\,u_{\,q^{\,1}}^{\,\natural}\,v_{\,q^{\,2}}^{\,\flat}\bigr\}_{\,0,\,j_{\,2}}\;\frac{h_{\,1}\,h_{\,2}}{2}\ -\ \frac{1}{2}\;\sum_{j_{\,2}\,=\,0}^{N_2\,-\,1}\bigl\{\K_{\,2\,1}\,u_{\,q^{\,1}}^{\,\natural}\,v_{\,q^{\,2}}^{\,\natural}\bigr\}_{\,0,\,j_{\,2}}\;\frac{h_{\,1}\,h_{\,2}}{2}\\
  -\ \frac{1}{2}\;\sum_{j_{\,2}\,=\,1}^{N_2}\bigl\{\K_{\,2\,1}\,u_{\,q^{\,1}}^{\,\flat}\,v_{\,q^{\,2}}^{\,\flat}\bigr\}_{\,N_1,\,j_{\,2}}\;\frac{h_{\,1}\,h_{\,2}}{2}\ -\ \frac{1}{2}\;\sum_{j_{\,2}\,=\,0}^{N_2\,-\,1}\bigl\{\K_{\,2\,1}\,u_{\,q^{\,1}}^{\,\flat}\,v_{\,q^{\,2}}^{\,\natural}\bigr\}_{\,N_1,\,j_{\,2}}\;\frac{h_{\,1}\,h_{\,2}}{2}\,.
\end{multline}
From the last formula it can be clearly seen that
\begin{equation*}
  \scalh{\Lambda_{\,2\,1}\, u}{v}\ \neq\ \scalh{u}{\Lambda_{\,2\,1}\, v}\,.
\end{equation*}
Thus, the operator $\Lambda_{\,2\,1}$ as well as $\Lambda_{\,1\,2}$ is not self-adjoint. However, we can study the sum $\Lambda_{\,2\,1}\ +\ \Lambda_{\,1\,2}\,$. By regrouping the terms in \eqref{eq:66} and \eqref{eq:68} we obtain
\begin{multline}\label{eq:69}
  \scalh{\Lambda_{\,1\,2}\, u\ +\ \Lambda_{\,2\,1}\, u}{v}\ =\\
  -\frac{h_{\,1}\,h_{\,2}}{4}\;\biggl\{\sum_{j_{\,1}\,=\,1}^{N_1\,-\,1}\sum_{j_{\,2}\,=\,1}^{N_2\,-\,1}\Bigl[\,\K_{\,1\,2}\,\bigl(u_{\,q^{\,1}}^{\,\natural}\ +\ u_{\,q^{\,1}}^{\,\flat}\bigr)\cdot\bigl(v_{\,q^{\,2}}^{\,\natural}\ +\ v_{\,q^{\,2}}^{\,\flat}\bigr)\ +\ \K_{\,1\,2}\,\bigl(v_{\,q^{\,1}}^{\,\natural}\ +\ v_{\,q^{\,1}}^{\,\flat}\bigr)\cdot\bigl(u_{\,q^{\,2}}^{\,\natural}\ +\ u_{\,q^{\,2}}^{\,\flat}\bigr)\,\Bigr]_{j_{\,1},\,j_{\,2}}\\
  +\ \sum_{j_{\,2}\,=\,1}^{N_2\,-\,1}\Bigl[\,\K_{\,1\,2}\,u_{\,q^{\,1}}^{\,\natural}\,\bigl(v_{\,q^{\,2}}^{\,\natural}\ +\ v_{\,q^{\,2}}^{\,\flat}\bigr)\ +\ \K_{\,1\,2}\,v_{\,q^{\,1}}^{\,\natural}\,\bigl(u_{\,q^{\,2}}^{\,\natural}\ +\ u_{\,q^{\,2}}^{\,\flat}\bigr)\,\Bigr]_{0,\,j_{\,2}} \\
  +\ \sum_{j_{\,2}\,=\,1}^{N_2\,-\,1}\Bigl[\,\K_{\,1\,2}\,u_{\,q^{\,1}}^{\,\flat}\,\bigl(v_{\,q^{\,2}}^{\,\natural}\ +\ v_{\,q^{\,2}}^{\,\flat}\bigr)\ +\ \K_{\,1\,2}\,v_{\,q^{\,1}}^{\,\flat}\,\bigl(u_{\,q^{\,2}}^{\,\natural}\ +\ u_{\,q^{\,2}}^{\,\flat}\bigr)\,\Bigr]_{N_1,\,j_{\,2}}\\
  +\ \sum_{j_{\,1}\,=\,1}^{N_1\,-\,1}\Bigl[\,\K_{\,1\,2}\,v_{\,q^{\,2}}^{\,\natural}\,\bigl(u_{\,q^{\,1}}^{\,\natural}\ +\ u_{\,q^{\,1}}^{\,\flat}\bigr)\ +\ \K_{\,1\,2}\,u_{\,q^{\,2}}^{\,\natural}\,\bigl(v_{\,q^{\,1}}^{\,\natural}\ +\ v_{\,q^{\,1}}^{\,\flat}\bigr)\,\Bigr]_{j_{\,1},\,0} \\
  \ +\ \Bigl[\,\K_{\,1\,2}\,u_{\,q^{\,1}}^{\,\natural}\,v_{\,q^{\,2}}^{\,\natural}\ +\ \K_{\,1\,2}\,u_{\,q^{\,2}}^{\,\natural}\,v_{\,q^{\,1}}^{\,\natural}\,\Bigr]_{\,0,\,0}\ +\ \Bigl[\,\K_{\,1\,2}\,u_{\,q^{\,1}}^{\,\flat}\,v_{\,q^{\,2}}^{\,\natural}\ +\ \K_{\,1\,2}\,u_{\,q^{\,2}}^{\,\natural}\,v_{\,q^{\,1}}^{\,\flat}\,\Bigr]_{\,N_1,\,0}\biggr\}\,.
\end{multline}
Above we used also the fact that $\K_{\,1\,2}\ \equiv\ \K_{\,2\,1}\,$. From the last formula \eqref{eq:69} we readily obtain that
\begin{equation*}
  \scalh{\Lambda_{\,1\,2}\, u\ +\ \Lambda_{\,2\,1}\, u}{v}\ \equiv\ \scalh{u}{\Lambda_{\,1\,2}\, v\ +\ \Lambda_{\,2\,1}\, v}\,.
\end{equation*}
Thus, the operator $\Lambda_{\,2\,1}\ +\ \Lambda_{\,1\,2}$ is self-adjoint in the space $\H\,$, \ie
\begin{equation*}
   \bigl(\Lambda_{\,1\,2}\ +\ \Lambda_{\,2\,1}\bigr)^{\,\ast}\ \equiv\ \Lambda_{\,1\,2}\ +\ \Lambda_{\,2\,1}\,,
\end{equation*}
even if constitutive operators $\Lambda_{\,1\,2}$ and $\Lambda_{\,2\,1}$ are not self-adjoint if taken separately.

Finally, consider the operator $\Lambda_{\,2\,2}\,$, which was defined in \eqref{eq:lambda22}. We employ first the discrete \textsc{Green} identities \eqref{eq:green1}, \eqref{eq:green2} to evaluate scalar products on layers $j_{\,1}\ \in\ \bigl\{0,\,1,\,\ldots,\,N_1\bigr\}\,$:
\begin{equation*}
  \bigl\langle \Lambda_{\,2\,2}\, u,\,v\bigr\rangle_{\,\H_{\,2},\,j_{\,1}}\ =\ -\frac{1}{2}\;\sum_{j_{\,2}\,=\,0}^{N_2\,-\,1}\bigl\{\K_{\,2\,2}\,u_{\,q^{\,2}}^{\,\natural}\,v_{\,q^{\,2}}^{\,\natural}\bigr\}_{\,j_{\,1},\,j_{\,2}}\;h_{\,2}\ -\ \frac{1}{2}\;\sum_{j_{\,2}\,=\,1}^{N_2}\bigl\{\K_{\,2\,2}\,u_{\,q^{\,2}}^{\,\flat}\,v_{\,q^{\,2}}^{\,\flat}\bigr\}_{\,j_{\,1},\,j_{\,2}}\;h_{\,2}\,.
\end{equation*}
Using this partial result we can now compute the scalar product in the $\H$ space as well:
\begin{multline}\label{eq:67}
  \scalh{\Lambda_{\,2\,2}\,u}{v}\ =\ -\frac{1}{2}\;\sum_{j_{\,1}\,=\,1}^{N_1\,-\,1}\biggl[\,\sum_{j_{\,2}\,=\,0}^{N_2\,-\,1}\,\K_{\,2\,2}\,u_{\,q^{\,2}}^{\,\natural}\,v_{\,q^{\,2}}^{\,\natural}\ +\ \sum_{j_{\,2}\,=\,1}^{N_2}\,\K_{\,2\,2}\,u_{\,q^{\,2}}^{\,\flat}\,v_{\,q^{\,2}}^{\,\flat}\,\biggr]_{\,j_{\,1},\,j_{\,2}}\,h_{\,1}\,h_{\,2}\\
  -\ \frac{1}{2}\;\sum_{j_{\,2}\,=\,0}^{N_2\,-\,1}\bigl\{\K_{\,2\,2}\,u_{\,q^{\,2}}^{\,\natural}\,v_{\,q^{\,2}}^{\,\natural}\bigr\}_{\,0,\,j_{\,2}}\;\frac{h_{\,1}\,h_{\,2}}{2}\ -\ \frac{1}{2}\;\sum_{j_{\,2}\,=\,1}^{N_2}\bigl\{\K_{\,2\,2}\,u_{\,q^{\,2}}^{\,\flat}\,v_{\,q^{\,2}}^{\,\flat}\bigr\}_{\,0,\,j_{\,2}}\;\frac{h_{\,1}\,h_{\,2}}{2} \\
  -\ \frac{1}{2}\;\sum_{j_{\,2}\,=\,0}^{N_2\,-\,1}\bigl\{\K_{\,2\,2}\,u_{\,q^{\,2}}^{\,\natural}\,v_{\,q^{\,2}}^{\,\natural}\bigr\}_{\,N_1,\,j_{\,2}}\;\frac{h_{\,1}\,h_{\,2}}{2}\ -\ \frac{1}{2}\;\sum_{j_{\,2}\,=\,1}^{N_2}\bigl\{\K_{\,2\,2}\,u_{\,q^{\,2}}^{\,\flat}\,v_{\,q^{\,2}}^{\,\flat}\bigr\}_{\,N_1,\,j_{\,2}}\;\frac{h_{\,1}\,h_{\,2}}{2}\,.
\end{multline}
By changing symmetrically grid functions $u$ and $v\,$, we obtain
\begin{equation*}
  \scalh{\Lambda_{\,2\,2}\, u}{v}\ \equiv\ \scalh{u}{\Lambda_{\,2\,2}\, v}\,.
\end{equation*}
Consequently, the operator $\Lambda_{\,2\,2}$ is self-adjoint in the space $\H$, \ie
\begin{equation*}
  \Lambda_{\,2\,2}^{\,\ast}\ \equiv\ \Lambda_{\,2\,2}\,.
\end{equation*}
Since $\Lambda\ \equiv\ \Lambda_{\,1\,1}\ +\ \bigl(\Lambda_{\,1\,2}\ +\ \Lambda_{\,2\,1}\bigr)\ +\ \Lambda_{\,2\,2}$ and each term is an self-adjoint operator, the operator $\Lambda$ is self-adjoint as well. Hence, the same holds for $\Aa\ \equiv\ -\Lambda\,$.
\end{proof}

Below we are going to show also that the operator $\Aa$ is positive definite. The proof will be based on the uniform ellipticity property of equation \eqref{eq:laplaceQ}, which was established in Lemma~\ref{lem:1}. Another key ingredient consists in the properties of the finite difference approximation to the following mixed Boundary Value Problem (BVP) for the \textsc{Laplace} equation posed on $\Q_{\,0}\,$:
\begin{equation}\label{eq:70}
  \Delta\,\phi\ =\ 0\,, \qquad \q\ =\ \bigl(q^{\,1},\,q^{\,2}\bigr)\ \in\ \Q_{\,0}\,,
\end{equation}
with boundary conditions
\begin{align}\label{eq:71a}
  \pd{\phi}{q^{\,1}}\Bigr\vert_{q^{\,1}\,=\,1}\ &=\ 0\,, \qquad\quad\ \ 
  \pd{\phi}{q^{\,1}}\Bigr\vert_{q^{\,1}\,=\,0}\ =\ \mu_{\,\ell}\,\bigl(q^{\,2}\bigr)\,, \\
  \phi\bigr\vert_{q^{\,2}\,=\,1}\ &=\ \tilde{\mu}\,\bigl(q^{\,1}\bigr)\,, \qquad
  \pd{\phi}{q^{\,2}}\Bigr\vert_{q^{\,2}\,=\,0}\ =\ 0\,.\label{eq:71b}
\end{align}
This problem is a particular case of the BVP considered earlier for elliptic equation \eqref{eq:laplaceQ} with the following coefficients:
\begin{equation*}
  \K_{\,1\,1}\ =\ \K_{\,2\,2}\ \equiv\ 1\,, \qquad
  \K_{\,1\,2}\ =\ \K_{\,2\,1}\ \equiv\ 0\,.
\end{equation*}
In this case the finite difference operator $\Lambda$ admits a much simpler form since finite difference operators corresponding to mixed derivatives vanish, \ie
\begin{equation*}
  \Lambda_{\,1\,2}\ =\ \Lambda_{\,2\,1}\ \equiv\ 0\,.
\end{equation*}
The finite difference scheme for the BVP \eqref{eq:70} -- \eqref{eq:71b} can be written as
\begin{equation*}
  \Delta_{\,h}\,\phi\ =\ \nuv\,, \qquad \Delta_{\,h}\ \eqdef\ \Delta_{\,h}^{\,(1)}\ +\ \Delta_{\,h}^{\,(2)}\,,
\end{equation*}
with operators $\Delta_{\,h}^{\,(1,\,2)}$ defined as
\begin{equation*}
  \Delta_{\,h}^{\,(1)}\,\phi_{\,\j}\ \eqdef\ \begin{dcases}
  \ \bigl\{\phi_{\,q^{\,1}\,q^{\,1}}^{\,\flat,\ \natural}\bigr\}_{\,j_{\,1},\,j_{\,2}}\,, & \qquad 0\ <\ j_{\,1}\ <\ N_{1}\ -\ 1\,, \qquad 0\ \leq\ j_{\,2}\ <\ N_{2}\,, \\
  \ \frac{2}{h_{\,1}}\;\bigl\{\phi_{\,q^{\,1}}^{\,\natural}\bigr\}_{\,0,\,j_{\,2}}\,, & \qquad j_{\,1}\ =\ 0\,, \qquad 0\ \leq\ j_{\,2}\ <\ N_{2}\,, \\
  \ -\,\frac{2}{h_{\,1}}\;\bigl\{\phi_{\,q^{\,1}}^{\,\flat}\bigr\}_{\,N_1,\,j_{\,2}}\,, & \qquad j_{\,1}\ =\ N_1\,, \qquad 0\ \leq\ j_{\,2}\ <\ N_{2}\,, \\
  \ 0\,, & \qquad 0\ \leq\ j_{\,1}\ \leq\ N_1\,, \qquad j_{\,2}\ =\ N_2\,,
  \end{dcases}
\end{equation*}
\begin{equation*}
  \Delta_{\,h}^{\,(2)}\,\phi_{\,\j}\ \eqdef\ \begin{dcases}
    \ \bigl\{\phi_{\,q^{\,2}\,q^{\,2}}^{\,\flat,\ \natural}\bigr\}_{\,j_{\,1},\,j_{\,2}}\,, & \qquad 0\ \leq\ j_{\,1}\ \leq\ N_{1}\,, \qquad 0\ <\ j_{\,2}\ <\ N_{2}\,, \\
    \ \frac{2}{h_{\,2}}\;\bigl\{\phi_{\,q^{\,2}}^{\,\natural}\bigr\}_{\,j_{\,1},\,0}\,, & \qquad 0\ \leq\ j_{\,1}\ \leq\ N_1\,, \qquad j_{\,2}\ =\ 0\,, \\
    \ 0\,, & \qquad 0\ \leq\ j_{\,1}\ \leq\ N_1\,, \qquad j_{\,2}\ =\ N_2\,.
  \end{dcases}
\end{equation*}
Below we introduce also the operators $\Ba_{\,1}\ \eqdef\ -\Delta_{\,h}^{\,(1)}\,$, $\Ba_{\,2}\ \eqdef\ -\Delta_{\,h}^{\,(2)}$ and $\Ba\ \eqdef\ \Ba_{\,1}\ +\ \Ba_{\,2}\,$, \ie $\Ba\ \eqdef\ -\,\Delta_{\,h}\,$. By applying Theorem~\ref{thm:1} we conclude that the operator $\Ba\,:\ \H\ \mapsto\ \H$ is self-adjoint in the space $\H\,$.

It is not difficult to see that the family of grid functions
\begin{equation}\label{eq:eigf}
  \psi^{\,(k,\,l)}_{\,j_{\,1},\,j_{\,2}}\ \eqdef\ \alpha_{\,k}\cdot\cos\bigl(\pi\,k\,q_{\,j_{\,1}}^{\,1}\bigr)\cdot\cos\bigl(\pi\,(l\,+\,\half)\,q_{\,j_{\,2}}^{\,2}\bigr)\,, \quad k\ =\ 0,\,1,\,\ldots,\,N_{1}\,, \quad l\ =\ 0,\,1,\,\ldots\,N_2\,-\,1
\end{equation}
satisfies the following identities:
\begin{equation}\label{eq:eig}
  \Ba\cdot\psi^{\,(k,\,l)}\ \equiv\ \lambda^{\,(k,\,l)}\cdot\psi^{\,(k,\,l)}\,.
\end{equation}
Henceforth, the functions $\{\psi^{\,(k,\,l)}\}_{\,k,\,l}$ are eigenfunctions\footnote{Please, notice that each eigenfunction $\psi^{\,(k,\,l)}$ is a grid function taking in each node $\j\ =\ \bigl(j_{\,1},\,j_{\,2}\bigr)$ the value $\psi^{\,(k,\,l)}_{\,j_{\,1},\,j_{\,2}}$ given in equation \eqref{eq:eigf}.} of the operator $\Ba$ and numbers $\{\lambda^{\,(k,\,l)}\}_{\,k,\,l}\ \in\ \R$ are its eigenvalues:
\begin{equation*}
  \lambda^{\,(k,\,l)}\ \eqdef\ \frac{4}{h_{\,1}^{\,2}}\;\sin^{\,2}\Bigl(\frac{\pi\,k\,h_{\,1}}{2}\Bigr)\ +\ \frac{4}{h_{\,2}^{\,2}}\;\sin^{\,2}\biggl[\,\frac{(l\ +\ \frac{1}{2})\,\pi\,h_{\,2}}{2}\,\biggr]\,.
\end{equation*}
If we choose the normalization factor $\alpha_{\,k}$ as
\begin{equation*}
  \alpha_{\,k}\ =\ \begin{dcases}
  \ \sqrt{2}\,, & \qquad k\ =\ 0\ \wedge\ N_1\,, \\
  \ 2\,, & k\ =\ 1,\,2,\,\ldots,\,N_1\,-\,1\,,
  \end{dcases}
\end{equation*}
then the eigenfunctions $\{\psi^{\,(k,\,l)}\}_{\,k,\,l}$ of the operator $\Ba$ form an orthonormal basis in $\H\,$, \ie
\begin{equation*}
  \scalh{\psi^{\,(k,\,l)}}{\psi^{\,(r,\,s)}}\ =\ \delta^{\,k\,r}\cdot\delta^{\,l\,s}\ =\ \begin{dcases}
  \ 1\,, & \qquad k\ =\ r\ \wedge\ l\ =\ s\,, \\
  \ 0\,, & \qquad k\ \neq\ r\ \vee\ l\ \neq\ s\,.
  \end{dcases}
\end{equation*}
In other words, an arbitrary function $u\ \in\ \H$ can be represented as a sum of a finite \textsc{Fourier} series:
\begin{equation}\label{eq:repr}
  u\ =\ \sum_{k\,=\,0}^{N_1}\,\sum_{l\,=\,0}^{N_2\,-\,1} c_{\,k\,l}\,\psi^{\,(k,\,l)}\,,
\end{equation}
where the numbers $\bigl\{c_{\,k\,l}\ \equiv\ \scalh{u}{\psi^{\,(k,\,l)}}\bigr\}_{\,k,\,l}$ are called the \textsc{Fourier} coefficients. The \textsc{Parseval} theorem holds in the finite dimensional setting as well:
\begin{equation}\label{eq:Pars}
  \norm{u}^{\,2}_{\,\H}\ \equiv\ \scalh{u}{u}\ =\ \sum_{k\,=\,0}^{N_1}\,\sum_{l\,=\,0}^{N_2\,-\,1} c_{\,k\,l}^{\,2}\,.
\end{equation}
Since all eigenvalues of the operator $\Ba$ are positive, the following Lemma holds:
\begin{lemma}\label{lem:2}
The finite difference operator $\Ba$ is positive definite in space $\H$ and for any function $u\ \in\ \H$ we have the following estimations
\begin{equation}\label{eq:est}
  \lambda_{\,\min}\,\norm{u}^{\,2}_{\,\H}\ \leq\ \scalh{\Ba\,u}{u}\ \leq\ \lambda_{\,\max}\,\norm{u}^{\,2}_{\,\H}\,,
\end{equation}
with
\begin{equation}\label{eq:lambdas}
  \lambda_{\,\min}\ \eqdef\ \frac{4}{h_{\,2}^{\,2}}\;\sin^{\,2}\Bigl(\frac{\pi\,h_{\,2}}{4}\Bigr)\,, \qquad
  \lambda_{\,\max}\ \eqdef\ \frac{4}{h_{\,1}^{\,2}}\ +\ \frac{4}{h_{\,2}^{\,2}}\;\cos^{\,2}\Bigl(\frac{\pi\,h_{\,2}}{4}\Bigr)\,.
\end{equation}
\end{lemma}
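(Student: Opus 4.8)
The plan is to exploit the spectral data for $\Ba$ assembled just above the statement: by construction the grid functions $\{\psi^{\,(k,\,l)}\}_{\,k,\,l}$ form an orthonormal basis of $\H$ and satisfy the eigenvalue relation \eqref{eq:eig}. First I would expand an arbitrary $u\ \in\ \H$ in this basis according to \eqref{eq:repr}. Since $\Ba$ is linear, applying \eqref{eq:eig} term by term and then using the orthonormality of the eigenfunctions collapses the \textsc{Rayleigh} quotient to a diagonal sum,
\begin{equation*}
  \scalh{\Ba\,u}{u}\ =\ \sum_{k\,=\,0}^{N_1}\,\sum_{l\,=\,0}^{N_2\,-\,1}\lambda^{\,(k,\,l)}\,c_{\,k\,l}^{\,2}\,,
\end{equation*}
in which no cross terms survive. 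From here the whole statement reduces to pinning down the extreme eigenvalues.

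The second step is to bound $\lambda^{\,(k,\,l)}$ from below and above uniformly in $(k,\,l)\,$. The key observation is that $\lambda^{\,(k,\,l)}$ is a sum of a term depending only on $k$ and a term depending only on $l\,$, so its extrema are obtained by extremizing each term separately. The $k$-term $\frac{4}{h_{\,1}^{\,2}}\,\sin^{\,2}\bigl(\frac{\pi\,k\,h_{\,1}}{2}\bigr)$ vanishes at $k\ =\ 0$ and attains its maximum $\frac{4}{h_{\,1}^{\,2}}$ at $k\ =\ N_1$ (where $k\,h_{\,1}\ =\ 1$). For the $l$-term I would note that the argument $\frac{(l\,+\,\half)\,\pi\,h_{\,2}}{2}$ is strictly increasing in $l$ and stays inside $\bigl(0,\,\frac{\pi}{2}\bigr)$ for $l\ =\ 0,\,\ldots,\,N_2-1\,$, so $\sin^{\,2}$ is increasing there; the minimum is at $l\ =\ 0$ and the maximum at $l\ =\ N_2-1\,$, where the complementary-angle identity turns $\sin^{\,2}\bigl(\frac{\pi}{2}-\frac{\pi\,h_{\,2}}{4}\bigr)$ into $\cos^{\,2}\bigl(\frac{\pi\,h_{\,2}}{4}\bigr)\,$. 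These extrema reproduce exactly the constants $\lambda_{\,\min}$ and $\lambda_{\,\max}$ of \eqref{eq:lambdas}, and therefore $\lambda_{\,\min}\ \leq\ \lambda^{\,(k,\,l)}\ \leq\ \lambda_{\,\max}$ for every admissible pair.

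Finally I would combine the two-sided eigenvalue bounds with the diagonal sum and invoke the \textsc{Parseval} identity \eqref{eq:Pars}, $\sum_{k,\,l} c_{\,k\,l}^{\,2}\ =\ \normh{u}^{\,2}\,$, to read off the estimate \eqref{eq:est}. Positive definiteness is then immediate: for $0\ <\ h_{\,2}\ \leq\ 1$ one has $0\ <\ \frac{\pi\,h_{\,2}}{4}\ \leq\ \frac{\pi}{4}\,$, hence $\lambda_{\,\min}\ >\ 0\,$, so $\scalh{\Ba\,u}{u}\ \geq\ \lambda_{\,\min}\,\normh{u}^{\,2}\ >\ 0$ whenever $u\ \neq\ 0\,$. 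The computation is essentially mechanical once the eigenbasis is in hand; the only delicate point is the monotonicity bookkeeping that identifies the extremal indices, especially the top index $l\ =\ N_2-1$ responsible for the $\cos^{\,2}$ term in $\lambda_{\,\max}\,$.
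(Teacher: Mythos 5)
Your proposal is correct and follows essentially the same route as the paper's own proof: expand $u$ in the orthonormal eigenbasis \eqref{eq:repr}, use \eqref{eq:eig} and orthonormality to diagonalize $\scalh{\Ba\,u}{u}$ into $\sum_{k,\,l}\lambda^{\,(k,\,l)}\,c_{\,k\,l}^{\,2}$, bound the eigenvalues, and conclude via the \textsc{Parseval} identity \eqref{eq:Pars}. The only difference is that you explicitly carry out the monotonicity bookkeeping identifying the extremal indices $(k,\,l)\ =\ (0,\,0)$ and $(N_1,\,N_2-1)$, which the paper leaves implicit behind the remark that a finite set of eigenvalues has a minimum and a maximum.
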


\begin{proof}
Let us take an arbitrary function $u\ \in\ \H$ and we expand it \eqref{eq:repr} on the basis of orthonormal eigenfunctions of the operator $\Ba\,$. Then we have
\begin{equation*}
  \scalh{\Ba\,u}{u}\ =\ \biggl\langle\, \sum_{k\,=\,0}^{N_1}\,\sum_{l\,=\,0}^{N_2\,-\,1} c_{\,k\,l}\,\lambda^{\,(k,\,l)}\,\psi^{\,(k,\,l)}\,,\ \sum_{r\,=\,0}^{N_1}\,\sum_{s\,=\,0}^{N_2\,-\,1} c_{\,r\,s}\,\psi^{\,(r,\,s)}\,\biggr\rangle_{\,\H}\,,
\end{equation*}
where we used the fact that $\psi^{\,(k,\,l)}$ is an eigenfunction of $\Ba$ \eqref{eq:eig}. Then, by taking into account the orthonormality property we can greatly simplify the scalar product:
\begin{equation*}
  \scalh{\Ba\,u}{u}\ =\ \sum_{k\,=\,0}^{N_1}\,\sum_{l\,=\,0}^{N_2\,-\,1} c_{\,k\,l}^{\,2}\,\lambda^{\,(k,\,l)}\,.
\end{equation*}
Obviously, among a finite number of eigenvalues $\{\lambda^{\,(k,\,l)}\}_{\,k,\,l}$ we can always choose the minimal and maximal ones. Thus,
\begin{equation*}
  \forall\, (k,\,l)\,:\ \quad \lambda_{\,\min}\ \leq\ \lambda^{\,(k,\,l)}\ \leq\ \lambda_{\,\max}\,.
\end{equation*}
Taking into account the last estimation along with the \textsc{Parseval} identity \eqref{eq:Pars}, we obtain immediately the estimations \eqref{eq:est}.
\end{proof}

Thanks to the property of self-adjointness and positive definiteness of the operator $\Ba\,$, we can define another scalar product $\scalb{u}{v}\ \eqdef\ \scalh{\Ba\,u}{v}\,$, which implies the corresponding energy norm:
\begin{equation*}
  \norm{u}_{\,\Ba}\ \eqdef\ \sqrt{\scalb{u}{u}}\,,
\end{equation*}
generated by the operator $\Ba\,$. Below we use an equivalent representation of the energy norm, which is more suitable for our purposes:
\begin{equation*}
  \norm{u}_{\,\Ba}^{\,2}\ \equiv\ \scalh{\Ba_{\,1}\,u}{u}\ +\ \scalh{\Ba_{\,2}\,u}{u}\,.
\end{equation*}

\begin{lemma}\label{lem:3}
For any function $u\ \in\ \H$ the following identities hold:
\begin{equation}\label{eq:85}
  \scalh{\Ba_{\,1}\,u}{u}\ =\ \sum_{j_{\,1}\,=\,1}^{N_1}\,\biggl[\,\sum_{j_{\,2}\,=\,1}^{N_2\,-\,1}\Bigl\{\bigl(u_{\,q^{\,1}}^{\,\flat}\bigr)^{\,2}\Bigr\}_{\,j_{\,1},\,j_{\,2}}\ +\ \frac{1}{2}\;\Bigl\{\bigl(u_{\,q^{\,1}}^{\,\flat}\bigr)^{\,2}\Bigr\}_{\,j_{\,1},\,0}\,\biggr]\,h_{\,1}\,h_{\,2}\,,
\end{equation}
\begin{equation}\label{eq:86}
  \scalh{\Ba_{\,2}\,u}{u}\ =\ \sum_{j_{\,2}\,=\,1}^{N_2}\,\biggl[\,\sum_{j_{\,1}\,=\,1}^{N_1\,-\,1}\Bigl\{\bigl(u_{\,q^{\,2}}^{\,\flat}\bigr)^{\,2}\Bigr\}_{\,j_{\,1},\,j_{\,2}}\ +\ \frac{1}{2}\;\Bigl\{\bigl(u_{\,q^{\,2}}^{\,\flat}\bigr)^{\,2}\Bigr\}_{\,0,\,j_{\,2}}\ +\ \frac{1}{2}\;\Bigl\{\bigl(u_{\,q^{\,2}}^{\,\flat}\bigr)^{\,2}\Bigr\}_{\,N_1,\,j_{\,2}}\,\biggr]\,h_{\,1}\,h_{\,2}\,.
\end{equation}
\end{lemma}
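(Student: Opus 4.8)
The plan is to recycle the layer-wise computations already performed in the proof of Theorem~\ref{thm:1}, specialized to the constant-coefficient situation $\K_{\,1\,1}\ =\ \K_{\,2\,2}\ \equiv\ 1$ of the BVP \eqref{eq:70}--\eqref{eq:71b}, in which $\Ba_{\,1}\ =\ -\Delta_{\,h}^{\,(1)}$ reduces to $-\Lambda_{\,1\,1}$ and $\Ba_{\,2}\ =\ -\Delta_{\,h}^{\,(2)}$ reduces to $-\Lambda_{\,2\,2}\,$. First I would establish \eqref{eq:85}. Setting $\K_{\,1\,1}\ \equiv\ 1$ and $v\ =\ u$ in the layer identity \eqref{eq:65} gives, on each horizontal layer $0\ \leq\ j_{\,2}\ <\ N_2\,$,
\begin{equation*}
  \bigl\langle\,\Ba_{\,1}\,u,\,u\,\bigr\rangle_{\,\H_{\,1},\,j_{\,2}}\ =\ \frac{1}{2}\;\sum_{j_{\,1}\,=\,0}^{N_1\,-\,1}\bigl\{(u_{\,q^{\,1}}^{\,\natural})^{\,2}\bigr\}_{\,j_{\,1},\,j_{\,2}}\,h_{\,1}\ +\ \frac{1}{2}\;\sum_{j_{\,1}\,=\,1}^{N_1}\bigl\{(u_{\,q^{\,1}}^{\,\flat})^{\,2}\bigr\}_{\,j_{\,1},\,j_{\,2}}\,h_{\,1}\,.
\end{equation*}
The key observation is that $(u_{\,q^{\,1}}^{\,\natural})_{\,j_{\,1}}\ \equiv\ (u_{\,q^{\,1}}^{\,\flat})_{\,j_{\,1}+1}\,$, so shifting the summation index shows that the two sums coincide and $\bigl\langle\,\Ba_{\,1}\,u,\,u\,\bigr\rangle_{\,\H_{\,1},\,j_{\,2}}\ =\ \sum_{j_{\,1}\,=\,1}^{N_1}\bigl\{(u_{\,q^{\,1}}^{\,\flat})^{\,2}\bigr\}_{\,j_{\,1},\,j_{\,2}}\,h_{\,1}\,$.

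Second, I would assemble these layer contributions using the horizontal decomposition \eqref{eq:56}, \ie $\scalh{\Ba_{\,1}\,u}{u}\ =\ \sum_{j_{\,2}\,=\,1}^{N_2\,-\,1}\bigl\langle\,\Ba_{\,1}\,u,\,u\,\bigr\rangle_{\,\H_{\,1},\,j_{\,2}}\,h_{\,2}\ +\ \bigl\langle\,\Ba_{\,1}\,u,\,u\,\bigr\rangle_{\,\H_{\,1},\,0}\;\tfrac{h_{\,2}}{2}\,$. Substituting the layer formula and interchanging the order of summation yields \eqref{eq:85} directly, the weight $\tfrac12$ on the bottom layer $j_{\,2}\ =\ 0$ being inherited verbatim from the last term of \eqref{eq:56}.

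For \eqref{eq:86} I would argue in the mirror-image fashion, replacing the horizontal slicing \eqref{eq:56}--\eqref{eq:57} by the vertical one \eqref{eq:58}--\eqref{eq:59}. Setting $\K_{\,2\,2}\ \equiv\ 1$ and $v\ =\ u$ in the $\Lambda_{\,2\,2}$-identity obtained in the proof of Theorem~\ref{thm:1}, and again collapsing the two coincident sums of squared difference quotients, gives $\bigl\langle\,\Ba_{\,2}\,u,\,u\,\bigr\rangle_{\,\H_{\,2},\,j_{\,1}}\ =\ \sum_{j_{\,2}\,=\,1}^{N_2}\bigl\{(u_{\,q^{\,2}}^{\,\flat})^{\,2}\bigr\}_{\,j_{\,1},\,j_{\,2}}\,h_{\,2}$ for every vertical layer $0\ \leq\ j_{\,1}\ \leq\ N_1\,$. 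Feeding this into \eqref{eq:58} then produces \eqref{eq:86}, the weight-$\tfrac12$ terms at $j_{\,1}\ =\ 0$ and $j_{\,1}\ =\ N_1$ stemming from the two boundary contributions of \eqref{eq:58}.

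The main obstacle is the careful bookkeeping at the boundary layers rather than any analytic difficulty. I would check that \eqref{eq:65} indeed holds on the bottom layer $j_{\,2}\ =\ 0$ (it does, being valid for all $0\ \leq\ j_{\,2}\ <\ N_2$), and that the $\Lambda_{\,2\,2}$-identity is available on the two lateral layers $j_{\,1}\ =\ 0,\,N_1\,$, where the wall \textsc{Neumann} conditions --- not a \textsc{Dirichlet} condition --- are enforced. Because the top layer $j_{\,2}\ =\ N_2$ carries the homogeneous \textsc{Dirichlet} condition built into $\H\,$, the backward sum in the $\Ba_{\,2}$ computation may legitimately run up to $j_{\,2}\ =\ N_2$ with no surviving boundary term. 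Everything else reduces to a routine reindexing of telescoping difference sums.
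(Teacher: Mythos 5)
Your proposal is correct and follows essentially the same route as the paper: the paper likewise computes the layer scalar products $\bigl\langle \Ba_{\,1}\,u,\,u\bigr\rangle_{\,\H_{\,1},\,j_{\,2}}$ and $\bigl\langle \Ba_{\,2}\,u,\,u\bigr\rangle_{\,\H_{\,2},\,j_{\,1}}$ via the discrete \textsc{Green} identities of Section~\ref{sec:discr} (your specialization of \eqref{eq:65} and its $\Lambda_{\,2\,2}$ analogue to unit coefficients is the same computation), merges the forward and backward difference sums by the reindexing $(u_{\,q^{\,1}}^{\,\natural})_{\,j_{\,1}}\ \equiv\ (u_{\,q^{\,1}}^{\,\flat})_{\,j_{\,1}+1}\,$, and then assembles the layers with \eqref{eq:56} and \eqref{eq:58}. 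Your explicit bookkeeping of the weight-$\tfrac12$ boundary layers matches the paper's (terser) argument.
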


\begin{proof}
Taking into account equation \eqref{eq:57} and discrete identities shown in Section~\ref{sec:discr}, we obtain the equality
\begin{equation*}
  \bigl\langle \Ba_{\,1}\,u,\,u\bigr\rangle_{\,\H_{\,1},\,j_{\,2}}\ =\ \sum_{j_{\,1}\,=\,1}^{N_1}\,\Bigl\{\bigl(u_{\,q^{\,1}}^{\,\flat}\bigr)^{\,2}\Bigr\}_{\,j_{\,1},\,j_{\,2}}\,h_{\,1}\,.
\end{equation*}
Then, from the last result and relation \eqref{eq:56} follows immediately the requested identity \eqref{eq:85}. Similarly, the discrete identities along with definition \eqref{eq:59} yield
\begin{equation*}
  \bigl\langle \Ba_{\,2}\,u,\,u\bigr\rangle_{\,\H_{\,2},\,j_{\,1}}\ =\ \sum_{j_{\,2}\,=\,1}^{N_2}\,\Bigl\{\bigl(u_{\,q^{\,2}}^{\,\flat}\bigr)^{\,2}\Bigr\}_{\,j_{\,1},\,j_{\,2}}\,h_{\,2}\,.
\end{equation*}
Then, using relation \eqref{eq:58} we obtain the second requested identity \eqref{eq:86}.
\end{proof}

Now we can come back to the original finite difference operator $\Aa$ and prove the following
\begin{theorem}\label{thm:2}
The operator $\Aa\ \equiv\ -\,\Lambda$ is positive definite in the space $\H\,$, and we have the following estimations:
\begin{equation}\label{eq:87}
  c_{\,1}\,\lambda_{\,\min}\,\norm{u}^{\,2}_{\,\H}\ \leq\ \scalh{\Aa\,u}{u}\ \leq\ c_{\,2}\,\lambda_{\,\max}\,\norm{u}^{\,2}_{\,\H}\,,
\end{equation}
where the constants $c_{\,1,\,2}$ were defined in \eqref{eq:c1}, \eqref{eq:c2} and $\lambda_{\,\min,\,\max}$ --- in \eqref{eq:lambdas}.
\end{theorem}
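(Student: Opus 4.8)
The plan is to write the quadratic form $\scalh{\Aa u}{u}\ =\ -\scalh{\Lambda u}{u}$ as a sum over grid nodes of the pointwise quadratic form $\Qf$ of \eqref{eq:quadratic}, then reduce everything to the constant-coefficient operator $\Ba$ by applying the pointwise ellipticity bounds \eqref{eq:34} node by node, and finish with Lemma~\ref{lem:2}.

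First I would take $v\ =\ u$ in the three scalar-product identities \eqref{eq:65}, \eqref{eq:69} and \eqref{eq:67} already established in the proof of Theorem~\ref{thm:1}, and regroup the resulting sums so that all metric coefficients $\K_{\,\alpha\,\beta}$ are evaluated at the \emph{node} rather than on half-integer edges. The key algebraic observation is that, once this is done, the contribution of an interior node $\q_{\,\j}$ to $\scalh{\Aa u}{u}$ assembles exactly into
\[
  \frac{h_1 h_2}{4}\,\Bigl[\Qf\bigl(u_{\,q^1}^{\,\natural},\,u_{\,q^2}^{\,\natural}\bigr)\ +\ \Qf\bigl(u_{\,q^1}^{\,\natural},\,u_{\,q^2}^{\,\flat}\bigr)\ +\ \Qf\bigl(u_{\,q^1}^{\,\flat},\,u_{\,q^2}^{\,\natural}\bigr)\ +\ \Qf\bigl(u_{\,q^1}^{\,\flat},\,u_{\,q^2}^{\,\flat}\bigr)\Bigr]_{\,\j},
\]
i.e. a sum of four copies of $\Qf$, one for each pairing of a forward/backward difference in $q^{\,1}$ with a forward/backward difference in $q^{\,2}$, all with $\K$ taken at $\q_{\,\j}$. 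Indeed, the $\xi^{\,2}$ parts of these four forms reproduce the $\K_{\,1\,1}\,[(u_{\,q^1}^{\,\natural})^2 + (u_{\,q^1}^{\,\flat})^2]$ coming from $\Lambda_{\,1\,1}$, the $\zeta^{\,2}$ parts reproduce the $\Lambda_{\,2\,2}$ contribution, and the cross parts $2\,\K_{\,1\,2}\,\xi\zeta$ reproduce exactly the term $\K_{\,1\,2}\bigl(u_{\,q^1}^{\,\natural}+u_{\,q^1}^{\,\flat}\bigr)\bigl(u_{\,q^2}^{\,\natural}+u_{\,q^2}^{\,\flat}\bigr)$ furnished by $\Lambda_{\,1\,2}+\Lambda_{\,2\,1}$, where $\K_{\,1\,2}\equiv\K_{\,2\,1}$ is used. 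The boundary and corner nodes admit the same description with fewer quadrant-terms and the half/quarter weights already recorded in the scalar product on $\H$.

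Next I would apply the pointwise ellipticity estimate \eqref{eq:34}, namely $c_{\,1}(\xi^2+\zeta^2)\leq \Qf(\xi,\zeta)\leq c_{\,2}(\xi^2+\zeta^2)$, valid at every $\q\in\Q^{\,0}$ by Lemma~\ref{lem:1}, to each of the four $\Qf$-terms at each node. Summing the four quadrants, the bracketing sum of squares at an interior node collapses to $\frac{h_1 h_2}{2}\,[(u_{\,q^1}^{\,\natural})^2 + (u_{\,q^1}^{\,\flat})^2 + (u_{\,q^2}^{\,\natural})^2 + (u_{\,q^2}^{\,\flat})^2]_{\,\j}$. But this is precisely the node contribution produced by running the very same decomposition with $\K_{\,1\,1}=\K_{\,2\,2}\equiv 1$, $\K_{\,1\,2}\equiv 0$, that is, the node contribution to $\scalh{\Ba_{\,1} u}{u}+\scalh{\Ba_{\,2} u}{u}=\scalh{\Ba u}{u}$ as recorded in Lemma~\ref{lem:3}. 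Hence $c_{\,1}\,\scalh{\Ba u}{u}\leq \scalh{\Aa u}{u}\leq c_{\,2}\,\scalh{\Ba u}{u}$, and Lemma~\ref{lem:2} converts this into the desired \eqref{eq:87}; positive definiteness is immediate since $c_{\,1}>0$ and $\lambda_{\,\min}>0$.

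The main obstacle is the bookkeeping behind the node-wise identity: verifying that the edge-based diagonal sums from \eqref{eq:65}, \eqref{eq:67} and the node-based mixed sum from \eqref{eq:69} really do repackage into the four-quadrant form at \emph{every} node type, with weights matching those of Lemma~\ref{lem:3}. The interior nodes are routine, but the left, right and bottom boundary edges and the two corner nodes $\q_{\,0,\,0}$ and $\q_{\,N_1,\,0}$ each drop some quadrants and carry reduced weights; one must check that in each case the surviving $\Qf$-terms and the surviving squares match under the substitution $\K\mapsto\Id$, so that the cell-wise comparison $c_{\,1}(\cdots)\leq \Qf(\cdots)\leq c_{\,2}(\cdots)$ integrates to the clean global inequality against $\scalh{\Ba u}{u}$.
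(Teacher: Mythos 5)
Your proposal is correct and follows essentially the same route as the paper's own proof: setting $v=u$ in the identities \eqref{eq:65}, \eqref{eq:69}, \eqref{eq:67}, assembling the node-wise four-quadrant sum of the quadratic form $\Qf$, applying the uniform ellipticity bounds \eqref{eq:34} pointwise, identifying the resulting sum of squares with $\scalh{\Ba\,u}{u}$ via Lemma~\ref{lem:3} (after the forward-to-backward difference reindexing), and concluding with Lemma~\ref{lem:2}. The bookkeeping concern you flag about boundary and corner nodes is exactly the part the paper carries out explicitly, and it works out as you anticipate.
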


\begin{proof}
As in the first step of the proof we take $v\ \equiv\ u\ \in\ \H$ and substitute it into equations \eqref{eq:65}. After regrouping interior and boundary nodes, we obtain:
\begin{multline*}
  \scalh{\Lambda_{\,1\,1}\,u}{u}\ =\ -\,\frac{h_{\,1}\,h_{\,2}}{4}\;\biggl\{\,\sum_{j_{\,1}\,=\,1}^{N_1\,-\,1}\sum_{j_{\,2}\,=\,1}^{N_2\,-\,1}\Bigl[\,2\,\K_{\,1\,1}\,\bigl(u_{\,q^{\,1}}^{\,\natural}\bigr)^{\,2}\ +\ 2\,\K_{\,1\,1}\,\bigl(u_{\,q^{\,1}}^{\,\flat}\bigr)^{\,2}\,\Bigr]_{j_{\,1},\,j_{\,2}}\ +\\
  \sum_{j_{\,2}\,=\,1}^{N_2\,-\,1}\Bigl[\,2\,\K_{\,1\,1}\,\bigl(u_{\,q^{\,1}}^{\,\natural}\bigr)^{\,2}\,\Bigr]_{0,\,j_{\,2}}\ +\ \sum_{j_{\,2}\,=\,1}^{N_2\,-\,1}\Bigl[\,2\,\K_{\,1\,1}\,\bigl(u_{\,q^{\,1}}^{\,\flat}\bigr)^{\,2}\,\Bigr]_{N_1,\,j_{\,2}}\ +\\
  \sum_{j_{\,1}\,=\,1}^{N_1\,-\,1}\Bigl[\,\K_{\,1\,1}\,\bigl(u_{\,q^{\,1}}^{\,\natural}\bigr)^{\,2}\ +\ \K_{\,1\,1}\,\bigl(u_{\,q^{\,1}}^{\,\flat}\bigr)^{\,2}\,\Bigr]_{\,j_{\,1},\,0}\ +\ \Bigl[\,\K_{\,1\,1}\,\bigl(u_{\,q^{\,1}}^{\,\natural}\bigr)^{\,2}\,\Bigr]_{\,0,\,0}\ +\ \Bigl[\,\K_{\,1\,1}\,\bigl(u_{\,q^{\,1}}^{\,\flat}\bigr)^{\,2}\,\Bigr]_{\,N_1,\,0}\,\biggr\}\,.
\end{multline*}
We perform the same operation with equation \eqref{eq:67} as well:
\begin{multline*}
  \scalh{\Lambda_{\,2\,2}\,u}{u}\ =\ -\,\frac{h_{\,1}\,h_{\,2}}{4}\;\biggl\{\,\sum_{j_{\,1}\,=\,1}^{N_1\,-\,1}\sum_{j_{\,2}\,=\,1}^{N_2\,-\,1}\Bigl[\,2\,\K_{\,2\,2}\,\bigl(u_{\,q^{\,2}}^{\,\natural}\bigr)^{\,2}\ +\ 2\,\K_{\,2\,2}\,\bigl(u_{\,q^{\,2}}^{\,\flat}\bigr)^{\,2}\,\Bigr]_{j_{\,1},\,j_{\,2}}\ +\\
  \sum_{j_{\,2}\,=\,1}^{N_2\,-\,1}\Bigl[\,\K_{\,2\,2}\,\bigl(u_{\,q^{\,2}}^{\,\natural}\bigr)^{\,2}\ +\ \K_{\,2\,2}\,\bigl(u_{\,q^{\,2}}^{\,\flat}\bigr)^{\,2}\,\Bigr]_{\,0,\,j_{\,2}}\ +\ \sum_{j_{\,2}\,=\,1}^{N_2\,-\,1}\Bigl[\,\K_{\,2\,2}\,\bigl(u_{\,q^{\,2}}^{\,\natural}\bigr)^{\,2}\ +\ \K_{\,2\,2}\,\bigl(u_{\,q^{\,2}}^{\,\flat}\bigr)^{\,2}\,\Bigr]_{\,N_1,\,j_{\,2}}\ +\\
  \sum_{j_{\,1}\,=\,1}^{N_1\,-\,1}\Bigl[\,2\,\K_{\,2\,2}\,\bigl(u_{\,q^{\,2}}^{\,\natural}\bigr)^{\,2}\,\Bigr]_{\,j_{\,1},\,0}\ +\ \Bigl[\,\K_{\,2\,2}\,\bigl(u_{\,q^{\,2}}^{\,\natural}\bigr)^{\,2}\,\Bigr]_{\,0,\,0}\ +\ \Bigl[\,\K_{\,2\,2}\,\bigl(u_{\,q^{\,2}}^{\,\natural}\bigr)^{\,2}\,\Bigr]_{\,N_1,\,0}\ + \\
  \sum_{j_{\,1}\,=\,1}^{N_1\,-\,1}\Bigl[\,2\,\K_{\,2\,2}\,\bigl(u_{\,q^{\,2}}^{\,\flat}\bigr)^{\,2}\,\Bigr]_{\,j_{\,1},\,N_2}\ +\ \Bigl[\,\K_{\,2\,2}\,\bigl(u_{\,q^{\,2}}^{\,\flat}\bigr)^{\,2}\,\Bigr]_{\,0,\,N_2}\ +\ \Bigl[\,\K_{\,2\,2}\,\bigl(u_{\,q^{\,2}}^{\,\flat}\bigr)^{\,2}\,\Bigr]_{\,N_1,\,N_2}\,.
\end{multline*}
Similarly, we set $v\ \equiv\ u$ in equation \eqref{eq:69} and using the last two formulas we have
\begin{multline*}
  \scalh{\Aa\,u}{u}\ =\\ \frac{h_{\,1}\,h_{\,2}}{4}\;\biggl\{\,\sum_{j_{\,1}\,=\,1}^{N_1\,-\,1}\sum_{j_{\,2}\,=\,1}^{N_2\,-\,1}\Bigl[\,\Qf\,\bigl(u_{\,q^{\,1}}^{\,\natural},\,u_{\,q^{\,2}}^{\,\natural}\bigr)\ +\ \Qf\,\bigl(u_{\,q^{\,1}}^{\,\natural},\,u_{\,q^{\,2}}^{\,\flat}\bigr)\ +\ \Qf\,\bigl(u_{\,q^{\,1}}^{\,\flat},\,u_{\,q^{\,2}}^{\,\natural}\bigr)\ +\ \Qf\,\bigl(u_{\,q^{\,1}}^{\,\flat},\,u_{\,q^{\,2}}^{\,\flat}\bigr)\,\Bigr]_{\,j_{\,1},\,j_{\,2}}\\
  +\ \sum_{j_{\,2}\,=\,1}^{N_2\,-\,1}\Bigl[\,\Qf\,\bigl(u_{\,q^{\,1}}^{\,\natural},\,u_{\,q^{\,2}}^{\,\natural}\bigr)\ +\ \Qf\,\bigl(u_{\,q^{\,1}}^{\,\natural},\,u_{\,q^{\,2}}^{\,\flat}\bigr)\,\Bigr]_{\,0,\,j_{\,2}}\ +\ \sum_{j_{\,2}\,=\,1}^{N_2\,-\,1}\Bigl[\,\Qf\,\bigl(u_{\,q^{\,1}}^{\,\flat},\,u_{\,q^{\,2}}^{\,\natural}\bigr)\ +\ \Qf\,\bigl(u_{\,q^{\,1}}^{\,\flat},\,u_{\,q^{\,2}}^{\,\flat}\bigr)\,\Bigr]_{\,N_1,\,j_{\,2}}\\
  +\ \sum_{j_{\,1}\,=\,1}^{N_1\,-\,1}\Bigl[\,\Qf\,\bigl(u_{\,q^{\,1}}^{\,\natural},\,u_{\,q^{\,2}}^{\,\natural}\bigr)\ +\ \Qf\,\bigl(u_{\,q^{\,1}}^{\,\flat},\,u_{\,q^{\,2}}^{\,\natural}\bigr)\,\Bigr]_{\,j_{\,1},\,0}\ +\ \Bigl[\,\Qf\,\bigl(u_{\,q^{\,1}}^{\,\natural},\,u_{\,q^{\,2}}^{\,\natural}\bigr)\,\Bigr]_{\,0,\,0}\ +\ \Bigl[\,\Qf\,\bigl(u_{\,q^{\,1}}^{\,\flat},\,u_{\,q^{\,2}}^{\,\natural}\bigr)\,\Bigr]_{\,N_1,\,0}\\
  +\ \sum_{j_{\,1}\,=\,1}^{N_1\,-\,1}\Bigl[\,2\,\Qf\,\bigl(0,\,u_{\,q^{\,2}}^{\,\flat}\bigr)\,\Bigr]_{\,j_{\,1},\,N_2}\ +\ \Bigl[\,\Qf\,\bigl(0,\,u_{\,q^{\,2}}^{\,\flat}\bigr)\,\Bigr]_{\,0,\,N_2}\ +\ \Bigl[\,\Qf\,\bigl(0,\,u_{\,q^{\,2}}^{\,\flat}\bigr)\,\Bigr]_{\,N_1,\,N_2}\,\biggr\}\,.
\end{multline*}
where we used the quadratic form $\Qf(\cdot,\,\cdot)$ defined earlier in \eqref{eq:quadratic}. By taking into account the uniform ellipticity property \eqref{eq:34}, we can estimate the quantity $\scalh{\Aa\,u}{u}$ from below:
\begin{multline}\label{eq:90}
  \scalh{\Aa\,u}{u}\ \geq\ c_{\,1}\;\frac{h_{\,1}\,h_{\,2}}{4}\cdot\biggl\{\,\sum_{j_{\,1}\,=\,1}^{N_1\,-\,1}\sum_{j_{\,2}\,=\,1}^{N_2\,-\,1}\,2\,\Bigl[\,\bigl(u_{\,q^{\,1}}^{\,\natural}\bigr)^{\,2}\ +\ \bigl(u_{\,q^{\,1}}^{\,\flat}\bigr)^{\,2}\ +\ \bigl(u_{\,q^{\,2}}^{\,\natural}\bigr)^{\,2}\ +\ \bigl(u_{\,q^{\,2}}^{\,\flat}\bigr)^{\,2}\,\Bigr]_{\,j_{\,1},\,j_{\,2}}\ +\\
  \sum_{j_{\,2}\,=\,1}^{N_2\,-\,1}\Bigl[\,2\,\bigl(u_{\,q^{\,1}}^{\,\natural}\bigr)^{\,2}\ +\ \bigl(u_{\,q^{\,2}}^{\,\natural}\bigr)^{\,2}\ +\ \bigl(u_{\,q^{\,2}}^{\,\flat}\bigr)^{\,2}\,\Bigr]_{\,0,\,j_{\,2}}\ +\ \sum_{j_{\,2}\,=\,1}^{N_2\,-\,1}\Bigl[\,2\,\bigl(u_{\,q^{\,1}}^{\,\flat}\bigr)^{\,2}\ +\ \bigl(u_{\,q^{\,2}}^{\,\natural}\bigr)^{\,2}\ +\ \bigl(u_{\,q^{\,2}}^{\,\flat}\bigr)^{\,2}\,\Bigr]_{\,N_1,\,j_{\,2}}\ + \\
  \sum_{j_{\,1}\,=\,1}^{N_1\,-\,1}\Bigl[\,\bigl(u_{\,q^{\,1}}^{\,\natural}\bigr)^{\,2}\ +\ \bigl(u_{\,q^{\,1}}^{\,\flat}\bigr)^{\,2}\ +\ 2\,\bigl(u_{\,q^{\,2}}^{\,\natural}\bigr)^{\,2}\,\Bigr]_{\,j_{\,1},\,0}\ +\ \Bigl[\,\bigl(u_{\,q^{\,1}}^{\,\natural}\bigr)^{\,2}\ +\ \bigl(u_{\,q^{\,2}}^{\,\natural}\bigr)^{\,2}\,\Bigr]_{\,0,\,0}\ +\ \Bigl[\,\bigl(u_{\,q^{\,1}}^{\,\flat}\bigr)^{\,2}\ +\ \bigl(u_{\,q^{\,2}}^{\,\natural}\bigr)^{\,2}\,\Bigr]_{\,N_1,\,0}\\
  +\ 2\,\sum_{j_{\,1}\,=\,1}^{N_1\,-\,1}\bigl(u_{\,q^{\,2}}^{\,\flat}\bigr)^{\,2}_{\,j_{\,1},\,N_2}\ +\ \bigl(u_{\,q^{\,2}}^{\,\flat}\bigr)^{\,2}_{\,0,\,N_2}\ +\ \bigl(u_{\,q^{\,2}}^{\,\flat}\bigr)^{\,2}_{\,N_{1},\,N_2}\,\biggr\}\ \defeq \Bf_{\,\mathrm{low}}\,.
\end{multline}
Then, in the expression above for $\Bf_{\,\mathrm{low}}$ we replace\footnote{We can do it since the right finite difference can be considered as the left one for the subsequent value of the index. Thus, this operation can be seen as a change of index by one in all summations.} all right derivatives $u_{\,q^{\,\alpha}}^{\,\natural}$ by their left counterparts $u_{\,q^{\,\alpha}}^{\,\flat}\,$:
\begin{multline*}
  \Bf_{\,\mathrm{low}}\ =\ c_{\,1}\,\biggl\{\,\sum_{j_{\,1}\,=\,1}^{N_1}\,\Bigl[\,\sum_{j_{\,2}\,=\,1}^{N_2\,-\,1}\,\bigl(u_{\,q^{\,1}}^{\,\flat}\bigr)^{\,2}_{\,j_{\,1},\,j_{\,2}}\ +\ \frac{1}{2}\;\bigl(u_{\,q^{\,1}}^{\,\flat}\bigr)^{\,2}_{\,j_{\,1},\,0}\,\Bigr]\ +\\
  \sum_{j_{\,2}\,=\,1}^{N_2}\,\Bigl[\,\sum_{j_{\,1}\,=\,1}^{N_1\,-\,1}\,\bigl(u_{\,q^{\,2}}^{\,\flat}\bigr)^{\,2}_{\,j_{\,1},\,j_{\,2}}\ +\ \frac{1}{2}\;\bigl(u_{\,q^{\,2}}^{\,\flat}\bigr)^{\,2}_{\,0,\,j_{\,2}}\ +\ \frac{1}{2}\;\bigl(u_{\,q^{\,2}}^{\,\flat}\bigr)^{\,2}_{\,N_1,\,j_{\,2}}\,\Bigr]\,\biggr\}\,h_{\,1}\,h_{\,2}\,.
\end{multline*}
Now we can make use of Lemma~\ref{lem:3}, where we showed identities \eqref{eq:85} and \eqref{eq:86}, which yield:
\begin{equation*}
  \Bf_{\,\mathrm{low}}\ \equiv\ c_{\,1}\,\Bigl\{\scalh{\Ba_{\,1}\,u}{u}\ +\ \scalh{\Ba_{\,2}\,u}{u}\Bigr\}\,.
\end{equation*}
In other words, we just showed the following estimation from below:
\begin{equation*}
  \scalh{\Aa\,u}{u}\ \geq\ c_{\,1}\,\scalh{\Ba\,u}{u}\,.
\end{equation*}
Finally, by applying Lemma~\ref{lem:2} we obtain that
\begin{equation*}
  \scalh{\Ba\,u}{u}\ \geq\ \lambda_{\,\min}\,\norm{u}^{\,2}_{\,\H}\,.
\end{equation*}
The two last inequalities correspond precisely to the required lower bound in \eqref{eq:87}. By departing again from equation \eqref{eq:90} and using the same techniques we can show the upper bound in \eqref{eq:87}. This completes the proof of this Theorem.
\end{proof}

\begin{remark}
During the proof of Theorem~\ref{thm:2} we established also the following estimates:
\begin{equation*}
  c_{\,1}\,\scalh{\Ba\,u}{u}\ \leq\ \scalh{\Aa\,u}{u}\ \leq\ c_{\,2}\,\scalh{\Ba\,u}{u}\,, \qquad \forall u\ \in\ \H\,,
\end{equation*}
or equivalently
\begin{equation*}
  c_{\,1}\,\Ba\ \leq\ \Aa\ \leq\ c_{\,2}\,\Ba\,.
\end{equation*}
Such operators $\Aa$ and $\Ba$ satisfying the inequalities above are usually referred to as \emph{energetically equivalent} \cite{Samarskii2001}.
\end{remark}

\subsubsection{Final remarks}

From the positive definiteness property of the operator $\Aa$ follows the unique solvability property of the finite difference problem \eqref{eq:46}. To solve this problem numerically one can use almost any iterative technique. It is natural to use the conjugate gradient method whose convergence is guaranteed if the operator is positive definite and self-adjoint \cite{Polyak1969}. The speed of convergence depends essentially on the conditioning number $\kappa$ of the difference operator $\Aa\,$, which can be estimated as
\begin{equation*}
  \kappa\ \leq\ \frac{c_{\,2}}{c_{\,1}}\cdot\frac{\lambda_{\,\max}}{\lambda_{\,\min}}\ =\ \frac{c_{\,2}}{c_{\,1}}\cdot\underbrace{\frac{\dfrac{4}{h_{\,1}^{\,2}}\ +\ \dfrac{4}{h_{\,2}^{\,2}}\;\cos^{\,2}\Bigl(\dfrac{\pi\,h_{\,2}}{4}\Bigr)}{\dfrac{4}{h_{\,2}^{\,2}}\;\sin^{\,2}\Bigl(\dfrac{\pi\,h_{\,2}}{4}\Bigr)}}_{\displaystyle{(\star)}}\,.
\end{equation*}
For larger values of $\kappa$ the convergence will be slower. As we showed earlier in Section~\ref{sec:prop}, the first factor $\dfrac{c_{\,2}}{c_{\,1}}$ is completely determined by the properties of the geometric mapping \eqref{eq:map}. This ratio becomes large for the meshes with highly elongated or highly distorted elements. The second factor $(\star)$ depends on the number of nodes $N_{1,\,2}$ chosen in the directions $O\,q^{1,\,2}$ correspondingly and $(\star)$ becomes larger when we refine the grid (\ie when $N_{1,\,2}$ increase).

%%% ------------------------------------------------------------------------ %%%

\section{Numerical algorithm}
\label{sec:alg}

In this Section we describe briefly how the numerical code is organized and we provide the details on the treatment of kinematic and dynamic boundary conditions on the free surface and on the lateral fixed and moving walls.

Assume that the position of the wall $x\ =\ s^{\,n}$ on the time layer $t\ =\ t^{\,n}$ is known along with all other fields. The grid $\Omega_{\,h}^{\,n}$ with nodes $\x_{\,\j}^{\,n}\,$, ordered using a multi-index $\j\ =\ \bigl(j_{\,1},\,j_{\,2}\bigr)$ in the physical space is constructed. On this grid we know the values of grid functions $\bigl\{\xi_{\,j_{\,1}}^{\,n}\bigr\}_{\,j_{\,1}}\,$, $\bigl\{\eta_{\,j_{\,1}}^{\,n}\bigr\}_{\,j_{\,1}}$ and $\bigl\{\phi_{\,\j}^{\,n}\bigr\}_{\,\j}\,$. The time marching requires to compute these quantities on the following time layer $t\ =\ t^{\,n+1}\,$. It is done in several stages:
\begin{enumerate}
  % (1):
  \item First, we compute the values of the velocity potential $\bigl\{\phi_{\,j_{\,1},\,N_2}^{\,n+1}\bigr\}_{\,j_{\,1}}$ at the free surface node $\q_{\,j_{\,1},\,N_2}\,$.
  % (2):
  \item These values are used as the \textsc{Dirichlet} data in the linear problem \eqref{eq:46} to determine the velocity potential $\bigl\{\phi_{\,j_{\,1},\,j_{\,2}}^{\,n+1}\bigr\}_{\,j_{\,1},\,j_{\,2}}$ in all other nodes $\q_{\,\j}$ with $0\ \leq\ j_{\,1}\ \leq\ N_1$ and $0\ \leq\ j_{\,2}\ <\ N_2\,$.
  % (3):
  \item Then, we find new positions $\bigl\{\xi_{\,j_{\,1}}^{\,n+1}\bigr\}_{\,j_{\,1}}\,$, $\bigl\{\eta_{\,j_{\,1}}^{\,n+1}\bigr\}_{\,j_{\,1}}$ of the free surface by using the finite difference approximation of the kinematic boundary conditions \eqref{eq:kinfsQ} and \eqref{eq:kinfsQxi}.
  % (4):
  \item The new position $x\ =\ s^{\,n+1}$ of the moving wall is found from the discretized version of the Equation~\eqref{eq:sping}.
  % (5):
  \item Finally, the new grid $\bar{\Omega}_{\,h}^{\,n+1}$ is constructed on the following time layer $t\ =\ t^{\,n+1}\,$.
\end{enumerate}
Every step above involves grid functions on various layers and to achieve a better understanding of all the stages of our algorithm, below we provide a detailed description of every item.

%%% ------------------------------------------------------------------------ %%%

\subsection{Approximation of the dynamic boundary condition (1)}

The finite difference approximation to the dynamic boundary condition \eqref{eq:dynfsQ} takes the following form:
\begin{equation}\label{eq:94}
  \frac{\phi_{\,\j}^{\,n+1}\ -\ \phi_{\,\j}^{\,n}}{\tau_{\,n}}\ -\ \bigl(u_{\,\j}^{\,n}\cdot x_{\,t,\,\j}\ +\ v_{\,\j}^{\,n}\cdot y_{\,t,\,\j}\bigr)\ +\ \half\,\abs{\u_{\,\j}^{\,n}}^{\,2}\ +\ \eta_{\,j_{\,1}}^{\,n}\ =\ 0\,,
\end{equation}
where $\j\ =\ \bigl(j_{\,1},\,j_{\,2}\bigr)$ is a multi-index with $j_{\,1}\ =\ 1,\,2,\,\ldots,\,N_1\,-\,1$ and $j_{\,2}\ =\ N_2\,$. In order to compute the speeds of grid nodes we employ finite differences applied to two last grids $\bar{\Omega}_{\,h}^{\,n}$ and $\bar{\Omega}_{\,h}^{\,n-1}\,$:
\begin{equation}\label{eq:95}
  \x_{\,t,\,\j}\ =\ \frac{\x_{\,\j}^{\,n}\ -\ \x_{\,\j}^{\,n-1}}{\tau_{\,n-1}}\,,
\end{equation}
where $\tau_{\,n-1}\ \equiv\ t^{\,n}\ -\ t^{\,n-1}$ is the local time step, $\abs{\u_{\,\j}^{\,n}}^{\,2}\ \equiv\ \bigl(u_{\,\j}^{\,n}\bigr)^{\,2}\ +\ \bigl(v_{\,\j}^{\,n}\bigr)^{\,2}\,$. In formula \eqref{eq:95} we use only the first order approximation in time. This is done for the sake of algorithm memory efficiency. Otherwise, to achieve at least the second order accuracy in $\tau\,$, we would have to keep in memory the grids for three time layers. In the current implementation we keep track of the preceding grid only.

The \textsc{Cartesian} components $u_{\,\j}^{\,n}$ and $v_{\,\j}^{\,n}$ of the velocity vector $\u_{\,\j}^{\,n}$ are computed after approximating formulas \eqref{eq:cartU} and \eqref{eq:cartV}:
\begin{equation}\label{eq:96}
  u_{\,\j}^{\,n}\ =\ \biggl[\,\frac{\phi_{q^{\,1}}\cdot y_{q^{\,2}}\ -\ \phi_{q^{\,2}}\cdot y_{q^{\,1}}}{\J}\,\biggr]_{\,\j}^{\,n}\,, \qquad
  v_{\,\j}^{\,n}\ =\ \biggl[\,\frac{-\phi_{q^{\,1}}\cdot x_{q^{\,2}}\ +\ \phi_{q^{\,2}}\cdot x_{q^{\,1}}}{\J}\,\biggr]_{\,\j}^{\,n}\,.
\end{equation}
Above one has to approximate also the partial derivatives $\phi_{\,q^{\,1}}\,$, $\x_{\,q^{\,1}}$ over the independent variable $q^{\,1}$ along the upper side of the square $\Q^{\,0}$. These derivatives are computed with standard central finite differences up to the second order accuracy. The derivatives $\phi_{\,q^{\,2}}\,$, $\x_{\,q^{\,2}}$ are computed with one-sided finite differences of the second order as well (in order to have the uniform second order accuracy in space).

We use equation \eqref{eq:94} in order to compute the updated values of the velocity potential only in interior nodes of the grid $\gamma_{\,s}^{\,h}\,$. For two nodes located in the upper corners $\q_{\,0,\,N_2}\ =\ (0,\,1)$ and $\q_{\,N_1,\,N_2}\ =\ (1,\,1)$ we use modified approximation formulas which take into account the lateral walls $\gamma_{\,\ell,\,r}$ impermeability. In the computational domain the left wall has the fixed coordinate $q^{\,1}\ =\ 0\,$. The intersection point of the free surface with the left boundary is a \emph{triple point} and it has the coordinate $\x_{\,0,\,N_2}\,$. This point is permanently moving up and down the left wall.

In the coordinate system $O\,q^{\,1}\,q^{\,2}$ the \textsc{Cartesian} components $\u\ =\ \odd{\x}{t}$ become the so-called contravariant components of the velocity $\v\ =\ \odd{\q}{t}$ with
\begin{equation}\label{eq:97}
  v^{\,\alpha}\ =\ \od{q^{\,\alpha}}{t}\ =\ \pd{q^{\,\alpha}}{t}\ +\ u\;\pd{q^{\,\alpha}}{x}\ +\ v\;\pd{q^{\,\alpha}}{y}\,, \qquad \alpha\ =\ 1,\,2\,.
\end{equation}
Consequently, if a fluid particle moves such that its coordinate $q^{\,\alpha}$ does not change, then the corresponding contravariant velocity component $v^{\,\alpha}$ has to vanish. For instance, the fluid particles, which constitute the free surface, have the coordinate $q^{\,2}\ =\ 1\,$, consequently
\begin{equation*}
  v^{\,2}\bigr\vert_{\,q^{\,2}\,=\,1}\ =\ 0\,.
\end{equation*}
Similarly, for fluid particles sliding along the left vertical wall we always have $q^{\,1}\ =\ 0\,$, which yields
\begin{equation*}
  v^{\,1}\bigr\vert_{\,q^{\,1}\,=\,0}\ =\ 0\,.
\end{equation*}
The point $\x_{\,0,\,N_2}\ \in\ \gamma_{\,s}^{\,h}\ \cap\ \gamma_{\,\ell}^{\,h}\,$, thus
\begin{equation}\label{eq:100}
  v^{\,1}\,\bigl(\q_{\,0,\,N_2}\bigr)\ =\ v^{\,2}\,\bigl(\q_{\,0,\,N_2}\bigr)\ \equiv\ 0\,.
\end{equation}
From formula \eqref{eq:97} and using also equations \eqref{eq:25}, \eqref{eq:27} we obtain the following expressions for contravariant components of the velocity:
\begin{equation*}
  v^{\,1}\ =\ \frac{\bigl(u\ -\ x_{\,t}\bigr)\cdot y_{\,q^{\,2}}\ -\ \bigl(v\ -\ y_{\,t}\bigr)\cdot x_{\,q^{\,2}}}{\J}\,, \qquad
  v^{\,2}\ =\ \frac{-\bigl(u\ -\ x_{\,t}\bigr)\cdot y_{\,q^{\,1}}\ +\ \bigl(v\ -\ y_{\,t}\bigr)\cdot x_{\,q^{\,1}}}{\J}\,.
\end{equation*}
From these expressions and using boundary condition\footnote{We implicitly use also the fact that the transformation \eqref{eq:map} is non-degenerate.} \eqref{eq:100} for the \emph{triple point} $\x_{\,0,\,N_2}$ we obtain that
\begin{equation*}
  x_{\,t}\bigr\vert_{\,\q_{\,0,\,N_2}}\ =\ u\,, \qquad
  y_{\,t}\bigr\vert_{\,\q_{\,0,\,N_2}}\ =\ v\,.
\end{equation*}
Henceforth, $\abs{\u}^{\,2}\ \equiv\ u^{\,2}\ +\ v^{\,2}\ =\ u\,x_{\,t}\ +\ v\,y_{\,t}$ and the free surface dynamic boundary condition \eqref{eq:dynfsQ} in the corner $\q_{\,0,\,N_2}$ becomes:
\begin{equation*}
  \phi_{\,t}\ -\ \half\;\bigl(u\cdot x_{\,t}\ +\ v\cdot y_{\,t}\bigr)\ +\ \eta\ =\ 0\,, \qquad \x\ =\ \x_{\,0,\,N_2}\,.
\end{equation*}
The difference equation \eqref{eq:94} correspondingly takes the form:
\begin{equation}\label{eq:103}
  \frac{\phi_{\,\j}^{\,n+1}\ -\ \phi_{\,\j}^{\,n}}{\tau_{n}}\ -\ \half\;\bigl(u_{\,\j}^{\,n}\cdot x_{\,t,\,\j}\ +\ v_{\,\j}^{\,n}\cdot y_{\,t,\,\j}\bigr)\ +\ \eta_{\,j_{\,1}}^{\,n}\ =\ 0\,, \qquad \j\ =\ \bigl(j_{\,1},\,j_{\,2}\bigr)\ \equiv\ (0,\,N_2)\,.
\end{equation}
In a similar way one can derive the finite difference discretization of the boundary condition in the right triple point. We give only the final result. It coincides with equation \eqref{eq:103} provided that we make a substitution $j_{\,1}\ =\ N_1\,$.

%%% ------------------------------------------------------------------------ %%%

\subsection{Computation of the velocity potential (2)}

On the second stage of the numerical algorithm we solve the finite difference problem \eqref{eq:46} in order to find the values of the velocity potential $\bigl\{\phi_{\,\j}^{\,n+1}\bigr\}_{\,\j}$ in the nodes $\q_{\,\j}\ \in\ \bar{\Q}_{\,h}^{\,0}\ \setminus\ \gamma_{\,s}^{\,h}$. In other words, in the nodes $\j\ =\ \bigl(j_{\,1},\,j_{\,2})$ with $0\ \leq\ j_{\,1}\ \leq\ N_1$ and $0\ \leq\ j_{\,2}\ <\ N_2\,$.

In order to invert the linear system \eqref{eq:46} we tested several iterative methods and no method clearly outperformed the others. In the final implementation of the code we used the Successive Over Relaxation (SOR) method \cite{Young1950} that we remind briefly here. The choice of the SOR method can be explained by two main reasons:
\begin{enumerate}
  \item We would like to have a sufficiently simple to implement and sufficiently efficient method to find approximate solutions to relatively large linear systems
  \item The method does not have to rely heavily on the matrix structure (such as \textsc{Thomas}'s algorithm, for example). For instance, if we include obstacles in the fluid domain, its topology might change, which implies some drastic changes in the pattern of non-zero elements of the matrix.
\end{enumerate}
Thus, to our opinion SOR method represents the best trade-off among the efficiency, generality and ease of implementation.

We use a nine-point stencil in the finite difference scheme. Thus, equation $\j$ has the following general form:
\begin{equation}\label{eq:104}
  \sum_{k\,=\,0}^{8} \A_{\,k,\,\j}\,\phi_{\,k,\,\j}\ =\ \B_{\,\j}\,, \qquad
  \q_{\,\j}\ \in\ \bar{\Q}_{\,h}^{\,0}\ \setminus\ \gamma_{\,s}^{\,h}\,.
\end{equation}
The last condensed form is obtained from \eqref{eq:46} by regrouping all the terms in front of unknowns $\bigl\{\phi_{\,\j}^{\,n+1}\bigr\}_{\,\j}\,$. The corresponding coefficient is denoted by $\A_{\,k,\,\j}\,$. In equation \eqref{eq:104} we used a local numeration of indices depicted in Figure~\ref{fig:stencil}(\textit{a}). The right hand side is given by
\begin{equation}\label{eq:105}
  \B_{\,\j}\ \eqdef\ \begin{dcases}
    \ \mu_{\,0,\,j_{\,2}}\,h_{\,2}\,, & \qquad \q_{\,\j}\ \in\ \gamma_{\,\ell}^{\,h}\,, \\
    \ \half\;\mu_{\,0,\,0}\,h_{\,2}\,, & \qquad \q_{\,\j}\ \equiv\ \q_{\,0,\,0}\,, \\
    \ 0\,, & \qquad \mathrm{otherwise}\,.
  \end{dcases}
\end{equation}
Then, one step of the SOR method reads:
\begin{equation*}
  \tilde{\phi}_{\,0,\,\j}\ \eqdef\ -\,\frac{1}{\A_{\,0,\,\j}}\;\biggl\{\,\sum_{k\,\in\,\{1,\,2,\,5,\,6\}}\A_{\,k,\,\j}\,\phi_{\,k,\,\j}^{\,(m+1)}\ +\ \sum_{k\,\in\,\{3,\,4,\,7,\,8\}}\A_{\,k,\,\j}\,\phi_{\,k,\,\j}^{\,(m)}\ -\ \B_{\,0}\,\biggr\}\,,
\end{equation*}
\begin{equation*}
  \phi_{\,0,\,\j}^{\,(m+1)}\ \eqdef\ \theta\,\tilde{\phi}_{\,0,\,\j}\ +\ (1\ -\ \theta)\,\phi_{\,0,\,\j}^{\,(m)}\,,
\end{equation*}
where $(m)$ is the iteration number and $\theta$ is the relaxation parameter. The value of this parameter weakly influences the iterative process speed of convergence. The question of finding the \emph{optimal} value of the relaxation parameter $\theta$ is the key to the efficiency of the SOR method. This question can be studied theoretically for the \textsc{Dirichlet} problem of the \textsc{Poisson} equation in a square domain $[\,0,\,\ell\,]^{\,2}$ with uniform grids. We know also that the optimal value $\theta^{\,\star}$ always belongs to the interval:
\begin{equation*}
  \theta^{\,\star}\ \in\ (1,\,2)\,.
\end{equation*}
For the \textsc{Poisson} equation on a square domain with the uniform discretization $h_{\,1}\ =\ h_{\,2}\ \equiv\ h\,$, one can show
\begin{equation*}
  \theta^{\,\star}\ =\ \frac{2}{1\ +\ \sin\Bigl(\dfrac{\pi\,h}{\ell}\Bigr)}\,.
\end{equation*}
In the last formula for $\ell\ =\ 1$ and $h\ =\ \frac{1}{30}$ we obtain the value $\theta^{\,\star}\ \approx\ 1.81\,$. In our discrete problem for the velocity potential $\phi_{\,\j}$ we cannot determine theoretically the optimal value $\theta^{\,\star}\,$. This value was determined \emph{experimentally} for each problem under consideration. In numerical computations below we always took $\theta^{\,\star}\ \in\ [\,1.85,\,1.95\,]\,$, depending on the discretization parameters $h_{\,1,\,2}$ as well.

In boundary nodes the stencil contains six points and in corner points only four. However, it does not change anything for the SOR scheme since extra coefficients $\A_{\,k,\,\j}$ can be set to zero for our convenience. The expressions of coefficients $\A_{\,k,\,\j}$ are given in Table~\ref{tab:aj}. In this Table we use the following notation:
\begin{equation*}
  \rhot_{\,1}\ \eqdef\ \frac{h_{\,2}}{h_{\,1}}\;\K_{\,1\,1}\,\bigr\vert_{\,\mathrm{W}}\,, \qquad
  \rhot_{\,2}\ \eqdef\ \frac{h_{\,1}}{h_{\,2}}\;\K_{\,2\,2}\,\bigr\vert_{\,\mathrm{S}}\,, \qquad
  \rhot_{\,3}\ \eqdef\ \frac{h_{\,2}}{h_{\,1}}\;\K_{\,1\,1}\,\bigr\vert_{\,\mathrm{E}}\,, \qquad
  \rhot_{\,4}\ \eqdef\ \frac{h_{\,1}}{h_{\,2}}\;\K_{\,2\,2}\,\bigr\vert_{\,\mathrm{N}}\,,
\end{equation*}
\begin{equation*}
  \rhot_{\,5}\ \eqdef\ \frac{\K_{\,1\,2}\,\bigr\vert_{\,1}\ +\ \K_{\,1\,2}\,\bigr\vert_{\,2}}{4}\,, \qquad
  \rhot_{\,6}\ \eqdef\ -\,\frac{\K_{\,1\,2}\,\bigr\vert_{\,2}\ +\ \K_{\,1\,2}\,\bigr\vert_{\,3}}{4}\,,
\end{equation*}
\begin{equation*}
  \rhot_{\,7}\ \eqdef\ \frac{\K_{\,1\,2}\,\bigr\vert_{\,3}\ +\ \K_{\,1\,2}\,\bigr\vert_{\,4}}{4}\,, \qquad
  \rhot_{\,8}\ \eqdef\ -\,\frac{\K_{\,1\,2}\,\bigr\vert_{\,4}\ +\ \K_{\,1\,2}\,\bigr\vert_{\,1}}{4}\,,
\end{equation*}
\begin{equation*}
  \sigma_{\,\alpha}\ \eqdef\ \frac{\K_{\,1\,2}\,\bigr\vert_{\,0}\ -\ \K_{\,1\,2}\,\bigr\vert_{\,\alpha}}{4}\,, \quad \alpha\ =\ 1,\,2\,, \qquad
  \sigma_{\,\beta}\ \eqdef\ \frac{\K_{\,1\,2}\,\bigr\vert_{\,\beta}\ -\ \K_{\,1\,2}\,\bigr\vert_{\,0}}{4}\,, \quad \beta\ =\ 3,\,4\,.
\end{equation*}
In the definition of quantities $\rhot_{\,k}$ we used the notation introduced in equations \eqref{eq:54a} -- \eqref{eq:54d}. Finally, the coefficient $\A_{\,0,\,\j}$ is determined as
\begin{equation*}
  \A_{\,0,\,\j}\ \eqdef\ -\,\sum_{k\,=\,1}^{4}\,\A_{\,k,\,\j}\,.
\end{equation*}
This concludes the description of our velocity potential solver.

\begin{table}
  \centering
  \begin{tabular}{l|c|c|c|c|c|c|c|c}
  \hline\hline
  \textit{Node type} & $\A_{\,1,\,\j}$ & $\A_{\,2,\,\j}$ & $\A_{\,3,\,\j}$ & $\A_{\,4,\,\j}$ & $\A_{\,5,\,\j}$ & $\A_{\,6,\,\j}$ & $\A_{\,7,\,\j}$ & $\A_{\,8,\,\j}$ \\
  \hline\hline
  $\q_{\,\j}\ \in\ \Q^{\,h}_{\,0}$ & $\rhot_{\,1}$ & $\rhot_{\,2}$ & $\rhot_{\,3}$ & $\rhot_{\,4}$ & $\rhot_{\,5}$ & $\rhot_{\,6}$ & $\rhot_{\,7}$ & $\rhot_{\,8}$ \\
  $\q_{\,\j}\ \in\ \gamma^{\,h}_{\,\ell}$ & $0$ & $\half\,\rho_{\,2}\ -\ \sigma_{\,2}$ & $\rho_{\,3}$ & $\half\,\rho_{\,4}\ -\ \sigma_{\,4}$ & $0$ & $\rho_{\,6}$ & $\rho_{\,7}$ & $0$ \\
  $\q_{\,\j}\ \in\ \gamma^{\,h}_{\,b}$ & $\half\,\rho_{\,1}\ -\ \sigma_{\,1}$ & $0$ & $\half\,\rho_{\,3}\ -\ \sigma_{\,3}$ & $\rho_{\,4}$ & $0$ & $0$ & $\rho_{\,7}$ & $\rho_{\,8}$ \\
  $\q_{\,\j}\ \in\ \gamma^{\,h}_{\,r}$ & $\rho_{\,1}$ & $\half\,\rho_{\,2}\ +\ \sigma_{\,2}$ & $0$ & $\half\,\rho_{\,4}\ +\ \sigma_{\,4}$ & $\rho_{\,5}$ & $0$ & $0$ & $\rho_{\,8}$ \\
  $\q_{\,0,\,0}$ & $0$ & $0$ & $\half\,\rho_{\,3}\ -\ \sigma_{\,3}$ & $\half\,\rho_{\,4}\ -\ \sigma_{\,4}$ & $0$ & $0$ & $\rho_{\,7}$ & $0$ \\
  $\q_{\,N_1,\,0}$ & $\half\,\rho_{\,1}\ -\ \sigma_{\,1}$ & $0$ & $0$ & $\half\,\rho_{\,4}\ +\ \sigma_{\,4}$ & $0$ & $0$ & $0$ & $\rho_{\,8}$ \smallskip \\
  \hline\hline
  \end{tabular}
  \bigskip
  \caption{\small\em Matrix elements of the difference equation \eqref{eq:104}.}
  \label{tab:aj}
\end{table}

%%% ------------------------------------------------------------------------ %%%

\subsection{Free surface motion (3)}

The free surface position on the following time layer $t\ =\ t^{\,n+1}$ is found by integrating in time kinematic conditions \eqref{eq:kin1}, \eqref{eq:kin2}. Here we consider the simplest approximation of these equations using an explicit upwind scheme:
\begin{equation}\label{eq:106}
  \frac{\etab_{\,j_{\,1}}^{\,n+1}\ -\ \etab_{\,j_{\,1}}^{\,n}}{\tau_{\,n}}\ +\ \frac{v_{\,\j}^{\,1,\,n}\ +\ \abs{v_{\,\j}^{\,1,\,n}}}{2}\;\etab_{\,q^{\,1},\,j_{\,1}}^{\,\flat,\,n}\ +\ \frac{v_{\,\j}^{\,1,\,n}\ -\ \abs{v_{\,\j}^{\,1,\,n}}}{2}\;\etab_{\,q^{\,1},\,j_{\,1}}^{\,\natural,\,n}\ -\ \u_{\,\j}^{\,n}\ =\ 0\,, \quad \j\ =\ \bigl(j_{\,1},\,N_2\bigr)\,,
\end{equation}
where we introduced the vector $\etab\ \eqdef\ \bigl(\eta,\,\xi\bigr)$ and index $j_{\,1}\ =\ 1,\,2,\,\ldots,\,N_1\,-\,1\,$. The contravariant velocity component $v_{\,\j}^{\,1}$ is approximated as
\begin{equation}\label{eq:107}
  v_{\,\j}^{\,1,\,n}\ =\ \frac{\bigl(u_{\,\j}^{\,n}\ -\ x_{\,t,\,\j}^{\,n}\bigr)\cdot y_{\,q^{\,2},\,\j}^{\,\flat,\,n}\ -\ \bigl(v_{\,\j}^{\,n}\ -\ y_{\,t,\,\j}^{\,n}\bigr)\cdot x_{\,q^{\,2},\,\j}^{\,\flat,\,n}}{\J_{\,\j}^{\,n}}\,.
\end{equation}
The components of the (\textsc{Cartesian}) velocity vector $\u_{\,\j}^{\,n}$ are computed using formulas \eqref{eq:96} except the fact that we use the values of the velocity potential $\bigl\{\phi_{\,\j}^{\,n+1}\bigr\}_{\,\j}$ on the subsequent time layer $t\ =\ t^{\,n+1}\,$.

We employ formula \eqref{eq:106} in order to compute the free surface position in `interior' nodes $\q_{\,j_{\,1},\,N_1}$ with $j_{\,1}\ =\ 1,\,2,\,\ldots,\,N_1\,-\,1\,$, which lie on the upper side of the square $\Q_{\,h}^{\,0}\,$. Thanks to impermeability conditions \eqref{eq:100}, kinematic conditions \eqref{eq:kin1}, \eqref{eq:kin2} take a very simple form:
\begin{equation*}
  \etab_{\,t}\ -\ \u\ =\ 0\,, \qquad \q\ \in\ \bigl\{\q_{\,0,\,N_2},\,\q_{\,N_1,\,N_2}\bigr\}\,.
\end{equation*}
Taking into account this information, the finite difference formula \eqref{eq:106} takes a much simpler form as well:
\begin{equation*}
  \frac{\etab_{\,j_{\,1}}^{\,n+1}\ -\ \etab_{\,j_{\,1}}^{\,n}}{\tau_{\,n}}\ =\ \u_{\,j_{\,1},\,N_2}^{\,n}\,, \qquad j_{\,1}\ \in\ \bigl\{0,\,N_1\bigr\}\,.
\end{equation*}
The fully discrete scheme to compute the free surface transport is described.

%%% ------------------------------------------------------------------------ %%%

\subsection{Moving wall displacement (4)}

The movable wall position is given by the function $x\ =\ s\,(t)\,$, which is a solution to equation \eqref{eq:piston}. The fluid pressure $p$ is determined from the \textsc{Cauchy}--\textsc{Lagrange} integral given in dimensionless variables in equation \eqref{eq:pressure}. The force acting on the vertical wall is given in dimensionless variables in equation \eqref{eq:force}. In curvilinear coordinates the pressure is given in equation \eqref{eq:pressureQ} and it can be used to compute the force in curvilinear coordinates as well \eqref{eq:forceQ}. The advantage of this formulation is that the wall has a fixed position $q^{\,1}\ =\ 0$ in the transformed domain. The integral in equation \eqref{eq:forceQ} is discretized using the trapezoidal quadrature formula:
\begin{equation*}
  \F\,(t)\ \approx\ \sum_{j_{\,2}\,=\,0}^{N_{\,2}\,-\,1}\;\frac{p_{\,0,\,j_{\,2}}\ +\ p_{\,0,\,j_{\,2}\,+\,1}}{2}\;h_{\,2}\,.
\end{equation*}
In order to compute the time derivative of the velocity potential $\phi_{\,t}$ appearing in the pressure term $\bigl\{p_{\,0,\,j_{\,2}}\bigr\}_{j_{\,2}\,=\,0}^{\,N_{\,2}}\,$, we use the values of the velocity potential on three time layers $\bigl\{\phi_{\,\j}^{\,n-1}\bigr\}_{\,\j}\,$, $\bigl\{\phi_{\,\j}^{\,n}\bigr\}_{\,\j}$ and $\bigl\{\phi_{\,\j}^{\,n+1}\bigr\}_{\,\j}\,$. So, the time derivative of the velocity potential is approximated as:
\begin{equation*}
  \bigl(\phi_{\,t}\bigr)_{0,\,j_{\,2}}\ \approx\ \frac{3\,\phi_{\,0,\,j_{\,2}}^{\,n+1}\ -\ 4\,\phi_{\,0,\,j_{\,2}}^{\,n}\ +\ \phi_{\,0,\,j_{\,2}}^{\,n-1}}{2\,\Delta t}\,.
\end{equation*}
However, due to the change of variables in \eqref{eq:forceQ} one has to compute also the velocities of grid nodes. This is done as specified in \eqref{eq:95}. \textsc{Cartesian} velocities computed as specified in \eqref{eq:96} with the only difference --- we use the values of velocity potential at the new time level $\bigl\{\phi_{\,\j}^{\,n+1}\bigr\}_{\,\j}\,$.

In order to integrate numerically the nonlinear differential equation~\eqref{eq:spring}, we rewrite this equation as a system of two first order differential equations:
\begin{align*}
  \dot{s}\ &=\ \upsilon\,, \\
  m\,\dot{\upsilon}\ +\ k\,s\ &=\ -\,\bigl[\,\F\,(t)\ -\ \F\,(0)\,\bigr]\,,
\end{align*}
together with initial conditions:
\begin{equation*}
  s\,(0)\ =\ 0\,, \qquad \upsilon\,(0)\ =\ 0\,.
\end{equation*}
The last system of equations can be rewritten in the vectorial form for our convenience:
\begin{equation}\label{eq:cauchy}
  \dot{\upupsilon}\ =\ \Upxi\scal\upupsilon\ +\ \Upsigma\,(t)\,, \qquad
  \upupsilon\,(0)\ =\ \vO\,,
\end{equation}
where we introduced the following notations:
\begin{equation*}
  \upupsilon\,(t)\ \eqdef\ \begin{pmatrix}
    s\,(t) \\
    \upsilon\,(t)
  \end{pmatrix}\,, \quad
  \Upxi\ \eqdef\ \begin{pmatrix}
    0 & 1 \\
    -\,\frac{k}{m} & 0
  \end{pmatrix}\,, \quad
  \Upsigma\,(t)\ \eqdef\ \begin{pmatrix}
    0 \\
    -\,\frac{1}{m}\;\bigl[\,\F\,(t)\ -\ \F\,(0)\,\bigr]
  \end{pmatrix}\,.
\end{equation*}
The \textsc{Cauchy} problem \eqref{eq:cauchy} is solved using the so-called modified \textsc{Euler} method\footnote{This scheme is also called the second order explicit \textsc{Runge}--\textsc{Kutta} scheme \cite{Butcher2016}.}:
\begin{align*}
  \frac{\upupsilon^{\,\star}\ -\ \upupsilon^{\,n}}{\Delta t}\ &=\ \Upxi\scal\upupsilon^{\,n}\ +\ \Upsigma^{\,n}\,, \\
  \frac{\upupsilon^{\,n+1}\ -\ \upupsilon^{\,n}}{\Delta t}\ &=\ \frac{1}{2}\;\bigl[\,\Upxi\scal\upupsilon^{\,n}\ +\ \Upxi\scal\upupsilon^{\,\star}\,\bigr]\ +\ \Upsigma^{\,n}\,.
\end{align*}
If we exclude the intermediate variable $\upupsilon^{\,\star}$ from last equations and rewrite the obtained equations in the component-wise form, we obtain the following scheme:
\begin{align*}
  \frac{s^{\,n+1}\ -\ s^{\,n}}{\Delta t}\ &=\ \upsilon^{\,n}\ -\ \frac{\Delta t}{2}\;\frac{k}{m}\;s^{\,n}\ +\ \frac{\Delta}{2\,m}\;\bigl[\,\F\,(t)\ -\ \F\,(0)\,\bigr]\,, \\
  \frac{\upsilon^{\,n+1}\ -\ \upsilon^{\,n}}{\Delta t}\ &=\ -\,\frac{k}{m}\;s^{\,n}\ +\ \frac{1}{m}\;\bigl[\,\F\,(t)\ -\ \F\,(0)\,\bigr]\ -\ \frac{\Delta t}{2}\;\frac{k}{m}\;\upsilon^{\,n}\,.
\end{align*}
The last scheme is implemented in our numerical code.

%%% ------------------------------------------------------------------------ %%%

\subsection{Elliptic mapping construction (5)}
\label{sec:ell}

In previous sections we assumed that mapping \eqref{eq:map} was known. In order to complete our numerical method description we have to describe how to construct this mapping in practice. Namely, we shall use the so-called \emph{equidistribution} method \cite{Khakimzyanov2001}. A typical grid generated using this method is shown in Figure~\ref{fig:grid}. Recently this method was successfully applied in 1D to the simulation of conservation laws \cite{Khakimzyanov2015a}. Below we provide a detailed description of this method to two spatial dimensions.

\begin{figure}
  \centering
  \includegraphics[width=0.79\textwidth]{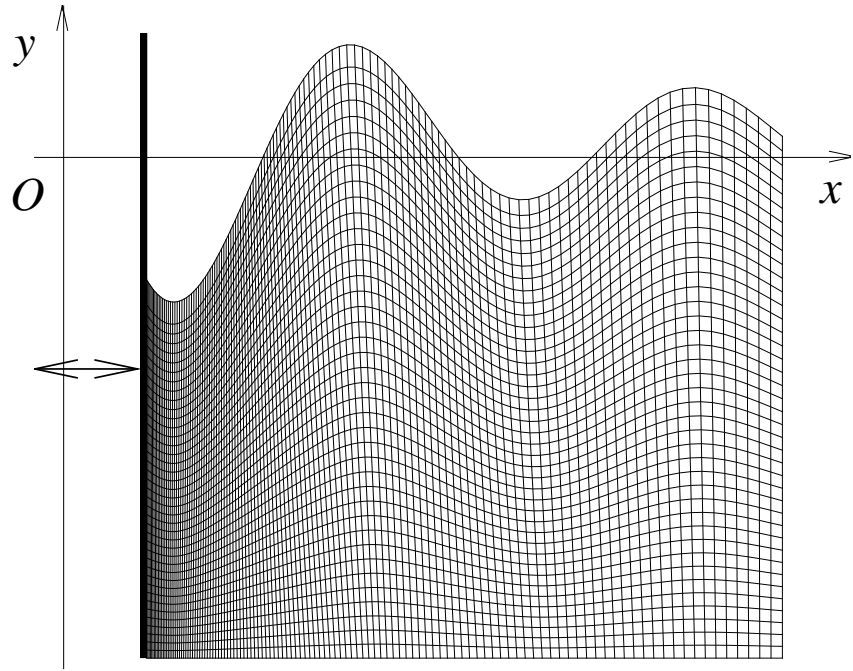}
  \caption{\small\em A typical grid generated using the equidistribution method. On this picture we show a zoom on the area near the moving wall. We underline that the free surface profile is chosen to illustrate the adaptive features of the grid and it does not come from any physical simulation.}
  \label{fig:grid}
\end{figure}

According to the equidistribution method, the coordinates of grid nodes $\bigl\{\x_{\,\j}^{\,n+1}\bigr\}_{\,\j}\ \subseteq\ \Omega\,(t)$ in the physical space on the subsequent time layer $t\ =\ t^{\,n+1}$ are determined by solving the following vectorial parabolic equation:
\begin{equation}\label{eq:movingmesh}
  \vsigma\;\pd{\x}{t}\ =\ \pd{}{q^{\,1}}\;\biggl[\,g_{\,2\,2}\,\varpi\,\pd{\x}{q^{\,1}}\,\biggr]\ +\ \pd{}{q^{\,2}}\;\biggl[\,g_{\,1\,1}\,\varpi\,\pd{\x}{q^{\,2}}\,\biggr]\,,
\end{equation}
where $\vsigma\ >\ 0$ is a positive smoothing parameter whose value is chosen to minimize the oscillations in grid nodes trajectories. Above $\varpi\ =\ \varpi\,(\x,\,t)$ is the so-called \emph{monitoring function}, which determines the local density of the grid nodes and $\bigl\{g_{\,\alpha\,\alpha}\bigr\}_{\,\alpha\,=\,1}^{\,2}$ are metric tensor components defined in \eqref{eq:metric}.

The equidistribution method based on the solution of parabolic equations for grid nodes coordinates was proposed in \cite{Shokin1982}. A few years later this method was applied to the simulation of 1D (compressible) gas dynamics problems in \cite{Bell1983, Shokin1985}. This method has been especially designed for non-stationary problems and its application allows to avoid (or simply reduce) abrupt changes in the nodes positions when we move from one time layer to the following. In this method one constructs the new mesh by taking into account the position of nodes at the last time layer. The new coordinates of grid nodes are determined by solving a discretized version of the parabolic problem \eqref{eq:movingmesh}. This modification of the classical equidistribution method (proposed in \cite{DeBoor1974}) of the construction of dynamically adapted moving grids has been used to solve numerous 1D problems \cite{Darin1988, Darin1988a, Darin1988b, Degtyarev1993, Dorfi1987, Zegeling1992, Mazhukin1993}, this list is not being exhaustive.

The system of nonlinear equations \eqref{eq:movingmesh} is solved using an iterative method of alternating directions. The monitoring function $\varpi$ was computed at the known solution on the time layer $t\ =\ t^{\,n}\,$. As the initial guess for this iterative process we take the known mesh $\bigl\{\x_{\,\j}^{\,n}\bigr\}_{\,\j}$ from the previous time layer as well. We mention that at the initial moment of time $t\ =\ 0$ we employ the traditional (steady) equidistribution method as it was described in \cite{Khakimzyanov1999}. It is equivalent to solve the system \eqref{eq:movingmesh} with $\vsigma\ \equiv\ 0\,$.

Since the position of boundary nodes varies\footnote{Indeed, we are dealing with free surface flows in the presence of a moving wall.} from one time step to another, before solving equations \eqref{eq:movingmesh} we have to determine the grid $\bigl\{\x_{\,\j}^{\,n+1}\bigr\}_{\,\j}$ along the boundary $\partial\,\Q_{\,h}^{\,0}\,$. Let us assume that a portion $\Gammao\ \in\ \bigl\{\Gamma_{\,\ell},\,\Gamma_{\,r},\,\Gamma_{\,b},\,\Gamma_{\,s}\bigr\}\ \equiv\ \partial\,\Q^{\,0}$ of the boundary $\Gamma^{\,n+1}\ \equiv\ \Gamma(t^{\,n+1})$ is given in the following parametric form:
\begin{equation*}
  x\ =\ x\,(\zetat)\,, \qquad y\ =\ y\,(\zetat)\,,
\end{equation*}
where $\zetat$ is a real parameter. Then, the coordinates of grid nodes $\bigl\{\x_{\,\j}^{\,n+1}\bigr\}_{\,\j}\ \subseteq\ \Gammao^{\,h}$ on the boundary can be determined by formulas:
\begin{equation*}
  x_{\,\j}^{\,n+1}\ =\ x\,(\zetat_{\,j}^{\,n+1})\,, \qquad y_{\,\j}^{\,n+1}\ =\ y\,(\zetat_{\,j}^{\,n+1})\,.
\end{equation*}
We assume that the nodes on the boundary under consideration are re-ordered with a single scalar index $j$ for the sake of notation compactness. So, the problem is to determine the discrete scalar function $\bigl\{\zetat_{\,j}^{\,n+1}\bigr\}_{\,j}$ in the parameter space. To do it, we solve numerically the following scalar nonlinear difference equation:
\begin{equation}\label{eq:114}
  \vsigma\;\frac{\zetat_{\,j+1}^{\,n+1}\ -\ \zetat_{\,j}^{\,n+1}}{\tau_{\,n}}\ =\ \frac{1}{h}\;\biggl[\,\bigl\{\varpi^{\,n}\,\tilde{\J}^{\,n+1}\bigr\}_{\,j+\frac{1}{2}}\;\frac{\zetat_{\,j+1}^{\,n+1}\ -\ \zetat_{\,j}^{\,n+1}}{h}\ -\ \bigl\{\varpi^{\,n}\,\tilde{\J}^{\,n+1}\bigr\}_{\,j-\frac{1}{2}}\;\frac{\zetat_{\,j}^{\,n+1}\ -\ \zetat_{\,j-1}^{\,n+1}}{h}\,\biggr]\,,
\end{equation}
where $h\ \in\ \bigl\{h_{\,1},\,h_{\,2}\bigr\}$ is the discrete step in space, which depends on the boundary under consideration. The function $\tilde{\J}_{\,j}^{\,n+1}\ \eqdef\ \Bigl\{\sqrt{x_{\,\zeta}^{\,2}\ +\ y_{\,\zeta}^{\,2}}\Bigr\}\Bigr\vert_{\,j}^{\,n+1}$ is the trace of the transformation \textsc{Jacobian} at this boundary. Please, notice that we compute the monitoring function $\varpi^{\,n}$ on the known position of the nodes. The difference equation \eqref{eq:114} is solved on four sides (left and right walls, bottom and free surface) of the discretized rectangle $\Q_{\,h}^{\,0}\,$. The distribution of boundary nodes obtained in this way is used as the \textsc{Dirichlet}-type boundary condition for the 2D grid generation, which is achieved by solving numerically equation \eqref{eq:movingmesh}. By the end of this step we know the grid $\bar{\Omega}_{\,h}^{\,n+1}$ nodes coordinates $\bigl\{\x_{\,\j}^{\,n+1}\bigr\}_{\,\j}\,$.

%%% ------------------------------------------------------------------------ %%%

\subsubsection{The re-computation}

At the next step we recompute quantities $\etab^{\,n+1}$ and $\bigl\{\phi^{\,n+1}_{\,\j}\bigr\}_{\,\j}$ to take into account the new mesh. For this, we solve again equation \eqref{eq:94} (or an analogue of \eqref{eq:103} on the boundaries) in which the \textsc{Cartesian} components of the velocity \eqref{eq:96} are still computed on the previous grid $\bar{\Omega}_{\,h}^{\,n}$ and using the velocity potential values from the previous time layer $t\ =\ t^{\,n}\,$. However, in formulas \eqref{eq:96} we take the \emph{new} velocity of grid nodes. In other words, formula \eqref{eq:95} is replaced by
\begin{equation*}
  \x_{\,t,\,\j}\ =\ \frac{\x_{\,\j}^{\,n+1}\ -\ \x_{\,\j}^{\,n}}{\tau_{\,n}}\,.
\end{equation*}
During the second solution of system \eqref{eq:104} the right hand side member \eqref{eq:105} is modified as well to take into account the new velocity of the left wall and, obviously, the new location of mesh nodes on this boundary. Another modification concerns the velocity components $u\,$, $v$ when they are used in the computation of the contravariant component $v_{\,\j}^{\,1,\,n}$ whose discretized formula is given in \eqref{eq:107}. The quantity $v_{\,\j}^{\,1,\,n}$ is needed to determine the new position of the free surface elevation $\etab^{\,n+1}\,$. During the present step, the components $u\,$, $v$ are computed as
\begin{equation*}
  u_{\,\j}^{\,n+1}\ =\ \biggl[\,\frac{\phi_{q^{\,1}}\cdot y_{q^{\,2}}\ -\ \phi_{q^{\,2}}\cdot y_{q^{\,1}}}{\J}\,\biggr]_{\,\j}^{\,n+1}\,, \qquad
  v_{\,\j}^{\,n+1}\ =\ \biggl[\,\frac{-\phi_{q^{\,1}}\cdot x_{q^{\,2}}\ +\ \phi_{q^{\,2}}\cdot x_{q^{\,1}}}{\J}\,\biggr]_{\,\j}^{\,n+1}\,.
\end{equation*}
The last formulas replace equations \eqref{eq:96} during this stage.

%%% ------------------------------------------------------------------------ %%%

\subsection{Stability of the scheme}

In order to study the stability of the proposed scheme, we consider the linearized governing equations. Moreover, we consider only the Initial Value Problem\footnote{In fact, we have the so-called \textsc{Cauchy}--\textsc{Poisson} problem, which is of mixed type between IVP and BVP. We say that we have an IVP since there are no boundary conditions on lateral boundaries, \ie the domain is unbounded in the horizontal directions. However, there are still boundary conditions to be satisfied on the impermeable bottom and on the (linearized) free surface.} (IVP) in order to remove the complexity of (lateral) boundary conditions treatment. The linear water wave problem is known as the \textsc{Cauchy}--\textsc{Poisson} problem \cite{Stoker1957, Mei1994, ddk, Kervella2007} since the pioneering works of Augustin Louis~\textsc{Cauchy} \cite{Cauchy1827} and Sim\'{e}on Denis \textsc{Poisson} \cite{Poisson1818}. Let us formulate this problem in precise mathematical terms. Consider a fluid layer of infinite horizontal extent ($-\infty\ <\ x\ <\ +\infty$) over a solid bottom of uniform depth $y\ =\ -h_{\,0}\ =\ \const\,$. The fluid domain is bounded from above by the free surface, whose location is assumed to be at $y\ =\ 0$ after the linearization. The \textsc{Cauchy}--\textsc{Poisson} problem consists in finding the free surface elevation $\eta$ and the velocity potential $\phi$ by solving the \textsc{Laplace} equation\footnote{This equation is to be compared with equation \eqref{eq:laplace}.} in a strip:
\begin{equation*}
  \Delta\,\phi\ =\ 0\,, \qquad -\infty\ <\ x\ <\ +\infty\,, \qquad -\,h_{\,0}\ \leq\ y\ \leq\ 0\,.
\end{equation*}
The last equation has to be completed by the following free surface boundary conditions:
\begin{align}\label{eq:118a}
  \eta_{\,t}\ -\ \phi_{\,y}\ &=\ 0\,, \qquad y\ =\ 0\,, \\
  \phi_{\,t}\ +\ g\,\eta\ &=\ 0\,, \qquad y\ =\ 0\,,\label{eq:118b}
\end{align}
and by one bottom impermeability condition:
\begin{equation*}
  \phi_{\,y}\ =\ 0\,, \qquad y\ =\ -h_{\,0}\,.
\end{equation*}
It is not difficult to see that free surface boundary conditions given above are nothing else but linearized versions of equations \eqref{eq:kinfs} and \eqref{eq:dynfs}. A particular solution to the \textsc{Cauchy}--\textsc{Poisson} problem can be easily obtained using some elementary \textsc{Fourier} analysis \cite{Stoker1957, Mei1994, ddk, Kervella2007}:
\begin{equation}\label{eq:120}
  \eta\,(x,\,t)\ =\ \eta_{\,0}\,\ue^{\ui\,(k\,x\ -\ \omega\,t)}\,, \qquad
  \phi\,(x,\,y,\,t)\ =\ \phi_{\,0}\,\ue^{\ui\,(k\,x\ -\ \omega\,t)}\,\cosh\bigl[\,k\,(y\ +\ h_{\,0})\,\bigr]\,,
\end{equation}
where $k\ \eqdef\ \dfrac{2\,\pi}{\lambda}$ is the wave number, $\lambda$ is the wave length, $\omega\ \eqdef\ \dfrac{2\,\pi}{T}$ is the wave frequency and $T$ is its period. Wave amplitudes $\eta_{\,0}\ \in\ \R$ and $\phi_{\,0}\ \in\ \R$ are some real numbers. The wave frequency $\omega$ is related to the wave number $k$ through the so-called \emph{dispersion relation} of gravity waves:
\begin{equation}\label{eq:121}
  \omega\,(k)\ =\ \pm\,\sqrt{g\,k\,\tanh(k\,h_{\,0})}\,.
\end{equation}
For the sake of simplicity we consider only waves moving rightwards. It fixes the branch $+$ in the relation above. We reiterate the fact that the dispersion relation \eqref{eq:121} is a necessary condition for the existence of solutions \eqref{eq:120}.

The numerical scheme considered above can be applied to the \textsc{Cauchy}--\textsc{Poisson} problem as well. The semi-discretization in time of free surface boundary conditions \eqref{eq:118a}, \eqref{eq:118b} reads:
\begin{align}\label{eq:semi1}
  \frac{\eta^{\,n+1}\,(x)\ -\ \eta^{\,n}\,(x)}{\tau_{\,n}}\ -\ \phi_{\,y}^{\,n+1}\ &=\ 0\,, \\
  \frac{\phi^{\,n+1}\,(x,\,0)\ -\ \phi^{\,n}\,(x,\,0)}{\tau_{\,n}}\ +\ g\,\eta^{\,n}\,(x)\ &=\ 0\,.\label{eq:semi2}
\end{align}
From now on, for the sake of simplicity we take the time step $\tau_{\,n}\ \equiv\ \tau\ >\ 0$ to be constant. The elementary solution \eqref{eq:120} can be semi-discretized as well:
\begin{equation*}
  \eta^{\,n}\,(x)\ =\ \eta_{\,0}\,\rhou^{\,n}\,\ue^{\ui\,k\,x}\,, \qquad
  \phi\,(x,\,y,\,t)\ =\ \phi_{\,0}\,\rhou^{\,n}\,\ue^{\ui\,k\,x}\,\cosh\bigl[\,k\,(y\ +\ h_{\,0})\,\bigr]\,,
\end{equation*}
where we introduced the notation $\rhou\ \eqdef\ \ue^{-\,\ui\,\omega\,\tau}\,$. By substituting this semi-discrete solution ansatz into relations \eqref{eq:semi1}, \eqref{eq:semi2} we obtain the following linear system of equations with respect to wave amplitudes $\eta_{\,0}$ and $\phi_{\,0}\,$:
\begin{align*}
  g\,\eta_{\,0}\ +\ \phi_{\,0}\;\frac{\rho\ -\ 1}{\tau}\;\cosh\bigl[\,k\,h_{\,0}\,\bigr]\ &=\ 0\,, \\
  \frac{\rho\ -\ 1}{\tau}\;\eta_{\,0}\ -\ \phi_{\,0}\,\rhou\,k\,\sinh\bigl[\,k\,h_{\,0}\,\bigr]\ &=\ 0\,.
\end{align*}
In order to have non-trivial solutions, the determinant of this system has to vanish. It gives us the following quadratic equation with respect to the transfer coefficient $\rhou\,$:
\begin{equation}\label{eq:quad}
  \rhou^{\,2}\ -\ 2\,\biggl(1\ -\ \frac{\tau^{\,2}\,\omega^{\,2}}{2}\biggr)\,\rhou\ +\ 1\ =\ 0\,.
\end{equation}
To have the linear stability property of our scheme, it is necessary that both roots of this equation verify the inequality $\abs{\rho^{\,\pm}}\ \leq\ 1\,$. To meet this requirement, it is sufficient to ask that the discriminant of the quadratic equation \eqref{eq:quad} is not positive, \ie
\begin{equation*}
  \tau^{\,2}\,\omega^{\,2}\,\Bigl(\frac{\tau^{\,2}\,\omega^{\,2}}{4}\ -\ 1\Bigr)\ \leq\ 0\,,
\end{equation*}
or equivalently
\begin{equation}\label{eq:122}
  \tau\ \leq\ \frac{2}{\omega}\,,
\end{equation}
where the wave frequency was defined in equation \eqref{eq:121} (with the sign $+$ by our convention).

Let $\Delta\,x$ be the discretization step size in the horizontal direction. The minimal wave length $\lambda_{\,\inf}\,$, which can be represented on the grid with spacing $\Delta\,x$ is $\lambda_{\,\inf}\ \equiv\ 2\,\Delta\,x\,$. All other waves satisfy the inequality $\lambda\ \geq\ \lambda_{\,\inf}\,$. These considerations on the wave length can be translated into the language of wave numbers, \ie
\begin{equation*}
  k\ \leq\ k_{\,\sup}\ \eqdef\ \frac{\pi}{\Delta\,x}\,.
\end{equation*}
Henceforth, we can derive the following estimation for the wave frequency using the dispersion relation \eqref{eq:121}:
\begin{equation*}
  \omega\,(k)\ =\ \sqrt{g\,h_{\,0}}\cdot\sqrt{\frac{k}{h_{\,0}}\;\tanh\bigl[\,k\,h_{\,0}\,\bigr]}\ =\ k\,\sqrt{g\,h_{\,0}}\cdot\sqrt{\frac{\tanh\bigl[\,k\,h_{\,0}\,\bigr]}{k\,h_{\,0}}}\ \leq\ \sqrt{g\,h_{\,0}}\;\frac{\pi}{\Delta\,x}\,.
\end{equation*}
Consequently, in order to satisfy the stability condition \eqref{eq:122}, it is sufficient to impose the following restriction on the time step $\tau$ magnitude:
\begin{equation}\label{eq:cfl}
  \tau\ \leq\ \frac{2\,\Delta\,x}{\pi\,\sqrt{g\,h_{\,0}}}\,.
\end{equation}
An analogue of this stability condition will be used below during the simulation of fully nonlinear problems. The most important conclusion of this Section is that we have a hyperbolic-type stability \textsc{Courant}--\textsc{Friedrichs}--\textsc{Lewy} condition \cite{Courant1928} --- the time step is a linear function of the mesh spacing.

%%% ------------------------------------------------------------------------ %%%

\subsubsection{Practical choice of the time step}

In the previous Section we studied the scheme stability in the linear case. However, in numerical simulations presented below, we solve the nonlinear problem. Henceforth, one may ask the question how to choose the time step in practical nonlinear simulations. Below we explain our approach to this problem.

In order to cover the whole family of problems, we choose to work in scaled variables. The CFL condition \eqref{eq:cfl} can be rewritten in dimensionless variables as
\begin{equation*}
  \tau^{\,\ast}\ \leq\ \frac{2}{\pi}\;h_{\,1}^{\,\ast}\ \approx\ 0.64\,h_{\,1}^{\,\ast}\,,
\end{equation*}
where $h_{\,1}^{\,\ast}\ \eqdef\ \dfrac{h_{\,1}}{h_{\,0}}\,$. As grid spacing $h_{\,1}\,$, we take the smallest horizontal spacing along the free surface (since most important variations take place there):
\begin{equation*}
  h_{\,1}\ \sim\ \Delta x_{\,\min}^{\,n}\ \eqdef\ \min_{0\ \leq\ j_{\,1}\ <\ N_{\,1}}\abs{x_{\,j_{\,1} + 1,\,N_{\,2}}\ -\ x_{\,j_{\,1},\,N_{\,2}}}\,.
\end{equation*}
Consequently, the dimensionless CFL condition can be rewritten as
\begin{equation*}
  \tau_{\,n}^{\,\ast}\ \leq\ 0.64\,\Delta x_{\,\min}^{\,n,\,\ast}\,.
\end{equation*}
However, our numerical simulations show that this estimation is too pessimistic. Consequently, in all simulations presented below we took the time step according to the following less restrictive formula:
\begin{equation*}
  \tau_{\,n}^{\,\ast}\ \leq\ \kappa\,\Delta x_{\,\min}^{\,n,\,\ast}\,,
\end{equation*}
with $\kappa\ =\ 0.95\,$. Most probably, the last condition is pessimistic as well. However, it guaranteed the stability of our \emph{nonlinear} computations.

%%% ------------------------------------------------------------------------ %%%

\section{Numerical results}
\label{sec:res}

Above we described the proposed finite difference scheme and our resolution algorithm on a fixed reference domain. Below we present several validation tests and numerical experiments which show the performance and abilities of our numerical approach.

%%% ------------------------------------------------------------------------ %%%

\subsection{Solitary wave run-up on a fixed wall}
\label{sec:fix}

In order to illustrate the applicability of our numerical algorithm, we consider the classical problem of the solitary wave/fixed wall interaction. Due to symmetry considerations, this set-up is equivalent to the head-on collision of two equal solitary waves. This problem is well-studied in the literature \cite{Maxworthy1976, Mirie1982, Su1980, Cooker1997, Chambarel2009} and it can serve as the first validation test.

Consider a solitary wave of amplitude $\alphau$ moving in the leftward direction. The channel has a constant depth and the wall is assumed to be fixed in order to be able to perform comparisons with previous investigations. On the right the channel is also bounded by a fixed vertical wall. The total length of the channel is equal to $\ell\ =\ 20\,$. In this Section we provide all values in dimensionless variables as it was explained earlier in Section~\ref{sec:dimless}. Moreover, we assume in these computations that waves do not overturn. In other words, the free surface is traditionally given as the graph of a function $y\ =\ \eta\,(x,\,t)\,$. The initial condition for the free surface elevation $\eta_{\,0}\,(x)\ \equiv\ \eta\,(x,\,0)$ and velocity field $\u_{\,0}\,(\x)\ \equiv\ \u\,(\x,\,0)$ are given by the following approximate formulas \cite{Ovsyannikov1983}:
\begin{equation}\label{eq:ic1}
  \eta_{\,0}\,(x)\ =\ \alphau\,\sech^{\,2}\Bigl[\,\underbrace{\frac{\kappa}{2}\;(x\ -\ x_{\,0})}_{\displaystyle{\defeq\ \Uptheta}}\,\Bigr]\,, \qquad \kappa\ \eqdef\ \sqrt{\frac{3\,\alphau}{\alphau\ +\ 1}}\,,
\end{equation}
\begin{multline}\label{eq:ic2}
  u_{\,0}\,(x,\,y)\ =\ -\,\sqrt{1\ +\ \alphau}\;\frac{\eta_{\,0}\,(x)}{1\ +\ \eta_{\,0}\,(x)}\ +\ \frac{\alphau^{\,2}}{\sqrt{1\ +\ \alphau}}\;\biggl[\,\frac{1}{4}\ -\ \frac{3}{4}\;\Bigl(\frac{y\ +\ 1}{\eta_{\,0}\,(x)\ +\ 1}\Bigr)^{\,2}\,\biggr]\cdot \\
  \cdot\biggl[\,2\;\frac{\eta_{\,0}\,(x)\ -\ 1}{\eta_{\,0}\,(x)\ +\ 1}\;\sech^{\,2}\,\Uptheta\ +\ \frac{3\ -\ \eta_{\,0}\,(x)}{\eta_{\,0}\,(x)\ +\ 1}\;\sech^{\,4}\,\Uptheta\,\biggr]\,,
\end{multline}
\begin{equation}\label{eq:ic3}
  v_{\,0}\,(x,\,y)\ =\ -\,\sqrt{3\,\alphau^{\,3}}\,(1\ +\ y)\;\frac{\cosh\,\Uptheta\,\sinh\,\Uptheta}{\bigl(\alphau\ +\ \cosh^{\,2}\,\Uptheta\bigr)^{\,2}}\,,
\end{equation}
where $x_{\,0}$ is the wave crest initial position.

The interaction of a solitary wave with amplitude $\alphau\ =\ 0.4$ with a vertical (fixed) wall is shown in Figure~\ref{fig:sw} at different moments of time. Under a wave we show also the adapted grid. On this Figure we can see that after the reflection the solitary wave does not recover completely its initial shape. In particular, behind the main wave we observe a slight dispersive tail. The numerical simulations on refined grids show that it is not a numerical effect. This dispersive tail appears on all grids and it reflects an intrinsic property of the full \textsc{Euler} equations --- their non-integrability \cite{Dyachenko1994a, Dyachenko2013}. In this example we use the following monitoring function in order to adapt the grid to the solution:
\begin{equation}\label{eq:mon}
  \varpi\,(\x,\,t)\ =\ 1\ +\ \upvartheta\,\abs{\eta\,(x,\,t)}\,,
\end{equation}
where $\upvartheta\ >\ 0$ is a positive ad-hoc parameter. In computations presented in the Section we use $\upvartheta\ =\ 10\,$. The main rationale behind this choice of the monitoring function $\varpi\,(\x,\,t)$ is to put more nodes in areas where the waves are large.

In Figure~\ref{fig:sw} we can see that the refined area follows somehow the solitary wave crest during its motion towards and fromwards the vertical wall. From equation \eqref{eq:mon} it can be seen that the monitoring function does not depend on the vertical variable $y\,$. As a result, we obtain the grids with almost vertical lines. Consequently, in accordance with stability condition \eqref{eq:cfl} to determine an admissible local time step $\tau_{\,n}$ we can use the following formula:
\begin{equation*}
  \tau_{\,n}\ =\ \upnu\,\Delta\,x_{\,\min}^{\,n}\,,
\end{equation*}
where $0\ <\ \upnu\ \leq\ 1$ is the \emph{security factor} and $\Delta\,x_{\,\min}^{\,n}$ is the minimal mesh spacing on the free surface at time $t\ =\ t^{\,n}\,$, \ie
\begin{equation*}
  \Delta\,x_{\,\min}^{\,n}\ \eqdef\ \min_{0\ \leq\ j_1\ <\ N_1}\bigl\{\abs{x_{\,j_{\,1}+1,\,N_2}^{\,n}\ -\ x_{\,j_{\,1},\,N_2}^{\,n}}\bigr\}\,.
\end{equation*}

\begin{figure}
  \centering
  \subfigure[$t\ =\ 0$]{\includegraphics[width=0.79\textwidth]{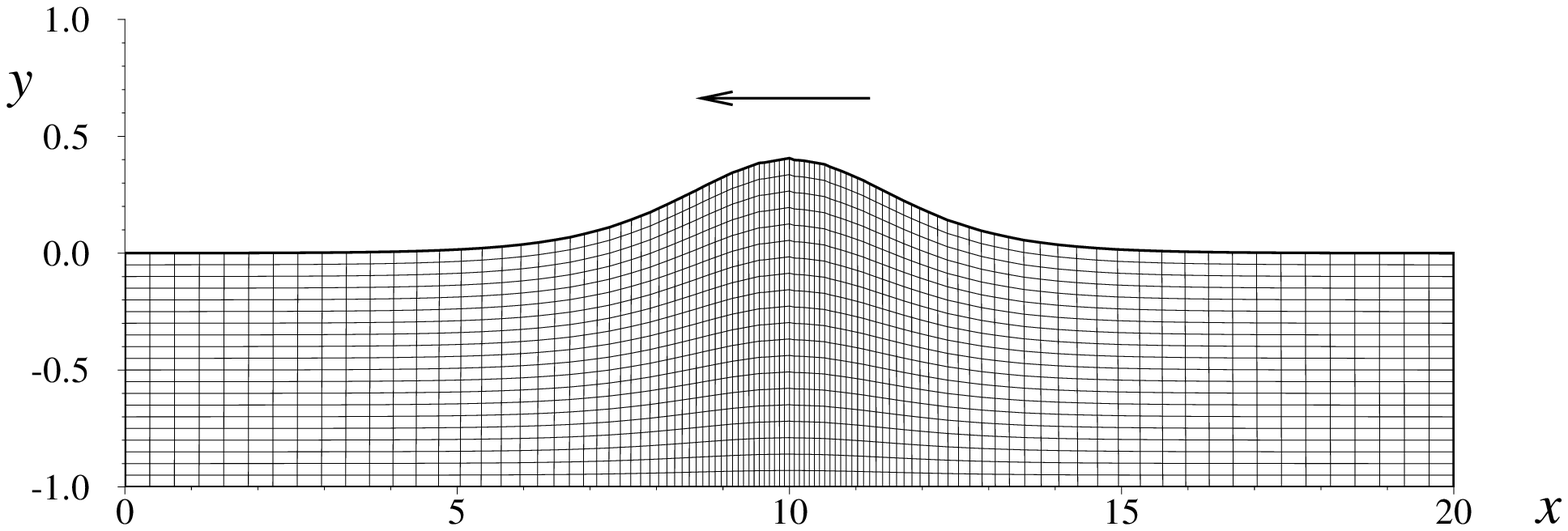}}
  \subfigure[$t\ =\ 7$]{\includegraphics[width=0.79\textwidth]{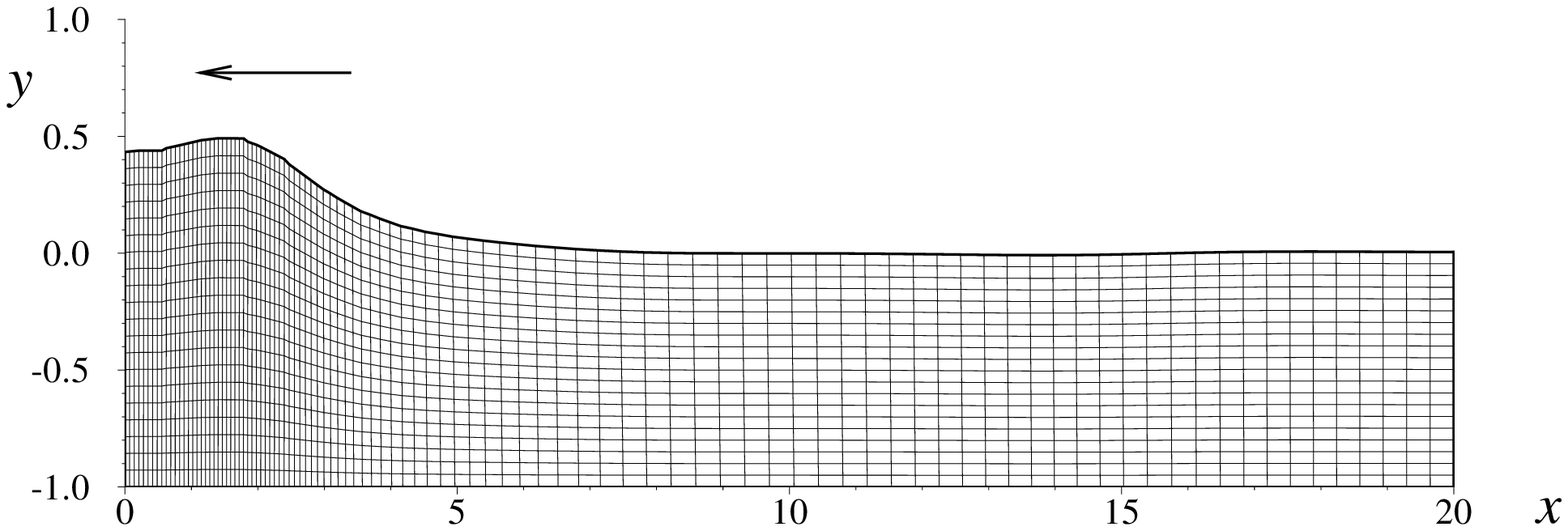}}
  \subfigure[$t\ =\ 9$]{\includegraphics[width=0.79\textwidth]{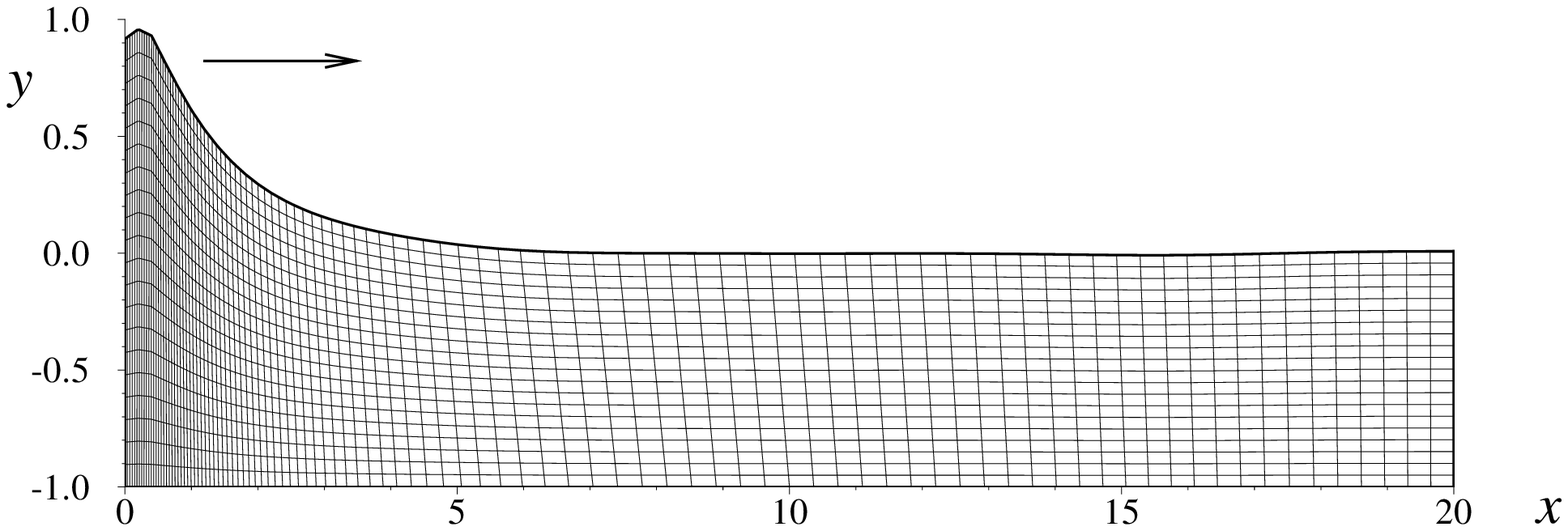}}
  \subfigure[$t\ =\ 18$]{\includegraphics[width=0.79\textwidth]{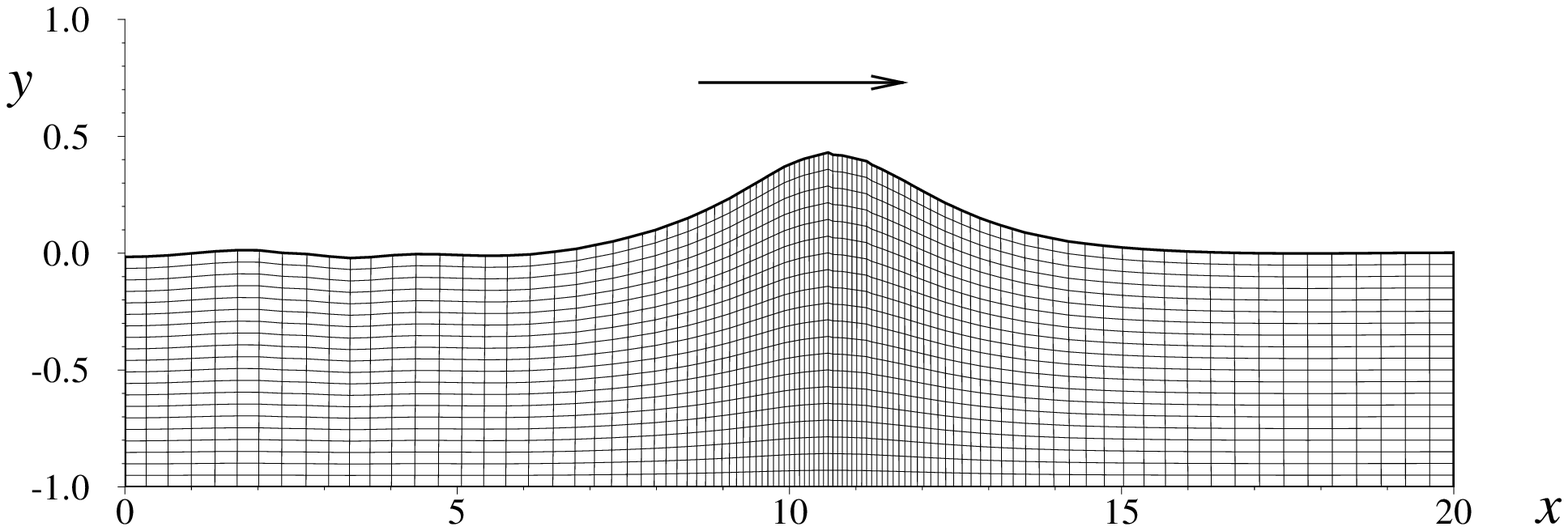}}
  \caption{\small\em Solitary Wave (SW) interaction with a vertical fixed wall at different instances of time. SW's amplitude is $\alphau\ =\ 0.4\,$. The propagation direction is shown with a horizontal arrow.}
  \label{fig:sw}
\end{figure}

One of the main characteristics of the wave/wall interaction is the maximal amplitude of the wave on the wall. This quantity is called the \emph{maximal run-up} and will be denoted in our study as $\Ru_{\,\max}\,$. Obviously, in our experimental conditions this quantity depends on the amplitude of the incident solitary wave, \ie $\Ru_{\,\max}\ =\ \Ru_{\,\max}\,(\alphau)\,$. In Figure~\ref{fig:runup} we represent this dependence according to our numerical simulations (solid black line), experimental data (filled markers), other computations (empty markers) and the following asymptotic analytical prediction \cite{Su1980} (dashed line):
\begin{equation*}
  \Ru_{\,\max}\,(\alphau)\ =\ 2\,\alphau\,\Bigl[\,1\ +\ \frac{1}{4}\;\alphau\ +\ \frac{3}{8}\;\alphau^{\,2}\,\Bigr]\ +\ o(\alphau^{\,3})\,, \qquad\mbox{as}\qquad \alphau\ \to\ 0\,.
\end{equation*}
In Figure~\ref{fig:runup} we can see that there is an overall good agreement among all presented data up to the amplitude $\alphau\ \lesssim\ 0.4\,$. For higher waves some divergences start to appear. However, the agreement between our numerical model with other potential flow solvers \cite{Fenton1982, Cooker1997} continues up to $\alphau\ \lesssim\ 0.6\,$. The experimental points go rather below our predictions. It can be easily explained by the neglection of viscous and friction effects in our numerical (and mathematical) model. Nevertheless, we would like to mention that our numerical results agree particularly well with experimental data reported in \cite{Davletshin1984}.

\begin{figure}
  \centering
  \includegraphics[width=0.89\textwidth]{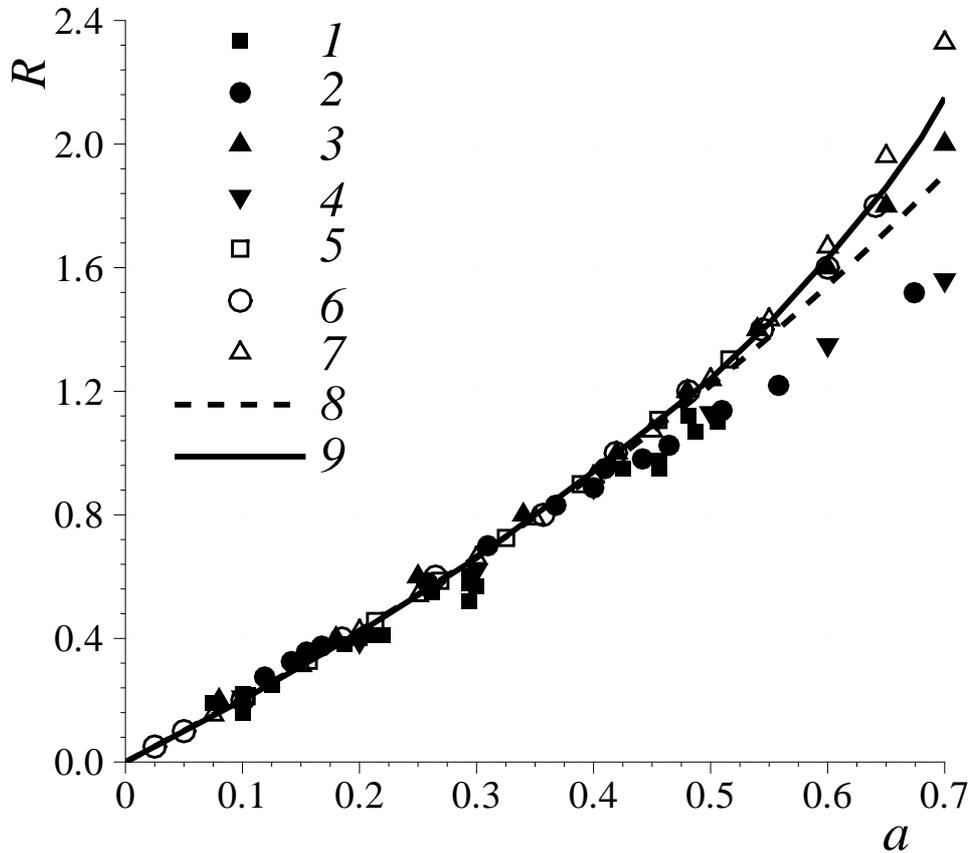}
  \caption{\small\em Dependence of the maximal run-up $\Ru_{\,\max}$ at the vertical wall on the incident solitary wave amplitude $\alphau\,$. Experimental data: (1) \cite{Zagryadskaya1980}, (2) \cite{Maxworthy1976}, (3) \cite{Davletshin1984}, (4) \cite{Manoylin1989}. Numerical data: (5) \cite{Fenton1982}, (6) \cite{Chan1970a}, (7) \cite{Cooker1997}. Analytical prediction \cite{Su1980} (8) (dashed line). The present study (9) is given by the solid line.}
  \label{fig:runup}
\end{figure}

%%% ------------------------------------------------------------------------ %%%

\subsection{Wave generation by a numerical wave maker}

The study of water wave interaction with movable partially submerged bodies and objects is a very important problem of the modern computational hydrodynamics. In the previous Section~\ref{sec:fix} the wall was considered to be fixed in order to compare our numerical predictions with available data. Starting from this Section we allow the left wall as a movable object. As the first step towards the freely moving solid boundary, we consider first the situation where left wall motion is prescribed by a given law. Physically, it corresponds to the vertical piston motion used in many experimental facilities. Traditionally, the generated waves are understood using linear or weakly nonlinear theories \cite{Madsen1971}. Recently this situation was modelled numerically with \textsc{Boussinesq}-type equations in \cite{Orszaghova2012}. The results presented below are fully nonlinear and fully dispersive.

Consider a numerical wave tank with horizontal bottom and of uniform depth $d\ =\ 1\ \mathsf{m}$ (when the water is unperturbed). The left wall is initially located in the point $x\ =\ 0\,$. For times $t\ >\ 0$ the wall moves according to the following law:
\begin{equation}\label{eq:130}
  s\,(t)\ =\ \alpha\,\bigl(1\ -\ \ue^{-\,\beta\,t}\bigr)\,\sin(\omega\,t)\,.
\end{equation}
Thus, during the initial times the wall moves to the right. In this way, the wall motion is completely determined by three parameters $\alpha\ >\ 0\,$, $\beta\ >\ 0$ and $\omega\,$. The first one ($\alpha$) specifies the maximal wall oscillation amplitude (in the horizontal extent), the second parameter $\beta$ controls the speed of the relaxation\footnote{It is easy to see that the wall oscillation amplitude tends to its maximum value $\alpha$ as $t\ \to\ +\infty\,$. This justifies the term `relaxation' associated to this parameter $\beta\,$. The speed of this convergence depends on the value of parameter $\beta\,$, \ie bigger is faster.} towards the stationary periodic regime and $\omega$ is the frequency of wall oscillations (or equivalently it controls the oscillation period).

In Figure~\ref{fig:5ab} we depict a few simulations results, which were obtained by varying the parameter $\omega$ for fixed values of two other parameters:
\begin{equation*}
  \alpha\ =\ 0.4\ \mathsf{m}\,, \qquad
  \beta\ =\ 0.5\ \mathsf{Hz}\,.
\end{equation*}
Figure~\ref{fig:5ab} shows clearly that the generated wave amplitude (measured at the moving wall) depends essentially on the wall oscillation period, which is determined by parameter $\omega\,$. This effect is illustrated in Figure~\ref{fig:5ab} for three different values of the parameter $\omega\ \in\ \bigl\{0.5\;\Hz,\,1.0\;\Hz,\,3.0\;\Hz\bigr\}\,$. For each wall trajectory we show also the corresponding free surface excursion on the moving vertical wall. The increase in $\omega$ results in the reduction of the oscillation period. We can also witness the relaxation effect on the free surface oscillation. Indeed, as the time $t$ goes on, the wave amplitude increases and becomes practically stationary. The speed of relaxation depends on the parameter $\beta\,$. We underline the fact that for all three (considered) values of the wall oscillation frequency $\omega\,$, the amplitude of oscillations $\alpha$ was the same ($\alpha\ \equiv\ 0.4\;\m$). Nevertheless, the wave amplitudes registered on the wall were significantly different. This is the first practical conclusion that we can draw from our numerical simulations: the piston oscillation period has a much bigger influence on the generated wave amplitude, than the amplitude of these oscillations. We can explain this observation with a simple physical argument: during `fast' piston oscillations, the energy is pumped into the fluid close to the piston faster than waves can evacuate this energy by propagating into the wave tank.

The second important observation that we can make based on our simulations is that the maximal wave run-up does not take place at the moments where the wall \emph{displacement} is maximal. A more detailed investigation showed that the wave amplitude on the wall is maximal when the wall \emph{acceleration} is maximal. It happens when the wall passes by its initial (or mean, or unperturbed) position $x\ =\ 0\,$. After passing this point, the wall decelerates and the wave has time to flow away from the piston, which reduces the value of run-up.

Finally, we can also observe that the wave run-up value on the wall is asymmetric with respect to the still (unperturbed) water level. This effect is much better visible for fast wall motions. Indeed, for $\omega\ =\ 3.0\;\Hz$ the maximal run-up is $\ \approx\ 43\ \cm\,$, while the maximal run-down is $\ \approx\ 34\ \cm\,$.

\begin{figure}
  \centering
  \subfigure[]{\includegraphics[width=0.49\textwidth]{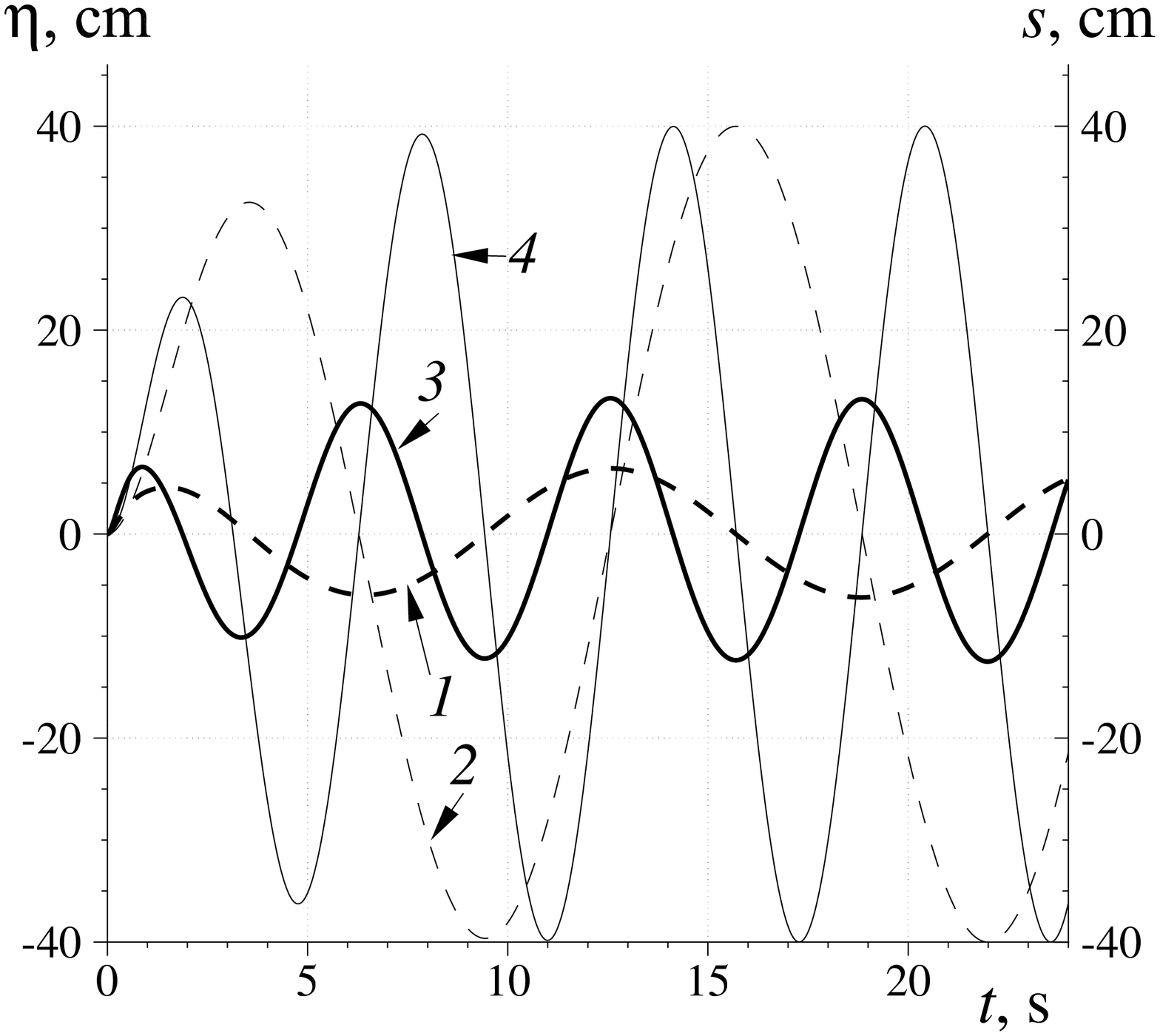}}
  \subfigure[]{\includegraphics[width=0.49\textwidth]{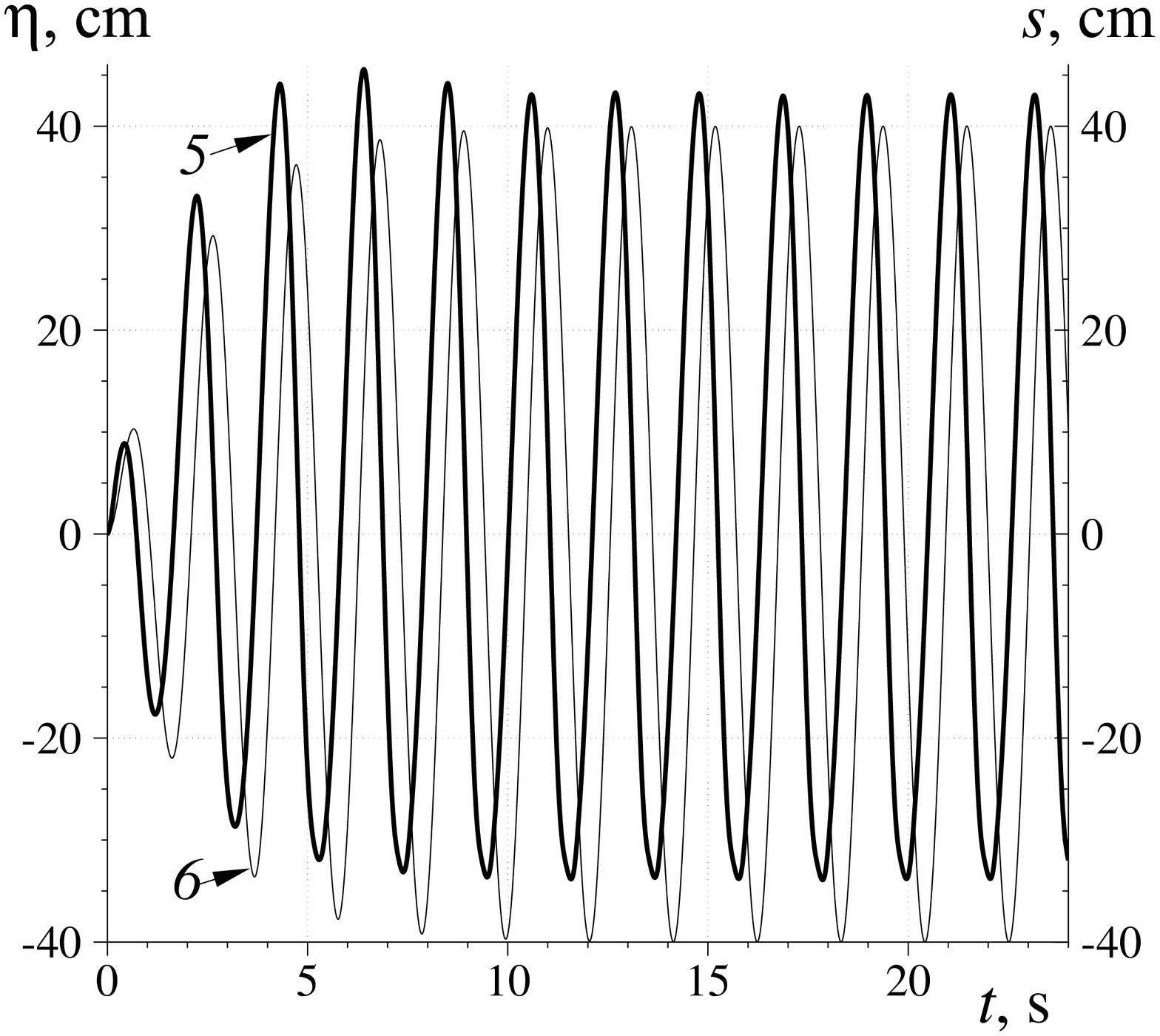}}
  \caption{\small\em Free surface excursion on the moving wall $\eta\,\bigl(s(t),\,t\bigr)$ (thick lines $1\,$, $3$ and $5$) along with the wall trajectory $s\,(t)$ (thin lines $2\,$, $4$ and $6$) for (a) $\omega\ =\ 0.5\;\mathsf{Hz}$ (lines $1$ and $2$); $\omega\ =\ 1.0\;\Hz$ (lines $3$ and $4$); (b) $\omega\ =\ 3.0\;\Hz$ (lines $5$ and $6$).}
  \label{fig:5ab}
\end{figure}

%%% ------------------------------------------------------------------------ %%%

\subsection{Surface wave run-up on a movable wall}

In this Section we consider the wave/wall interaction problem, where the wall motion is not prescribed, but it is determined as a part of the problem solution. The left wall is attached to a spring system, which can deform under the wave action.

%%% ------------------------------------------------------------------------ %%%

\subsubsection{Single pulse}

To begin, we consider first the case of a single localized wave impulse interacting with a moving wall. The channel depth is taken to be $d\ =\ 1\ \m$ and the channel length is $\ell\ =\ 20\ \m\,$. At the initial moment of time $t\ =\ 0\,$, the velocity field is taken to be quiescent and the free surface shape is given by the following formula:
\begin{equation}\label{eq:initcond}
  \eta_{\,0}\,(x)\ =\ \begin{dcases}
  \ \frac{\alphau}{2}\;\biggl[\,1\ +\ \cos\Bigl[\,\frac{\pi}{\Upsilon}\;\bigl(x\ -\ x_{\,0}\bigr)\,\Bigr]\,\biggr]\,, & \abs{x\ -\ x_{\,0}}\ \leq\ \Upsilon\,, \\
  \ 0\,, & \abs{x\ -\ x_{\,0}}\ >\ \Upsilon\,.
  \end{dcases}
\end{equation}
For subsequent times $t\ >\ 0$ this initial condition is separated in two waves of the amplitude $\ \approx\ \frac{\alphau}{2}\,$, which move in opposite directions. The wave moving leftwards interacts with the moving wall. We would like to underline the fact that the initial left wall position ($x\ =\ 0$) does not coincide with the wall \emph{equilibrium} position in the absence of water. Even if the free surface coincides with the still water level, the springs are deformed under the action of the hydrostatic pressure. Thus, the equilibrium wall position $x\ =\ x_{\,e}$ with undeformed springs is located somewhere to the right from its initial position, \ie $x_{\,e}\ >\ 0\,$. We performed a series of numerical simulations by varying the parameters $\alphau\,$, $x_{\,0}\,$ and $\Upsilon$ of the initial condition \eqref{eq:initcond}. Additionally we varied also the wall mass $m$ and the springs rigidity coefficient $k\,$.

First, we conducted a series of numerical experiments for small initial wave amplitudes $\alphau$ in view of comparisons against the numerical simulations reported in \cite{He2009}. The Authors of the preceding study employed the Boundary Integral Equations Method (BIEM) \cite{Brebbia1980, Brebbia1984}. Unfortunately, the Authors of \cite{He2009} did not publish any tabulated data to perform quantitative comparisons. However, we performed qualitative comparisons of numerical results. We choose the same values as in \cite{He2009}, \ie $x_{\,0}\ =\ 0.7\ \m\,$, $\Upsilon\ =\ 0.5\ \m$ and dimensionless spring rigidity coefficient $k\ =\ 10\,$. The wall dimensionless mass is $m\ =\ 1.5\,$.

\begin{figure}
  \centering
  \subfigure[]{\includegraphics[width=0.48\textwidth]{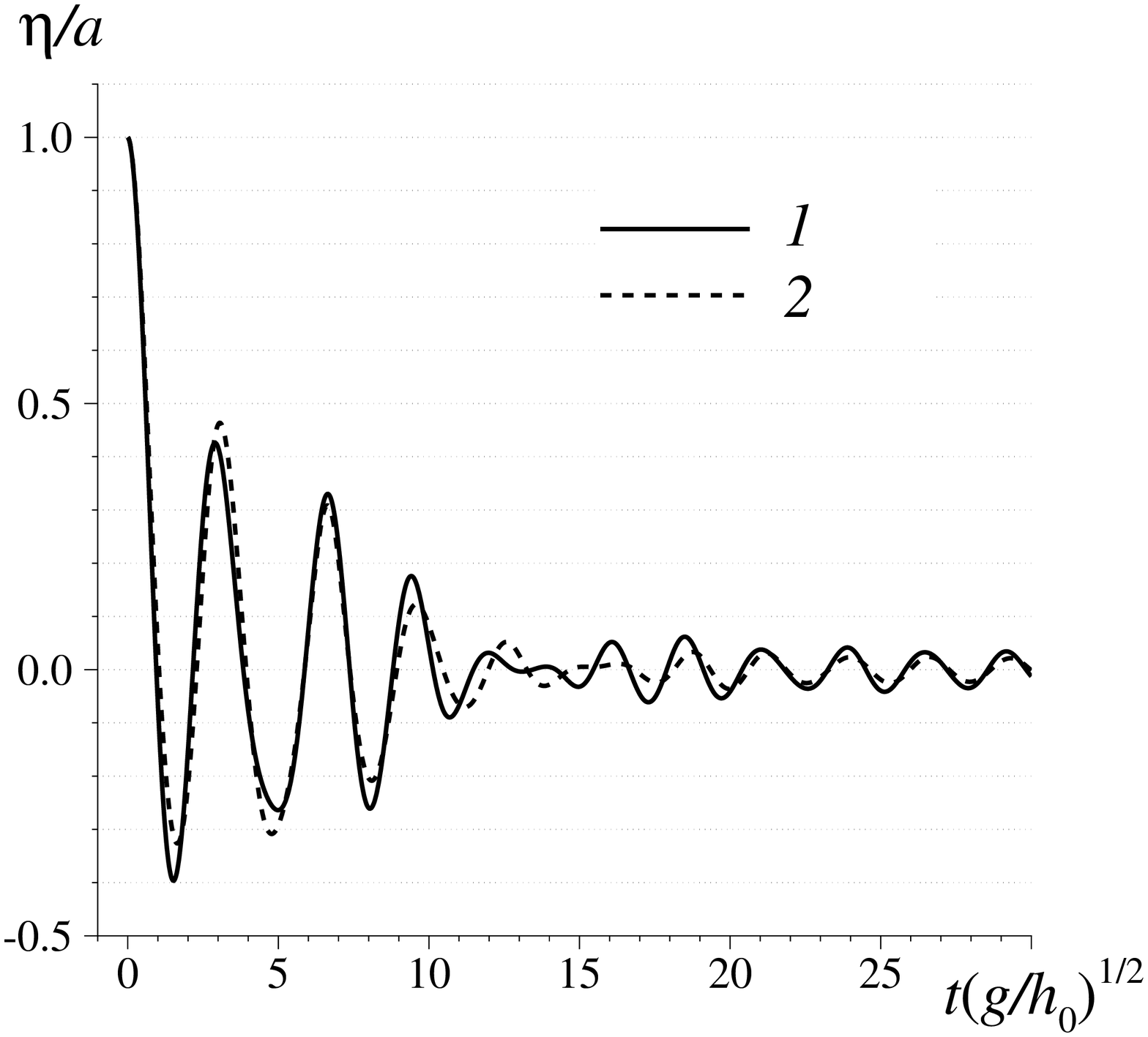}}
  \subfigure[]{\includegraphics[width=0.48\textwidth]{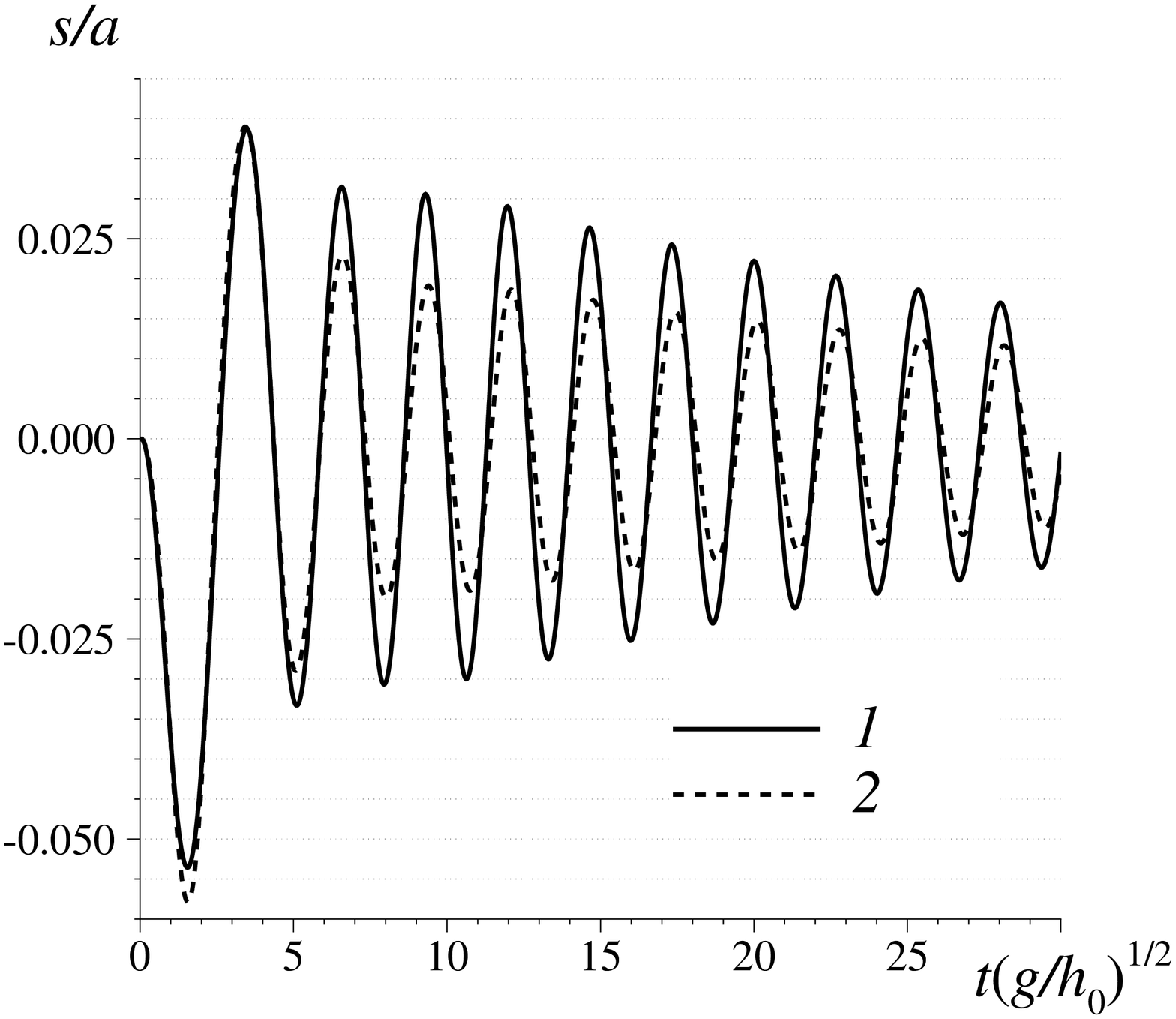}}
  \caption{\small\em Free surface excursion as a function of time at point $x\ =\ x_{\,0}\ =\ 0.7$ (a) and wall oscillations (b) for the initial condition given in Equation~\eqref{eq:initcond}. The parameter $\alphau\ =\ 1\ \cm$ (1) and $\alphau\ =\ 10\ \cm$ (2).}
  \label{fig:9}
\end{figure}

In Figure~\ref{fig:9}(\textit{a}) we show the free surface oscillations recorded at the initial pulse location\footnote{Here we mean the center or pulse crest.} $x\ =\ x_{\,0}\ =\ 0.7\ \m\,$. We show the results for two initial wave amplitudes. The line (1) corresponds to $\alphau\ =\ 1\ \cm\,$, while $\alphau\ =\ 10\ \cm$ is depicted by line (2). Please, notice that curves presented in Figure~\ref{fig:9} are shown relatively to the initial wave amplitude $\alphau\,$. These curves compare quantitatively and qualitatively well to those presented in \cite{He2009}. To the graphical resolution the oscillation decay rates and oscillation periods are the same. In Figure~\ref{fig:9}(\textit{b}) we show moving wall trajectories in time for two initial wave amplitudes $\alphau\ \in\ \bigl\{1\;\cm,\,10\;\cm\bigr\}\,$. These curves also compare well with previous computations reported in \cite{He2009}. We can also see that for small times the solid and dashed curves shown in Figure~\ref{fig:9}(\textit{a,b}) almost coincide. It means that for small initial wave amplitudes $\alphau$ the results of computations depend almost linearly on $\alphau\,$. This observation supports the applicability of the linear theory \cite{Korobkin2009}.

Let us study now the influence of the springs rigidity parameter $k$ on the resulting wave field. We took the following parameters in this computation:
\begin{equation*}
  x_{\,0}\ =\ 2\ \m\,, \qquad \Upsilon\ =\ 1\ \m\,, \qquad \alphau\ =\ 10\ \cm\,, \qquad m\ =\ 1.5\,.
\end{equation*}
The parameter $k$ is variable. In Figure~\ref{fig:10}(\textit{a}) we can see that the first run-up on the wall is large. Then, the following run-ups are generally decaying in amplitude since the wave energy spreads over the whole wave tank during the dispersive (and possibly nonlinear) effects. Moreover, for $k\ =\ 5$ (curve 2) the decay rate is the strongest since a larger part of the wave energy is converted into the wall motion. Thus, we have the wave energy transformation into the elastic energy of the spring system. When the springs rigidity is high, the wall performs small later motions, as it can be seen in Figure~\ref{fig:10}(\textit{b}). The increase in $k$ results also in the increase of the wall oscillation frequency around the initial equilibrium position $x\ =\ 0\,$. In the same time, the wave run-up records on the wall are practically identical with the fixed wall case (\cf curves 3 and 4 in Figure~\ref{fig:10}(\textit{a})). In other words, for high rigidities $k\ \gg\ 1\,$, the `moving' wall interacts in a way very similar to the fixed wall. On the other hand, when the rigidity decreases, the wall oscillation period and amplitude both increase. Wave run-up on the wall decay rate also increases with the parameter $k\,$. However, in contrast to large values of $k\,$, a flexible spring system can allow the maximal wave run-up value during the second (or even further subsequent) wave interactions with the wall. In Figure~\ref{fig:10}(\textit{a}) we can see that the value of the maximal wave run-up on the wall is generally lower for weak springs comparing to quasi-fixed wall results ($k\ \gg\ 1$). From these simulations we can conjecture that flexible spring systems should be used for more efficient wave run-up reduction.

\begin{figure}
  \centering
  \subfigure[]{\includegraphics[width=0.48\textwidth]{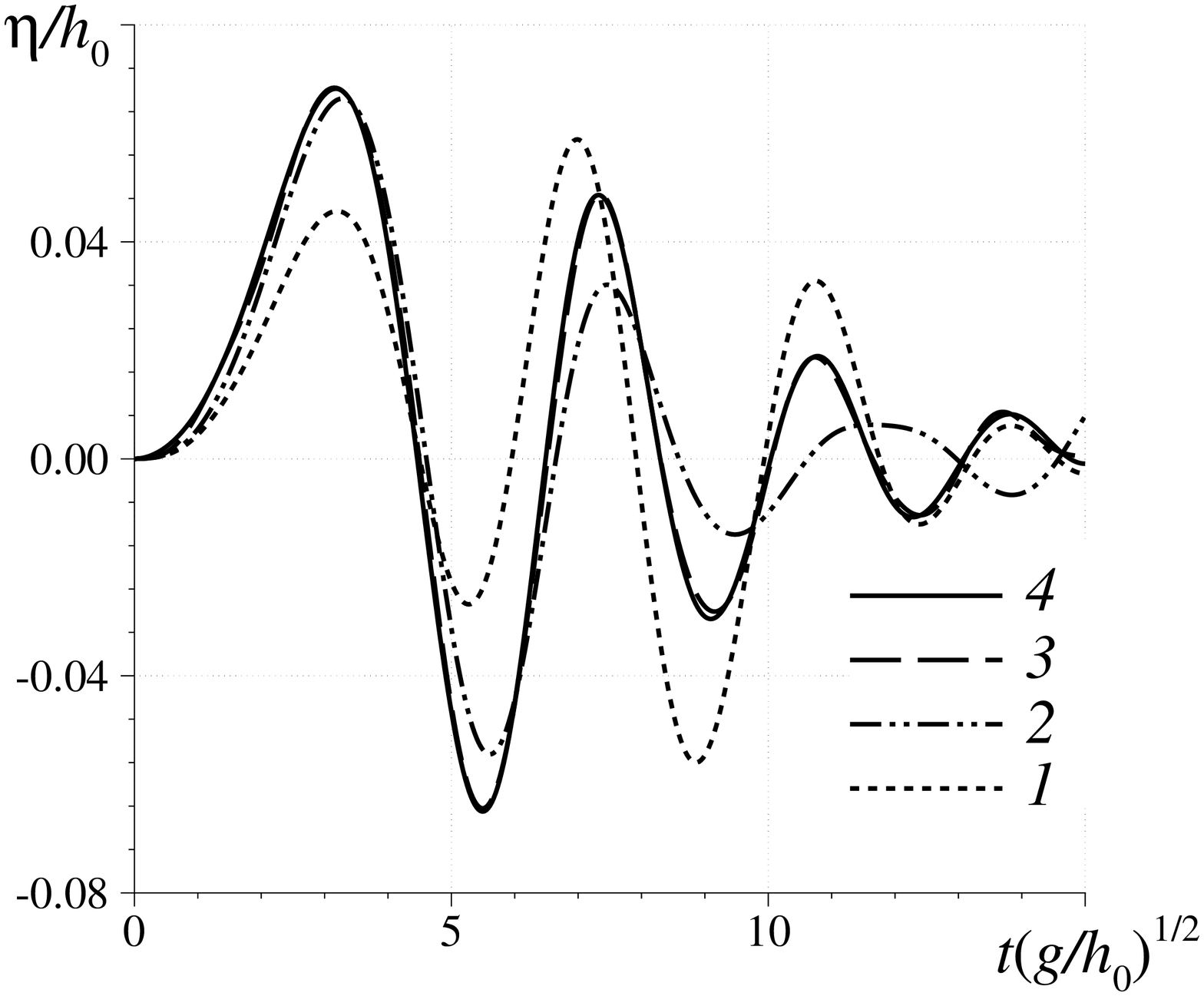}}
  \subfigure[]{\includegraphics[width=0.48\textwidth]{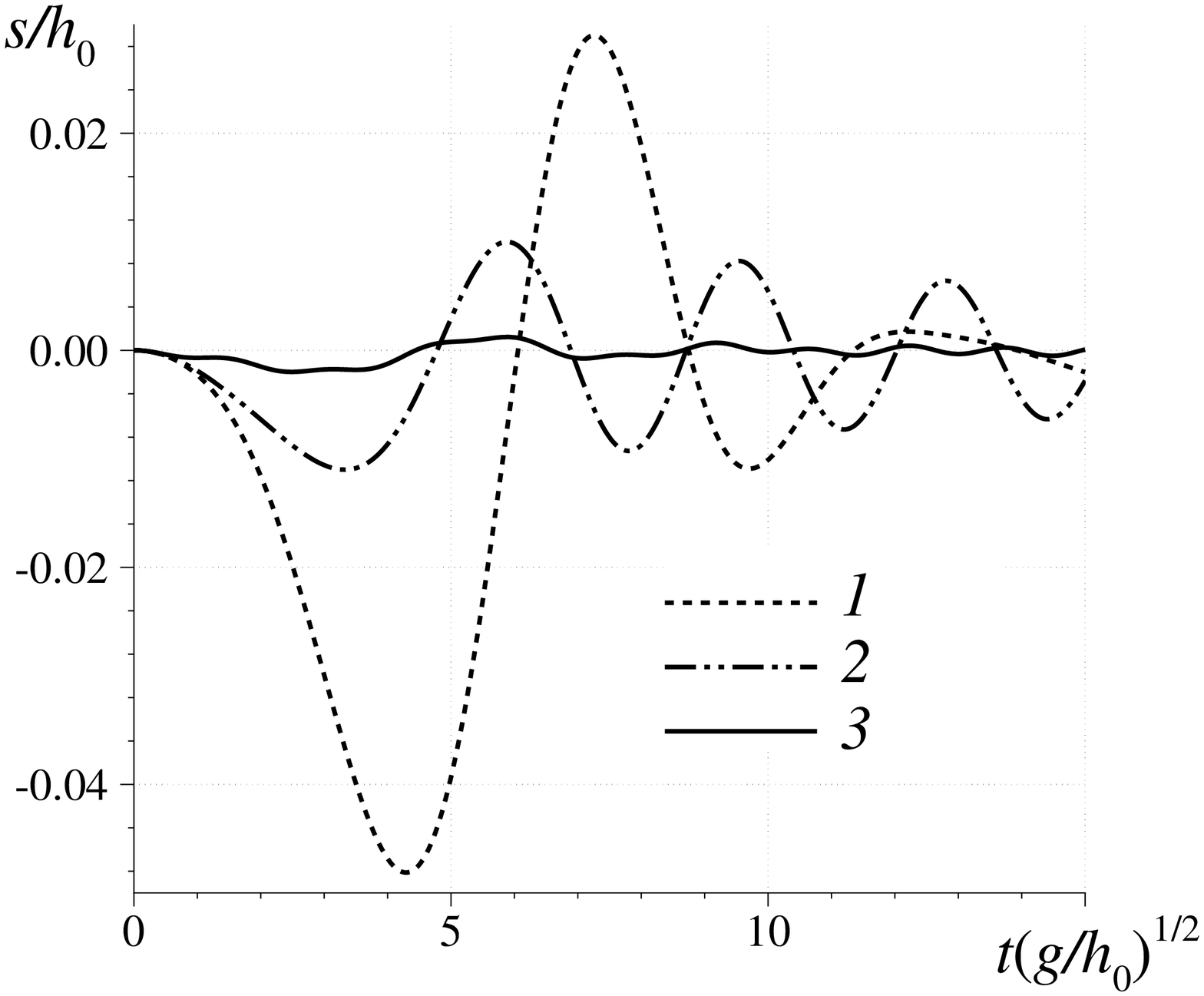}}
  \caption{\small\em Free surface excursion on the vertical moving wall (a) and the wall position (b) as functions of time for the spring rigidity parameter $k\ =\ 1$ (1), $k\ =\ 5$ (2) and $k\ =\ 25$ (3). For the sake of comparison the curve (4) shows the wave run-up on the fixed wall, \ie $k\ =\ \infty\,$.}
  \label{fig:10}
\end{figure}

Let us consider now the case of the weak springs system ($k\ =\ 1$) and a moving wall with mass $m\ =\ 1\,$. The question we would like to investigate is how the wave run-up and wall trajectory depend on the incident wave amplitude $\alphau\,$? In Figure~\ref{fig:11}(\textit{a}) we show the wave run-up value on the moving wall for a moderate amplitude $\alphau\ =\ 20\ \cm$ (curve 1) and a fairly strong amplitude $\alphau\ =\ 60\ \cm$ (curve 2). The initial condition is given by the single pulse formula \eqref{eq:initcond}. The initial impulse center is $x_{\,0}\ =\ 12\ \m$ and its half-length $\Upsilon\ =\ 5\ \m\,$. In Figure~\ref{fig:11}(\textit{b}) we show the wall displacement $s\,(t)\,$. As expected, the wall displacement becomes larger as the incident wave amplitude $\alphau$ increases. It can be seen that for weak springs the wall first retracts under the incident wave action. Then, it comes back towards its initial position and slightly oscillates around this point. It is very interesting to observe the high amplitude pulse (curve 2) interaction with the moving wall depicted in Figure~\ref{fig:11}(\textit{a}). Here the wave run-up value grows until the wall attains its left-most position. Then, on the way back the wall motion provokes the second maximal run-up value. It explains the non-monotonic behaviour of the curve 2 around its maximum. This behaviour is possible only for high amplitude waves and weak (\ie easily deformable) springs.

\begin{figure}
  \centering
  \subfigure[]{\includegraphics[width=0.48\textwidth]{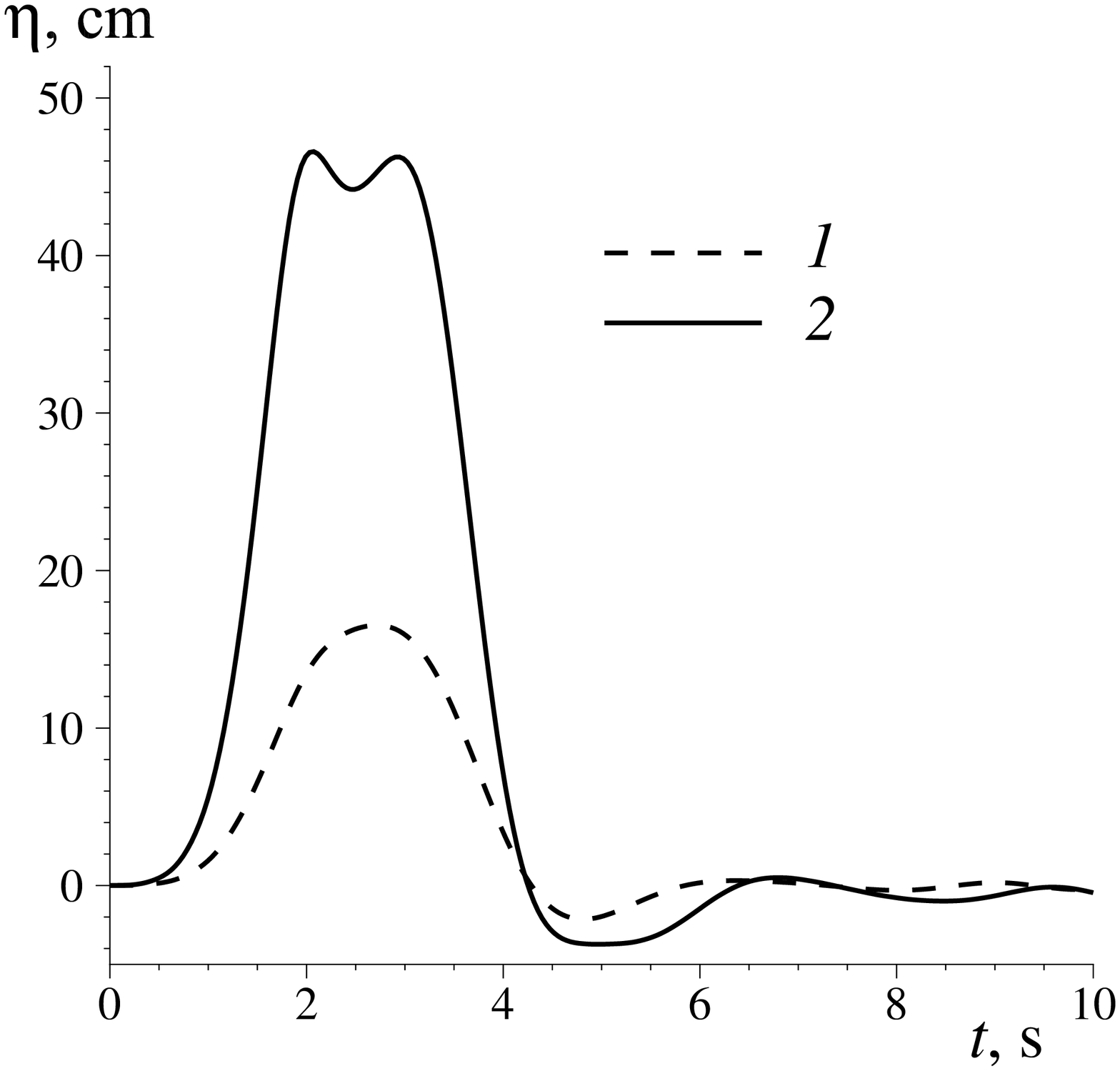}}
  \subfigure[]{\includegraphics[width=0.48\textwidth]{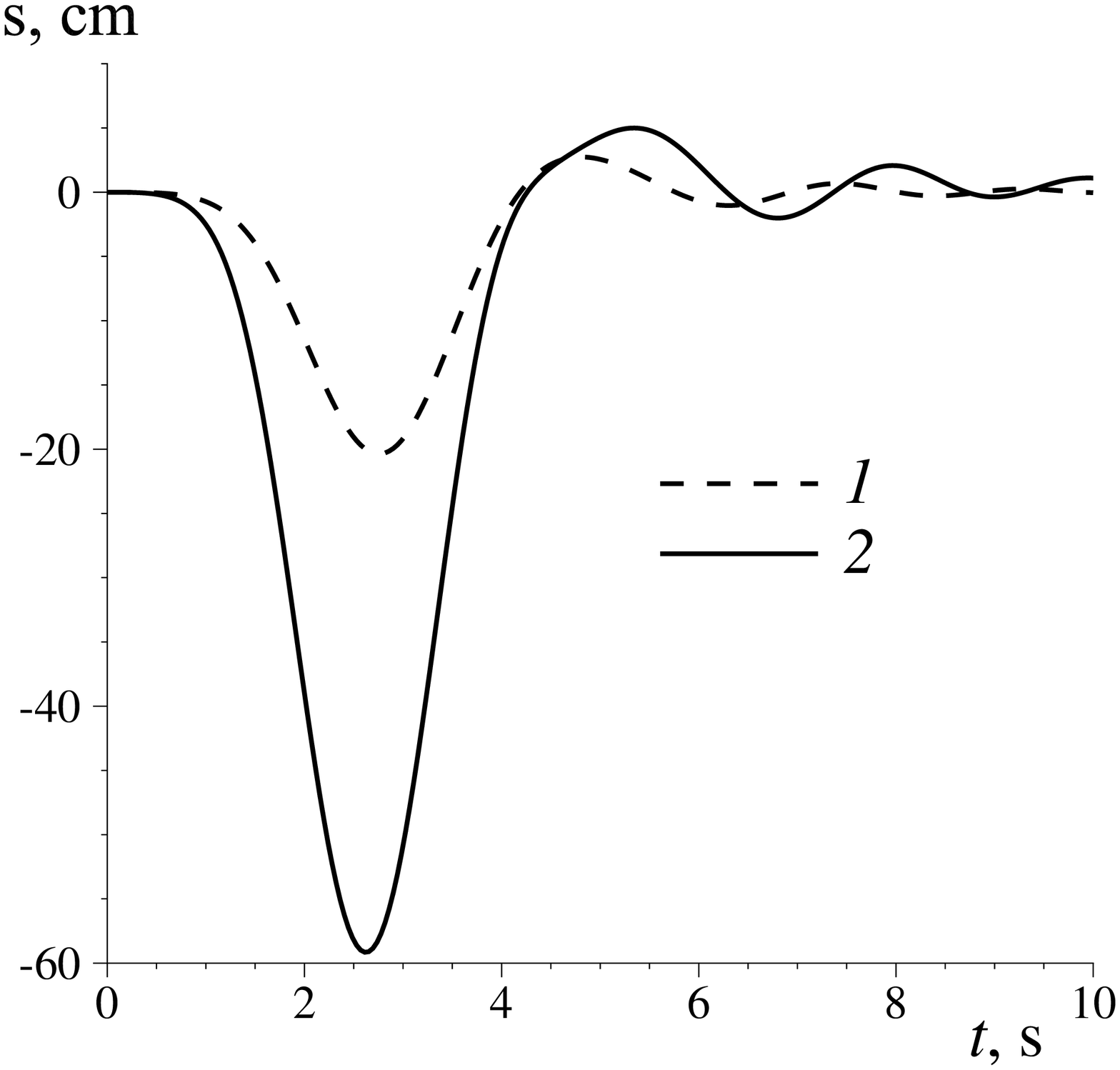}}
  \caption{\small\em Free surface excursion on the vertical moving wall (a) and the wall position (b) as functions of time for $\alphau\ =\ 20\ \cm$ (1) and $\alphau\ =\ 60\ \cm$ (2). The initial impulse center is $x_{\,0}\ =\ 12\ \m$ and its half-length $\Upsilon\ =\ 5\ \m\,$.}
  \label{fig:11}
\end{figure}

We naturally come to the question of the wall mass $m$ and springs rigidity $k$ on the maximal wave run-up value of the impulse \eqref{eq:initcond} on the movable wall. We use the following parameters of the incident pulse:
\begin{equation*}
  x_{\,0}\ =\ 12\ \m\,, \qquad
  \Upsilon\ =\ 5\ \m\,, \qquad
  \alphau\ =\ 20\ \cm\,.
\end{equation*}
In Figure~\ref{fig:12}(\textit{a}) we depict the maximal run-up value as the function of the wall mass $m\,$. It can be easily seen that for a fixed rigidity parameter $k\,$, the maximal run-up is a non-monotonic function of $m\,$. Even more, for every value of $k\,$, one can find the value $m^{\,\star}$ for which the maximal wave run-up will be minimal. It implies that the wave run-up can always be reduced by choosing the appropriate wall parameters $k^{\,\star}$ and $m^{\,\star}\,$. In Figure~\ref{fig:12}(\textit{b}) we show the maximal wave run-up dependence on $k$ for fixed dimensionless wall masses $m\ \in\ \bigl\{1,\,3,\,5\bigr\}\,$. So, it can be seen that the run-up depends monotonically on $k\,$. Namely, it increases with wall rigidity $k\,$. As we increase $k\ \to\ +\infty\,$, the maximal run-up tends to its limiting value, which depends on the incident wave parameters and which is independent on the dimensionless wall mass $m\,$.

\begin{figure}
  \centering
  \subfigure[]{\includegraphics[width=0.48\textwidth]{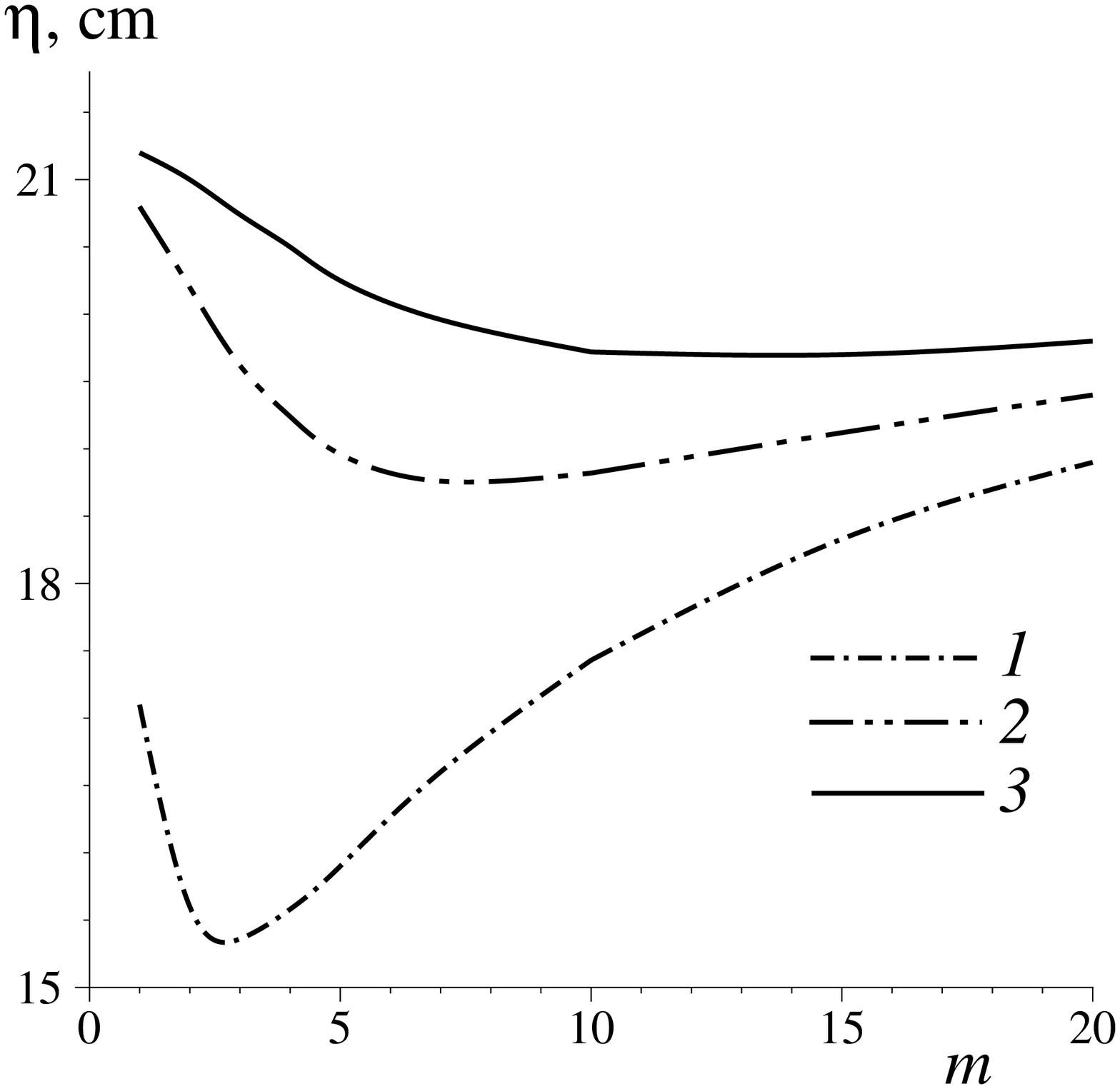}}
  \subfigure[]{\includegraphics[width=0.48\textwidth]{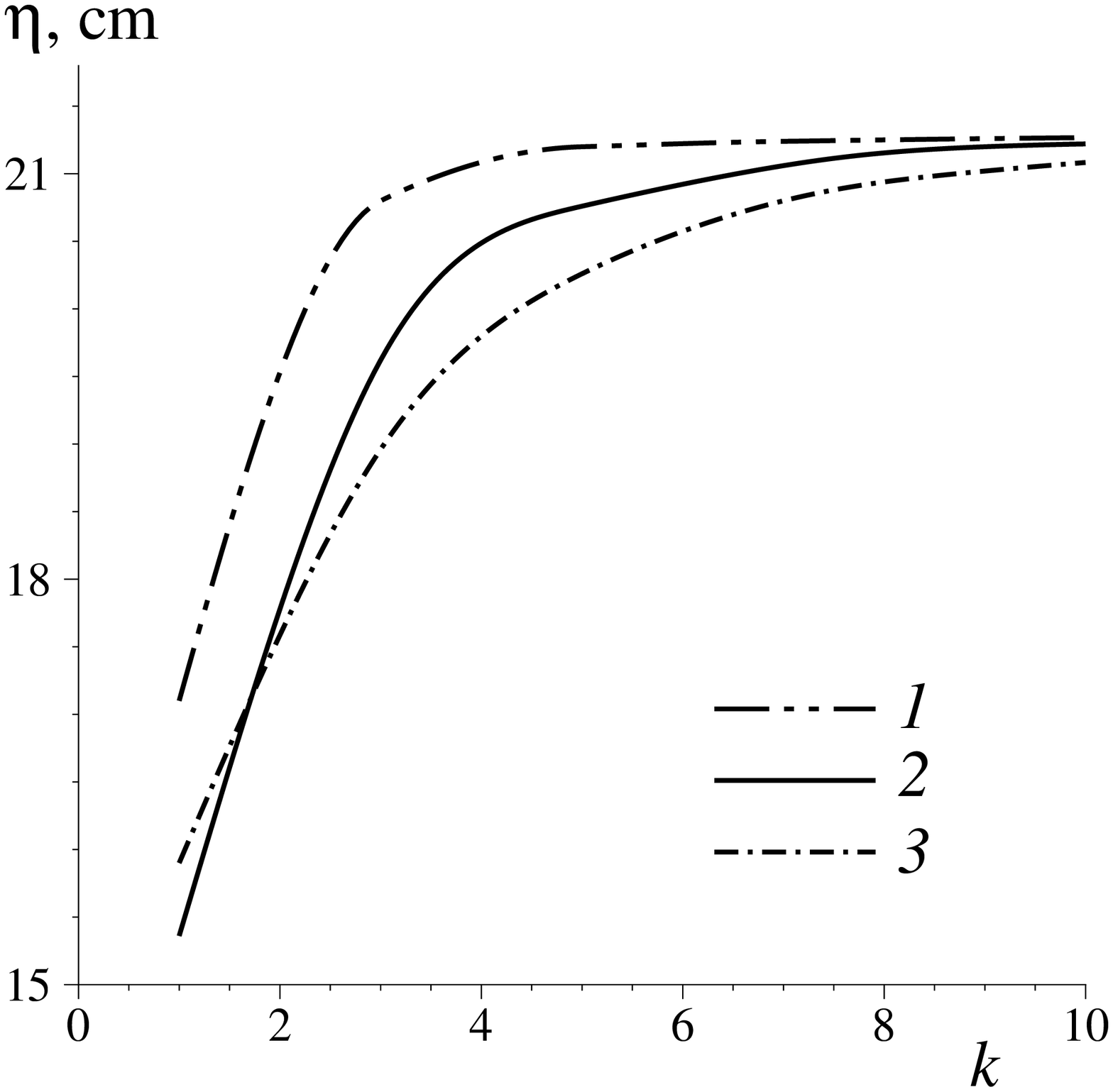}}
  \caption{\small\em Maximal run-up on the vertical wall (during the whole simulation time) as a function of the moving wall mass (a) for fixed springs rigidity $k\ =\ 1$ (1), $k\ =\ 3$ (2) and $k\ =\ 5$ (3). On the right panel we show the same quantity as a function of the springs rigidity (b) for fixed wall masses $m\ =\ 1$ (1), $m\ =\ 3$ (2) and $m\ =\ 5$ (3). This computation corresponds to the incident initial pulse \eqref{eq:initcond}.}
  \label{fig:12}
\end{figure}

%%% ------------------------------------------------------------------------ %%%

\subsubsection{Solitary wave}

Finally, we show some numerical results on the solitary wave interaction with a moving wall. This test case is a natural extension of the case considered earlier in Section~\ref{sec:fix}. The initial condition is provided by the same formulas \eqref{eq:ic1} -- \eqref{eq:ic3}. In order to avoid any interactions of reflected waves from the right wall (during the simulation time), we increase the computational domain length to $\ell\ =\ 40\ \m$ (from $\ell\ =\ 20\ \m$). The computations were performed for a solitary wave with moderate amplitude $\alphau\ =\ 20\ \cm\,$. We would like to study the influence of moving wall parameters on the maximal solitary wave run-up. In Figure~\ref{fig:13} we show the maximal run-up as a function of the wall mass $m$ (\textit{a}) and as a function of springs rigidity $k$ (\textit{b}) (the other parameter being fixed). With a dashed line we show the maximal run-up value on a fixed wall. The comparison of this result with Figure~\ref{fig:12} indicates that there are qualitative similarities in the behaviour of a single pulse with a solitary wave (\cf Figure~\ref{fig:13}). So, for solitary waves the run-up can be significantly reduced as well. All differences observed in Figures~\ref{fig:12} and \ref{fig:13} can be solely explained by the incident wave amplitude and its shape.

\begin{figure}
  \centering
  \subfigure[]{\includegraphics[width=0.48\textwidth]{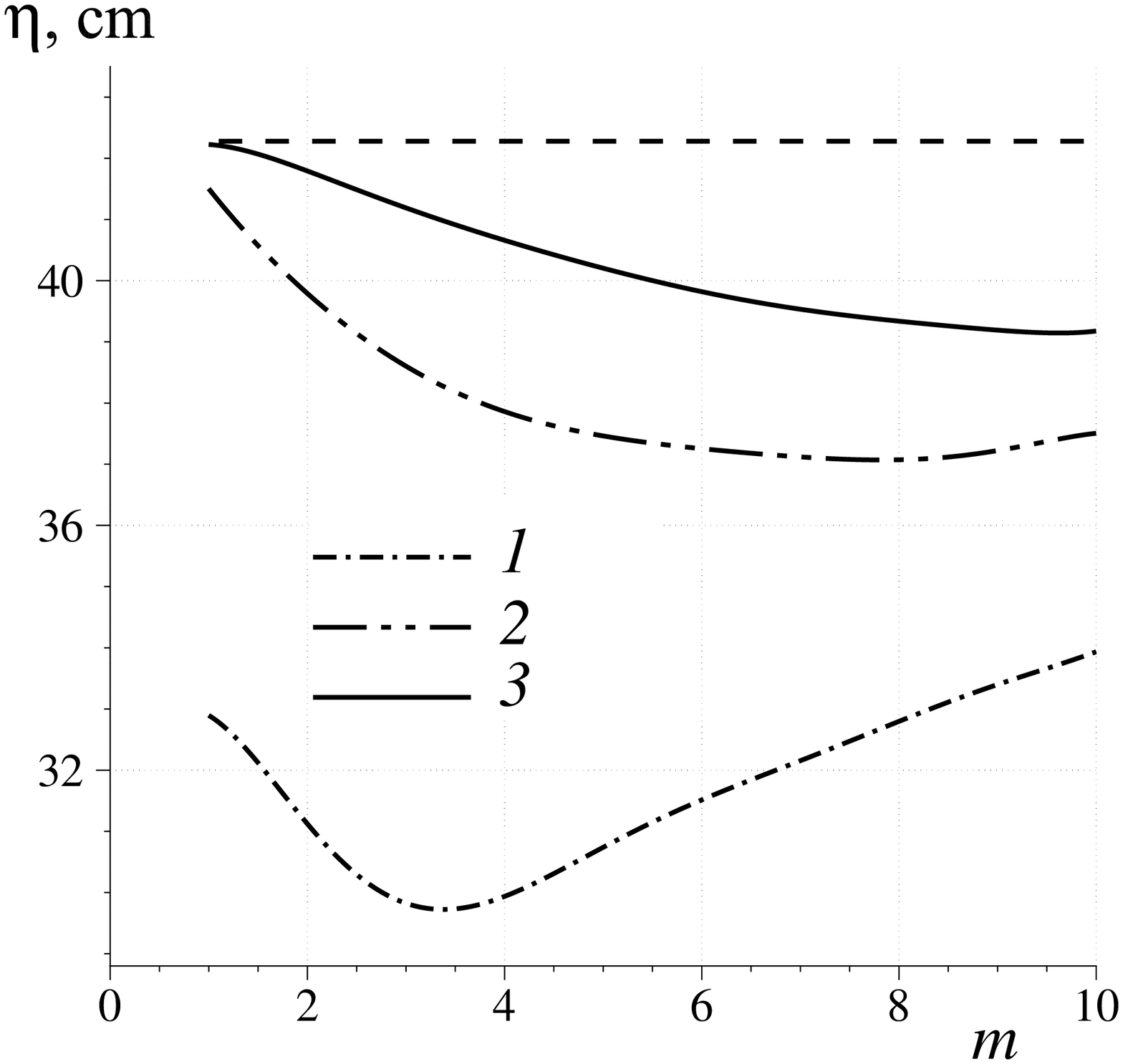}}
  \subfigure[]{\includegraphics[width=0.48\textwidth]{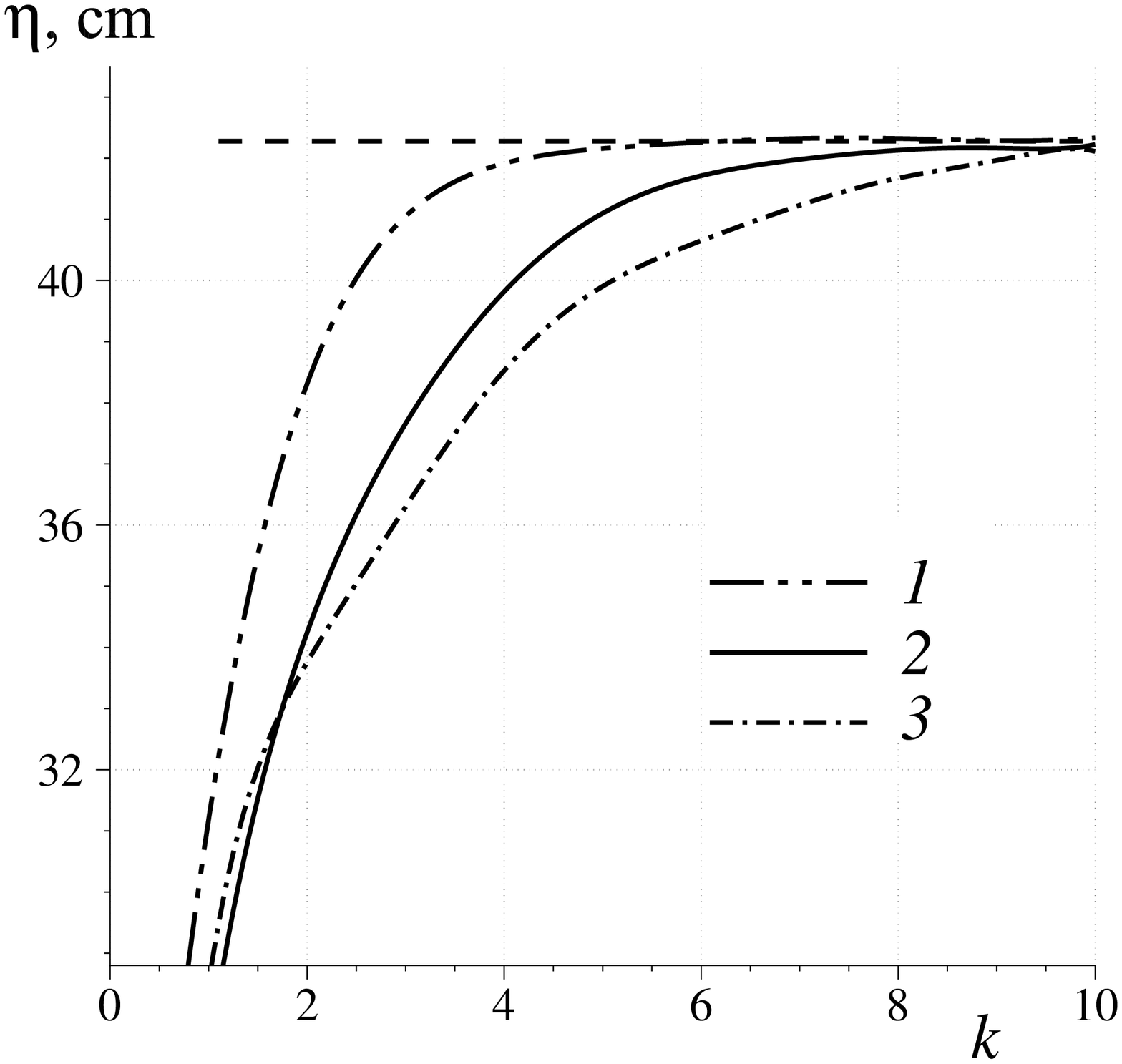}}
  \caption{\small\em Maximal run-up on the vertical wall (during the whole simulation time) as a function of the moving wall mass (a) for fixed springs rigidity $k\ =\ 1$ (1), $k\ =\ 3$ (2) and $k\ =\ 5$ (3). On the right panel we show the same quantity as a function of the springs rigidity (b) for fixed wall masses $m\ =\ 1$ (1), $m\ =\ 3$ (2) and $m\ =\ 5$ (3). This computation corresponds to the incident solitary wave. The dashed line on both panels shows the maximal run-up value on a fixed wall.}
  \label{fig:13}
\end{figure}

In Figure~\ref{fig:14}(\textit{a}) we show the wave run-up record under the action of a solitary wave on the moving wall as a function of time. This computation was performed for the wall mass $m\ =\ 5$ and springs rigidity is $k\ =\ 1\,$. In Figure~\ref{fig:14}(\textit{b}) we report the wall displacement during the wave/wall interaction process. In Figure~\ref{fig:14}(\textit{a}) we depict also the wave run-up record for a fixed wall as well. It can be clearly seen that the presence of a moving wall reduces significantly wave run-up height on it. We can also notice that for a movable wall the wave run-up happens twice in contrast to the fixed wall case. Indeed, the first (and usually maximal) run-up takes place when the wall is retracting under the incident wave action. Then, at some moment the springs accumulate enough elastic energy in order to push back the wall towards its equilibrium position. Right after this turning point the second run-up takes place. However, its value is usually lower than the first one.

\begin{figure}
  \centering
  \subfigure[]{\includegraphics[width=0.48\textwidth]{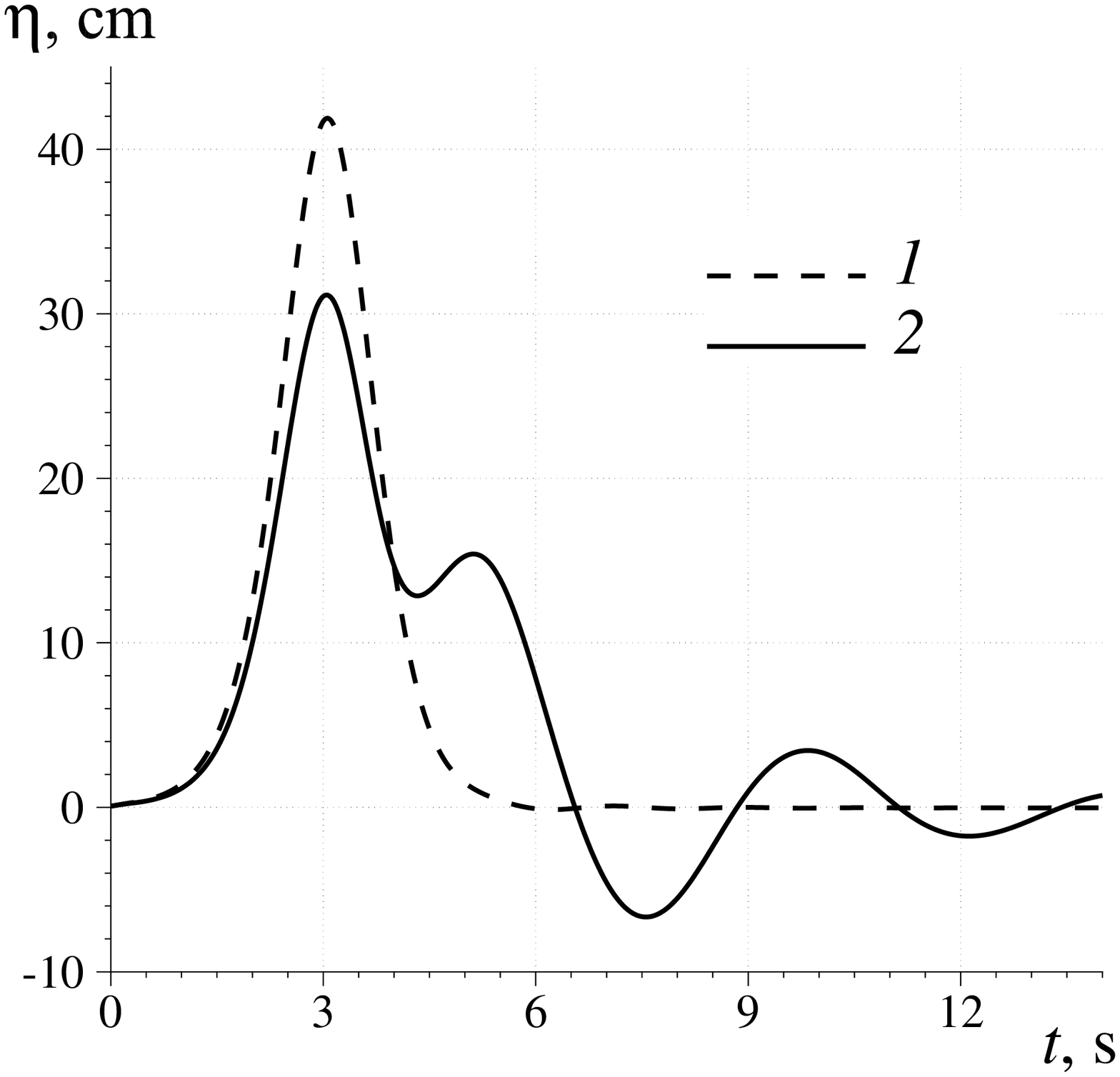}}
  \subfigure[]{\includegraphics[width=0.48\textwidth]{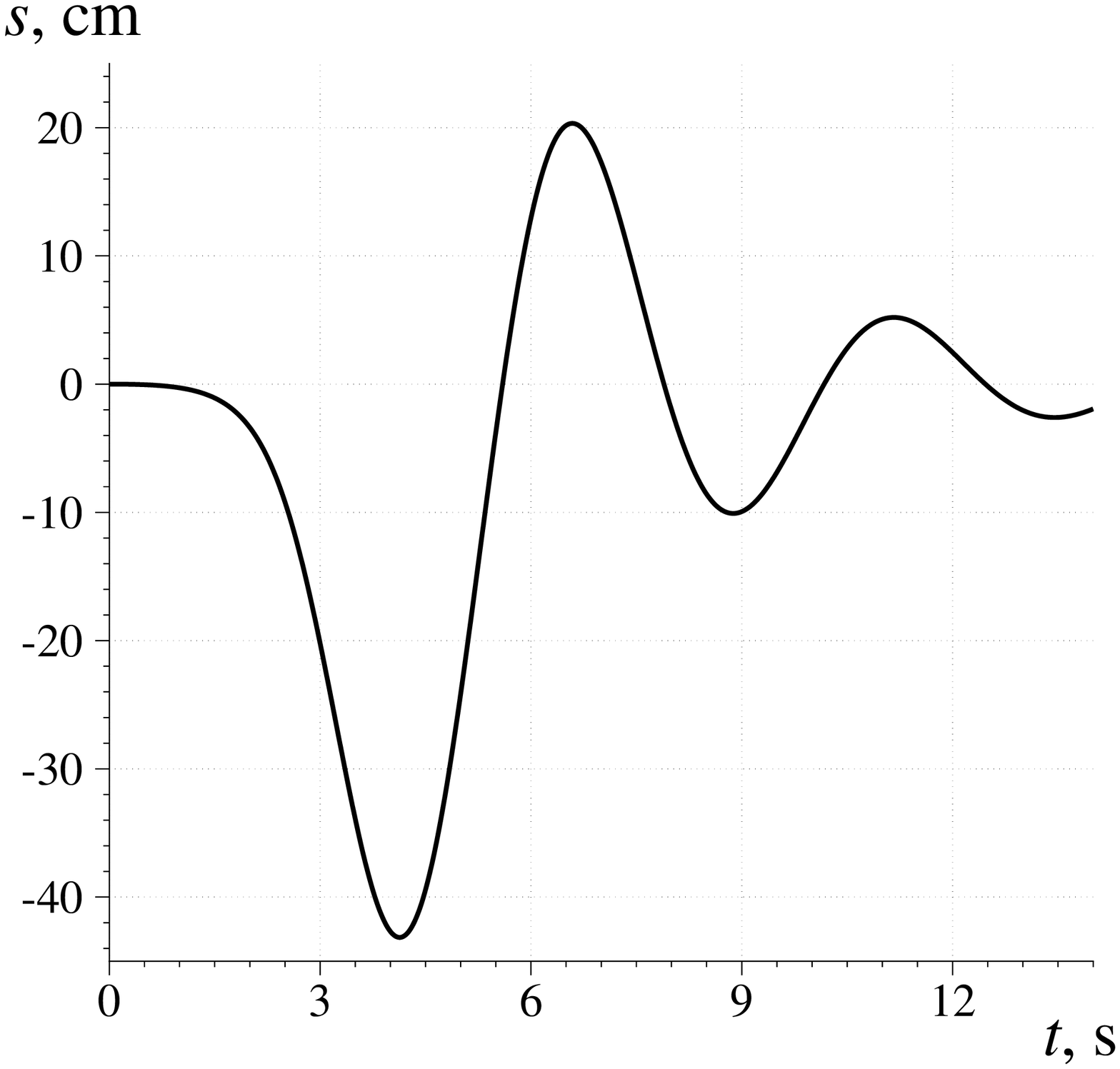}}
  \caption{\small\em Solitary wave run-up on the wall (a): free surface excursion on the fixed (1) and moving (2) walls. Wall position as a function of time (b). The wall mass $m\ =\ 5$ and springs rigidity is $k\ =\ 1\,$.}
  \label{fig:14}
\end{figure}

In Figure~\ref{fig:15} we show also the free surface evolution in space and in time. The left panel \ref{fig:15}(\textit{a}) shows the classical fixed wall case for the sake of comparison. In this case the wave field essentially consists of one incident and one reflected waves. Eventual inelastic collision effects are negligible for a solitary wave of amplitude $\alphau\ =\ 20\ \cm\,$. A much more interesting wave field is generated by the interaction with a moving wall. It is depicted in Figure~\ref{fig:15}(\textit{b}). The wall motion generates multiple reflected wave, which travel with lower speed (in agreement with their lower amplitude).

\begin{figure}
  \centering
  \subfigure[]{\includegraphics[width=0.48\textwidth]{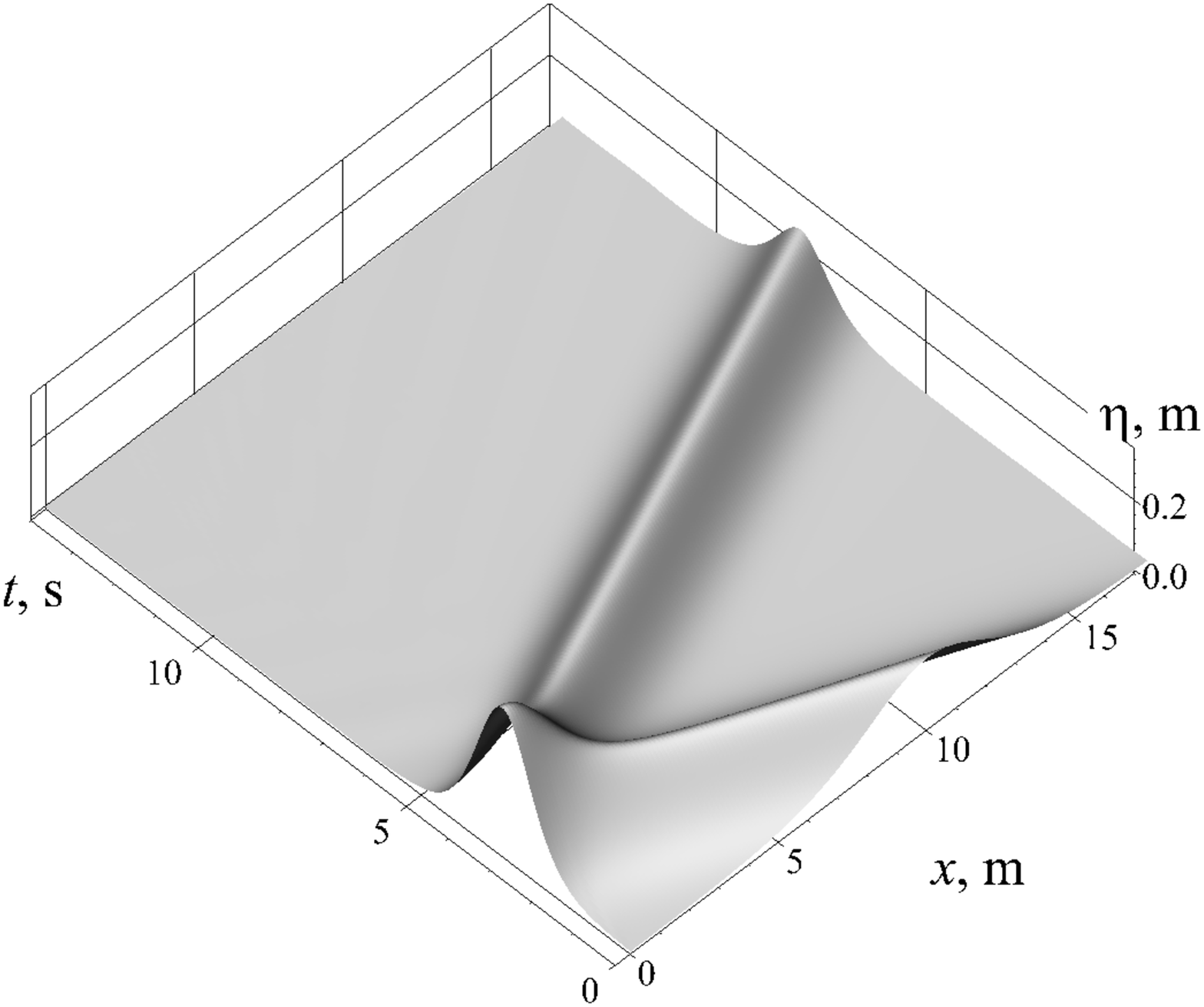}}
  \subfigure[]{\includegraphics[width=0.48\textwidth]{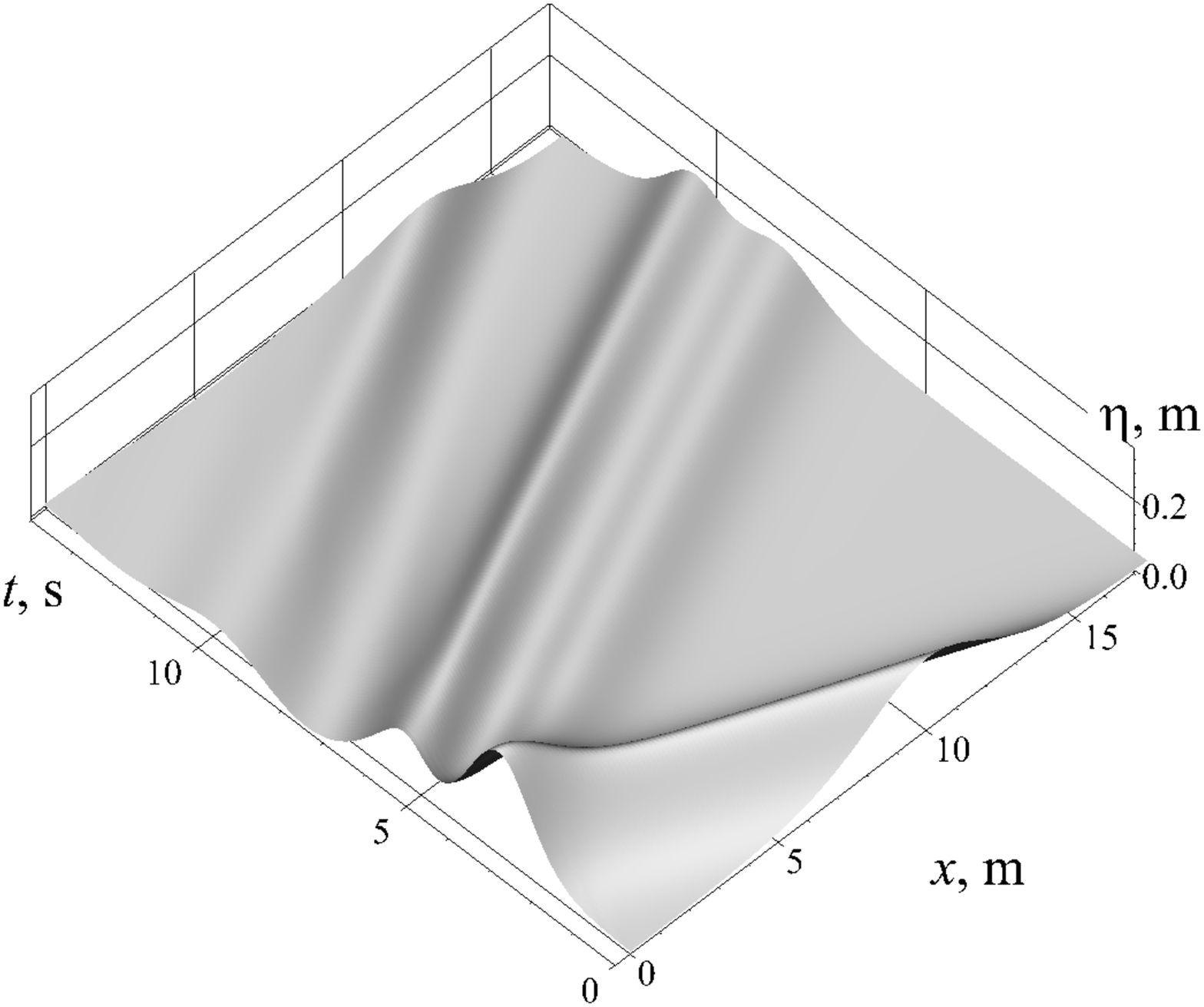}}
  \caption{\small\em Free surface dynamics represented as a surface in space-time $y\ =\ \eta\,(x,\,t)$ during a solitary wave run-up on a fixed (a) and moving (b) vertical wall. The solitary wave amplitude $\alpha\ =\ 20\ \cm$ and $x_{\,0}\ =\ 10\ \m\,$. The axis $O\,x$ shows only a part of the numerical wave tank. The whole length is $\ell\ =\ 40\ \m\,$.}
  \label{fig:15}
\end{figure}

%%% ------------------------------------------------------------------------ %%%

\section{Discussion}
\label{sec:concl}

The present study was devoted to the simulation of free surface waves in a two-dimensional tank of variable size. Namely, a vertical wall is attached to a system of springs and it can move under the action of incident water waves. Modelling and numerical simulation of this problem was discussed in this study. Below we outline the main conclusions and perspectives of the present study.

%%% ------------------------------------------------------------------------ %%%

\subsection{Conclusions}

In the present study we considered the problem of surface water wave modelling in domains with variable geometry. In general, problems on time-varying are known to be notoriously difficult \cite{Knobloch2015} (both theoretically and numerically). Here we considered a special case of a numerical wave tank with a moving wall. This wall can move according to the prescribed law. In this case we model the wave generation process by a piston-type wave maker \cite{Guizien2002}. In our formulation the wall can also freely move under the wave action. It is realized by connecting the wall to a system of elastic springs. In this way, the wall mass $m$ and springs rigidity $k$ are taken into account in our model by writing an additional nonlinear second order differential equation for the wall position. Surface waves are described mathematically using the full water wave problem, \ie the irrotational incompressible perfect fluid flow described by \textsc{Euler} equations \cite{Stoker1957}. No approximation was made and, thus, the presented results are fully nonlinear and fully dispersive. Moreover, we proposed a numerical algorithm on adaptive moving grids. The adaptation was achieved using the so-called equidistribution principle \cite{Khakimzyanov2015b, Khakimzyanov2015a}. The discrete problem was rigorously studied and the proposed discrete operator was shown to be self-adjoint on adaptive grids as well as the continuous problem. Thus, the numerical method preserves qualitatively some properties of the continuous operator, which is referred to as the `structure preserving' property \cite{Hairer2002} in modern numerical analysis.

The proposed algorithm was first validated on the well-known case of a fixed rigid vertical wall. There is enough analytical, numerical \cite{Fenton1982, Cooker1997} and experimental \cite{Davletshin1984} data available to validate the solver. Then, this numerical tool was applied to generate periodic waves \emph{in silico} by a moving piston-type wave maker. Finally, the wall motion and incident wave run-up on this moving wall was studied using our numerical means. Our results compared quite well with some published data \cite{He2009}. The influence and effect of various parameters on the wave run-up was investigated.

%%% ------------------------------------------------------------------------ %%%

\subsection{Perspectives}

In the present article the geometry, mathematical formulation and the numerical method have been presented for the two-dimensional configuration. In future works we plan to extend this methodology for 3D flows. A priori, this generalization is going to be tedious, but straightforward.

As another possible direction for future research, we would like to underline that the moving (solid non-deformable) part in our study was a wall. This object is very important in coastal engineering. However, in future works we would like to incorporate moving objects of more general geometrical shape. As a minor point of improvement, one can notice that in all simulations presented earlier the bottom was taken to be flat. It was done on purpose to isolate the effect of the moving wall on incident waves. However, it is not a limitation of the method. The formulation presented earlier works for general bottoms and the simultaneous effect of the moving boundary and local bathymetric features on the wave run-up on this wall has to be investigated.

Finally, the formulation we used was fully conservative. In other words, no dissipation mechanism was included in our study except for the motion of the wall under the action of water waves (however, the whole wave/wall system was conservative). Of course, this model is an idealization of the reality. In real flows some dissipative effects are also present due to the molecular viscosity, friction, turbulence or other mechanisms. For this purpose a suitable visco-potential formulation has been proposed \cite{Dutykh2008a, Dutykh2007d, Dutykh2007}, which occupies the intermediate level between the potential flows and \textsc{Navier}--\textsc{Stokes} equations.

%%% ------------------------------------------------------------------------ %%%

\subsection{Side effects}

Our work has also at least one nice side effect. Namely, if we simplify the problem considered in the present study by fixing the wall position, \eg $s(t)\ \equiv\ 0\,$, we obtain a robust numerical wave tank for 2D surface waves similar to \cite{Grilli1989}. However, the technique presented here is fully \emph{adaptive}. By the way, we are not aware of any recent study using similar adaptive redistribution methods for the full \textsc{Euler} equations with \emph{free surface}. So, this aspect might be new.

%%% ------------------------------------------------------------------------ %%%

\subsection*{Acknowledgments}
\addcontentsline{toc}{subsection}{Acknowledgments}

This research was supported by RSCF project No 14-17-00219. D.~\textsc{Dutykh} acknowledges the support of the CNRS under the PEPS InPhyNiTi project FARA and project \No EDC26179 --- ``\textit{Wave interaction with an obstacle}'' as well as the hospitality of the Institute of Computational Technologies SB RAS during his visit in October 2015. D.~\textsc{Dutykh} would like to acknowledge the help of his colleague Tom \textsc{Hirschowitz} who timely provided tea leaves in cases of emergency.

%%% ------------------------------------------------------------------------ %%%

\appendix
\section{Transformation of coordinates}
\label{app:comp}

In this Appendix we explain some calculation details of coordinate transformations which allow to pass from the system \eqref{eq:laplace} -- \eqref{eq:pressure} posed on a variable domain $\Omega\,(t)$ to the system \eqref{eq:laplaceQ} -- \eqref{eq:pressureQ} on the fixed domain $\Q^{\,0}\,$. Consider a smooth bijective map \eqref{eq:map} (see also Figure~\ref{fig:map} for an illustration):
\begin{equation}\label{eq:trans}
  \left\{
  \begin{array}{rl}
    x\ &=\ x\,(q^{\,1},\, q^{\,2},\, t)\,, \\
    y\ &=\ y\,(q^{\,1},\, q^{\,2},\, t)\,.
  \end{array}
  \right.
  \qquad \Longleftrightarrow \qquad
  \left\{
  \begin{array}{rl}
    q^{\,1}\ &=\ q^{\,1}\,(x,\, y,\, t)\,, \\
    q^{\,2}\ &=\ q^{\,2}\,(x,\, y,\, t)\,.
  \end{array}
  \right.
\end{equation}
We consider that the time variable is the same in both coordinate systems. Two maps written above are mutually inverse. So, let us write this condition
\begin{align}
  x\ &\equiv\ x\,\bigl[q^{\,1}\,(x,y,t),\, q^{\,2}\,(x,y,t),\, t\bigr]\,, \label{eq:1} \\
  y\ &\equiv\ y\,\bigl[q^{\,1}\,(x,y,t),\, q^{\,2}\,(x,y,t),\, t\bigr]\,. \label{eq:2}
\end{align}
Then, we differentiate \eqref{eq:1} and \eqref{eq:2} with respect to $x\,$. We thus obtain two relations
\begin{equation*}
  \left\{\begin{array}{rl}
    x_{\,q^{\,1}}\,\pdd{q^{\,1}}{x}\ +\ x_{\,q^{\,2}}\,\pdd{q^{\,2}}{x}\ &=\ 1\,, \\ [0.75em]
    y_{\,q^{\,1}}\,\pdd{q^{\,1}}{x}\ +\ y_{\,q^{\,2}}\,\pdd{q^{\,2}}{x}\ &=\ 0\,.
  \end{array}\right.
\end{equation*}
The last two relations can be considered as a system of two linear equations with respect to $\pdd{q^{\,1}}{x}$ and $\pdd{q^{\,2}}{x}$ as unknowns. By trivially applying the \textsc{Cramer} rule we obtain
\begin{equation}\label{eq:rel1}
  \pd{q^{\,1}}{x}\ =\ \frac{y_{q^{\,2}}}{\J}\,, \qquad
  \pd{q^{\,2}}{x}\ =\ -\frac{y_{q^{\,1}}}{\J}\,,
\end{equation}
where $\J\ \eqdef\ x_{q^{\,1}}\,y_{q^{\,2}}\ -\ x_{q^{\,2}}\,y_{q^{\,1}}$ is the \textsc{Jacobian} defined in \eqref{eq:jac}. The \textsc{Jacobian} can be also seen as the determinant of a matrix:
\begin{equation*}
  \J\ =\ \det\begin{pmatrix}
    1 & 0 & 0 \\
    x_{\,t} & x_{\,q^{\,1}} & x_{\,q^{\,2}} \\
    y_{\,t} & y_{\,q^{\,1}} & y_{\,q^{\,2}}
  \end{pmatrix}\ \equiv\ \det\begin{pmatrix}
    x_{\,q^{\,1}} & x_{\,q^{\,2}} \\
    y_{\,q^{\,1}} & y_{\,q^{\,2}}
  \end{pmatrix}
  \ \equiv\ x_{q^{\,1}}\,y_{q^{\,2}}\ -\ x_{q^{\,2}}\,y_{q^{\,1}}\,.
\end{equation*}

Similarly, by differentiating \eqref{eq:1} and \eqref{eq:2} with respect to $y$ and using \textsc{Cramer}'s rule, one can show that
\begin{equation}\label{eq:rel2}
  \pd{q^{\,1}}{y}\ =\ -\frac{x_{q^{\,2}}}{\J}\,, \qquad
  \pd{q^{\,2}}{y}\ =\ \frac{x_{q^{\,1}}}{\J}\,.
\end{equation}
Now, by differentiating \eqref{eq:1} and \eqref{eq:2} with respect to $t$ we obtain the following two equations
\begin{equation*}
  \left\{
  \begin{array}{rl}
    x_{\,q^{\,1}}\,\pdd{q^{\,1}}{t}\ +\ x_{\,q^{\,2}}\,\pdd{q^{\,2}}{t}\ &=\ -x_{\,t}\,, \\ [0.75em]
    y_{\,q^{\,1}}\,\pdd{q^{\,1}}{t}\ +\ y_{q^{\,2}}\,\pdd{q^{\,2}}{t}\ &=\ -\,y_{\,t}\,.
  \end{array}
  \right.
\end{equation*}
Solving the last linear system with respect to $\pdd{q^{\,1}}{t}$ and $\pdd{q^{\,2}}{t}$ yields
\begin{equation}\label{eq:qt}
  \pd{q^{\,1}}{t}\ =\ \frac{y_{\,t}\cdot x_{\,q^{\,2}}\ -\ x_{\,t}\cdot y_{\,q^{\,2}}}{\J}\,, \qquad
  \pd{q^{\,2}}{t}\ =\ \frac{x_{\,t}\cdot y_{\,q^{\,1}}\ -\ y_{\,t}\cdot x_{\,q^{\,1}}}{\J}\,.
\end{equation}
This concludes the computation of partial derivatives of the inverse mapping $\q\ =\ \q\,(\x,\,t)\,$.

%%% ------------------------------------------------------------------------ %%%

\section{Transformation of the Laplace operator}
\label{app:lapl}

Now we can rewrite the \textsc{Laplace} operator \eqref{eq:laplace} in curvilinear coordinates $\bigl(\q,\,t\bigr)\,$. The idea consists in viewing the velocity potential as
\begin{equation*}
  \phi\,(\x,\,t)\ \equiv\ \phi\,\bigl(\q\,(\x,\,t),\, t\bigr)\,.
\end{equation*}
The first derivatives can be easily computed:
\begin{equation*}
  \pd{\phi}{x}\ =\ \pd{\phi}{q^{\,1}}\cdot\pd{q^{\,1}}{x}\ +\ \pd{\phi}{q^{\,2}}\cdot\pd{q^{\,2}}{x}\,,
\end{equation*}
and using the just derived relation \eqref{eq:rel1} we can express everything in terms of derivatives solely with respect to $q_1$, $q_2$:
\begin{equation*}
  \pd{\phi}{x}\,(\q,\,t)\ =\ \frac{1}{\J}\;\Bigl[\,y_{\,q^{\,2}}\cdot\pd{\phi}{q^{\,1}}\ -\ y_{\,q^{\,1}}\cdot\pd{\phi}{q^{\,2}}\,\Bigr]\,.
\end{equation*}
Similarly, we can compute the other partial derivative of the velocity potential $\phi$ with respect to $y$ in new coordinates:
\begin{equation*}
  \pd{\phi}{y}\,(\q,\,t)\ =\ \frac{1}{\J}\;\Bigl[\,-\,x_{\,q^{\,2}}\cdot\pd{\phi}{q_1}\ +\ x_{\,q^{\,1}}\cdot\pd{\phi}{q^{\,2}}\,\Bigr]\,.
\end{equation*}
Now we can apply recursively the just obtained results for $\pd{\phi}{x}\,(\q,\,t)$ and $\pd{\phi}{y}\,(\q,\,t)$ to compute the second derivatives as well:
\begin{multline*}
  \pd{^{\,2}\,\phi}{x^{\,2}}\ =\ \pd{}{x}\Bigl(\pd{\phi}{x}\Bigr)\ =\ \pd{}{q^{\,1}}\;\Bigl(\pd{\phi}{x}\Bigr)\cdot\pd{q^{\,1}}{x}\ +\ \pd{}{q^{\,2}}\;\Bigl(\pd{\phi}{x}\Bigr)\cdot\pd{q^{\,2}}{x}\ = \\
  \pd{}{q^{\,1}}\;\Bigl(\pd{\phi}{q^{\,1}}\cdot\pd{q^{\,1}}{x}\ +\ \pd{\phi}{q^{\,2}}\cdot\pd{q^{\,2}}{x}\Bigr)\cdot\pd{q^{\,1}}{x}\ +\ \pd{}{q^{\,2}}\;\Bigl(\pd{\phi}{q^{\,1}}\cdot\pd{q^{\,1}}{x}\ +\ \pd{\phi}{q^{\,2}}\cdot\pd{q^{\,2}}{x}\Bigr)\cdot\pd{q^{\,2}}{x}\,.
\end{multline*}
By taking into account the relations derived earlier, we obtain that
\begin{equation*}
  \pd{^{\,2}\,\phi}{x^{\,2}}\,(\q,\,t)\ =\ \frac{1}{\J}\;\biggl\{\,\pd{}{q^{\,1}}\;\Bigl[\,\frac{\phi_{\,q^{\,1}}\,y_{\,q^{\,2}}^{\,2}\ -\ \phi_{\,q^{\,2}}\,y_{\,q^{\,1}}\,y_{\,q^{\,2}}}{\J}\,\Bigr]\ +\ \pd{}{q^{\,2}}\;\Bigl[\,\frac{-\,\phi_{\,q^{\,1}}\,y_{\,q^{\,1}}\,y_{\,q^{\,2}}\ +\ \phi_{\,q^{\,2}}\,y_{\,q^{\,1}}^{\,2}}{\J}\,\Bigr]\,\biggr\}\,.
\end{equation*}
Using similar methods we can obtain the following expression for $\pd{^{\,2}\,\phi}{y^{\,2}}\,(\q,\,t)\,$:
\begin{equation*}
  \pd{^{\,2}\,\phi}{y^{\,2}}\,(\q,\,t)\ =\ \frac{1}{\J}\;\biggl\{\,\pd{}{q^{\,1}}\;\Bigl[\,\frac{\phi_{\,q^{\,1}}\,x_{\,q^{\,2}}^{\,2}\ -\ \phi_{\,q^{\,2}}\,x_{\,q^{\,1}}\,x_{\,q^{\,2}}}{\J}\,\Bigr]\ +\ \pd{}{q^{\,2}}\;\Bigl[\,\frac{-\,\phi_{\,q^{\,1}}\,x_{\,q^{\,1}}\,x_{\,q^{\,2}}\ +\ \phi_{\,q^{\,2}}\,x_{\,q^{\,1}}^{\,2}}{\J}\,\Bigr]\,\biggr\}\,.
\end{equation*}
The \textsc{Laplace} operator can be particularly compactly written in new variables if we introduce the metric tensor components:
\begin{equation*}
  g_{\,1\,1}\ \eqdef\ x_{\,q^{\,1}}^{\,2}\ +\ y_{\,q^{\,1}}^{\,2}\,, \qquad
  g_{\,1\,2}\ \equiv\ g_{\,2\,1}\ \eqdef\ x_{\,q^{\,1}}\cdot x_{\,q^{\,2}}\ +\ y_{\,q^{\,1}}\cdot y_{\,q^{\,2}}\,, \qquad
  g_{\,2\,2}\ \eqdef\ x_{\,q^{\,2}}^{\,2}\ +\ y_{\,q^{\,2}}^{\,2}\,,
\end{equation*}
along with the \emph{implied} anisotropic diffusion coefficients:
\begin{equation*}
  \K_{\,1\,1}\ \eqdef\ \frac{g_{\,2\,2}}{\J}\,, \qquad
  \K_{\,1\,2}\ \equiv\ \K_{\,2\,1} \eqdef\ -\,\frac{g_{\,1\,2}}{\J}\,, \qquad
  \K_{\,2\,2}\ \eqdef\ \frac{g_{\,1\,1}}{\J}\,.
\end{equation*}
Using this notation we can finally write the \textsc{Laplace} operator in transformed coordinates:
\begin{equation*}
  \pd{^{\,2}\,\phi}{x^{\,2}}\ +\ \pd{^{\,2}\,\phi}{y^{\,2}}\ \equiv\ \frac{1}{\J}\;\biggl[\,\pd{}{q^{\,1}}\;\Bigl(\K_{\,1\,1}\,\phi_{\,q^{\,1}}\ +\ \K_{\,1\,2}\,\phi_{\,q^{\,2}}\Bigr)\ +\ \pd{}{q^{\,2}}\;\Bigl(\K_{\,2\,1}\,\phi_{\,q^{\,1}}\ +\ \K_{\,2\,2}\,\phi_{\,q^{\,2}}\Bigr)\,\biggr]\,.
\end{equation*}
As a result, the \textsc{Laplace} equation reads:
\begin{equation*}
  \pd{}{q^{\,\alpha}}\;\biggl[\,\K_{\,\alpha\,\beta}\,\pd{\phi}{q^{\,\beta}}\,\biggr]\ =\ 0\,, \qquad
  \alpha,\,\beta\ =\ 1,\,2\,,
\end{equation*}
where we employed an implicit summation over repeated indices\footnote{In the literature this rule is referred to as the \textsc{Einstein} summation convention.}. This result coincides exactly with equation \eqref{eq:laplaceQ} presented earlier in the manuscript.

%%% ------------------------------------------------------------------------ %%%

\section{Boundary conditions transformation}
\label{app:bc}

Finally, let us derive some relations which turn out to be useful in the transformation of boundary conditions. In particular, in this Appendix we focus on the kinematic condition \eqref{eq:kinem}. First of all, during the derivation of boundary conditions one needs time derivatives of the inverse mapping in \eqref{eq:trans}:
\begin{equation*}
  \pd{q^{\,1}}{t}\ =\ \frac{1}{\J}\;\biggl[\,x_{\,q^{\,2}}\;\pd{y}{t}\ -\ y_{\,q^{\,2}}\;\pd{x}{t}\,\biggr]\,, \qquad
  \pd{q^{\,2}}{t}\ =\ \frac{1}{\J}\;\biggl[\,y_{\,q^{\,1}}\;\pd{x}{t}\ -\ x_{\,q^{\,1}}\;\pd{y}{t}\,\biggr]\,.
\end{equation*}
These formulas were given also earlier in \eqref{eq:qt} with more details on their derivation. Let us compute \textsc{Cartesian} components of the velocity vector:
\begin{align*}
  u\,(\q,\,t)\ &=\ \pd{\phi}{x}\,(\q,\,t)\ =\ \phi_{\,q^{\,1}}\;\pd{q^{\,1}}{x}\ +\ \phi_{\,q^{\,2}}\;\pd{q^{\,2}}{x}\ \stackrel{\eqref{eq:rel1}}{=}\ \frac{\phi_{\,q^{\,1}}\,y_{\,q^{\,2}}\ -\ \phi_{\,q^{\,2}}\,y_{\,q^{\,1}}}{\J}\,, \\
  v\,(\q,\,t)\ &=\ \pd{\phi}{y}\,(\q,\,t)\ =\ \phi_{\,q^{\,1}}\;\pd{q^{\,1}}{y}\ +\ \phi_{\,q^{\,2}}\;\pd{q^{\,2}}{y}\ \stackrel{\eqref{eq:rel2}}{=}\ \frac{-\,\phi_{\,q^{\,1}}\,x_{\,q^{\,2}}\ +\ \phi_{\,q^{\,2}}\,x_{\,q^{\,1}}}{\J}\,.
\end{align*}
The free surface elevation $\eta$ on the transformed domain becomes a function of the variable $q^{\,1}$ and time $t\,$:
\begin{equation*}
  \eta\ =\ \tilde{\eta}\,(q^{\,1},\,t)\ \equiv\ \eta\,\bigl(x\,(q^{\,1},\,q^{\,2}\,\equiv\,1,\,t),\,t\bigr)\,.
\end{equation*}
Now, we can compute partial derivatives of the function $\eta\,$, which appear in the boundary conditions:
\begin{align*}
  \eta_{\,x}\ &=\ \tilde{\eta}_{\,q^{\,1}}\;\pd{q^{\,1}}{x}\ +\ \underbrace{\eta_{\,q^{\,2}}}_{\displaystyle{\equiv\ 0}}\;\pd{q^{\,2}}{x}\,, \\
  \eta_{\,t}\ &=\ \tilde{\eta}_{\,t}\ +\ \tilde{\eta}_{\,q^{\,1}}\;\pd{q^{\,1}}{t}\,.
\end{align*}
By substituting all these elements into equation \eqref{eq:kinem}, we obtain the following condition:
\begin{equation}\label{eq:last}
  \tilde{\eta}_{\,t}\ +\ v^{\,1}\,\tilde{\eta}_{\,q^{\,1}}\ -\ v\ =\ 0\,,
\end{equation}
where $v^{\,1}$ is a contravariant component of the velocity, which can be computed according to this formula:
\begin{equation*}
  v^{\,1}\ =\ \pd{q^{\,1}}{t}\ +\ u\;\pd{q^{\,1}}{x}\ +\ v\;\pd{q^{\,1}}{y}\,.
\end{equation*}
The last equation \eqref{eq:last} completes the derivation of the free surface kinematic boundary conditions. Other boundary conditions can be derived in a similar manner.

\begin{remark}
Above, in the main text of our manuscript we do not use the tilde notation $\tilde{\eta}$ in order to avoid overloading of the text with multiple notations. It is assumed that the reader can make the difference between $\eta\,(x,\,t)$ and $\tilde{\eta}\,(q^{\,1},\,t)$ depending on the context.
\end{remark}

%%% ------------------------------------------------------------------------ %%%

\section{Piston motion modelling}
\label{app:pist}

Let us introduce a \textsc{Cartesian} coordinate system with the horizontal axis $O\,\chi$ looking rightwards such that the left moving wall is located at the point $\chi\ =\ 0$ in the absence of any kind of loading (only the hydrostatic \emph{atmospheric} pressure that we neglect here). In this case the wave tank is dry. Now we fill it with still water of the uniform depth $h_{\,0}\ >\ 0\,$. The hydrostatic force $\F\,(0)\ >\ 0$ (defined in equations \eqref{eq:f0}) appears on the left wall and the springs accumulate some elastic deformation. It results in the wall displacement of magnitude $\Uptheta\,(0)\ >\ 0$ in the negative direction of the axis $O\,\chi\,$. The displacement $\Uptheta\,(0)$ can be found by applying the \textsc{Hook} law \cite{Landau1986}:
\begin{equation*}
  k\,\Uptheta\,(0)\ =\ \F\,(0)\,,
\end{equation*}
where $k$ is the springs rigidity. For $t\ >\ 0$ we have another force $\F\,(t)$ acting on the left wall due to the wave motion. This force acts against the wall. Thus, in the second law of \textsc{Newton} we take it with the opposite sign \cite{Newton1687, Landau1976}:
\begin{equation}\label{eq:chi}
  m\,\ddot{\chi}\ +\ k\,\chi\ =\ -\,\F\,(t)\,,
\end{equation}
where $m$ is the wall mass. The initial conditions for this ODE are:
\begin{equation*}
  \chi\,(0)\ =\ -\,\Uptheta\,(0)\,, \qquad \dot{\chi}\,(0)\ =\ 0\,.
\end{equation*}
Now, we make a change of coordinates $0\,\chi\ \rightsquigarrow\ O\,x$ so that the wall is located initially at the point $x\ =\ 0\,$, \ie
\begin{equation*}
  x\ =\ \chi\ +\ \Uptheta\,(0)\,.
\end{equation*}
Then, the initial conditions become by construction:
\begin{equation*}
  x\,(0)\ =\ 0\,, \qquad \dot{x}\,(0)\ =\ 0\,.
\end{equation*}
Equation of motion \eqref{eq:chi} transforms to:
\begin{equation*}
  m\,\ddot{x}\ +\ k\,x\ -\ \underbrace{k\,\Uptheta\,(0)}_{\displaystyle{\equiv\ \F\,(0)}}\ =\ -\,\F\,(t)\,.
\end{equation*}
The last equation can be rewritten as
\begin{equation*}
  m\,\ddot{x}\ +\ k\,x\ =\ -\,\bigl[\,\F\,(t)\ -\ \F\,(0)\,\bigr]\,.
\end{equation*}
Finally, by denoting the wall displacement by $s\,(t)\ \eqdef\ x\,(t)$ we recover equation \eqref{eq:piston}.

%%% ------------------------------------------------------------------------ %%%

%%% Bibliography
\bigskip\bigskip
\addcontentsline{toc}{section}{References}
\bibliographystyle{abbrv}

\bigskip\bigskip

\end{document}